\newtheorem{theorem}{Theorem}
\newtheorem{lemma}[theorem]{Lemma}
\newtheorem{corollary}[theorem]{Corollary}
\newtheorem{example}[theorem]{Example}
\def\ci{\!\perp\!}
\def\nci{\!\not\perp\!}
\def\ra{\rightarrow}
\def\la{\leftarrow}
\def\aa{\leftrightarrow}
\def\ao{\leftarrow\!\!\!\!\!\multimap}
\def\oa{\mathrel{\reflectbox{\ensuremath{\ao}}}}
\def\oo{\mathrel{\reflectbox{\ensuremath{\multimap}}}\!\!\!\!\!\multimap}
\newcommand{\comments}[1]{}
\tikzset{tt/.style={decoration={
  markings,
  mark=at position .485 with {\arrow{>}},
  mark=at position .515 with {\arrow{<}}},postaction={decorate}}}
\begin{document}

\title[]{Factorization of the Partial Covariance in Singly-Connected Path Diagrams}

\author[]{Jose M. Pe\~{n}a\\
IDA, Link\"oping University, Sweden\\
jose.m.pena@liu.se}

\date{\currenttime, \ddmmyydate{\today}, \currfilename}

\maketitle

\begin{abstract}
We extend path analysis by showing that, for a singly-connected path diagram, the partial covariance of two random variables factorizes over the nodes and edges in the path between the variables. This result allows us to determine the contribution of each node and edge to the partial covariance. It also allows us to show that Simpson's paradox cannot occur in singly-connected path diagrams.
\end{abstract}

\section{Introduction}

To ease interpretation, linear structural equation models are typically represented as path diagrams: Nodes represent random variables, directed edges represent direct causal relationships, and bidirected edges represent confounding, i.e. correlation between error terms. Moreover, each directed edge is annotated with the corresponding coefficient in the linear structural equation model, a.k.a. path coefficient. Likewise, each bidirected edge is annotated with the corresponding error correlation. A path diagram also brings in computational benefits. For instance, the covariance $\sigma_{XY}$ of two random variables $X$ and $Y$ can be determined from the path diagram. Specifically, $\sigma_{XY}$ can be expressed as the sum over the paths from $X$ to $Y$ of the product of path coefficients and error covariances of the edges in the path \citep{Wright1921,Pearl2009}. Hence, the covariance factorizes over the edges and nodes in the paths. In this work, we develop a similar factorization for the partial covariance $\sigma_{XY \cdot Z}$ in singly-connected path diagrams, i.e. the underlying undirected graph is a tree and, thus, no undirected cycle exists. While path analysis in a singly-connected path diagram determines the contribution of each node and edge to the covariance, our results determine the contribution of each node and edge to the partial covariance. Moreover, we use our results to show that Simpson's paradox cannot occur in singly-connected path diagrams. For path diagrams, Simpson's paradox can be described as the reversal of the sign of the regression coefficient of a random variable $Y$ on a second variable $X$ upon conditioning on a set of variables $Z$ \citep{Pearl2009,Pearl2014}.

Some previous works have certainly studied measures of association for singly-connected path diagrams, or for Gaussian random vectors in general. However, none of these works develop a factorization of the measure of association, as we do in this work. For singly-connected path diagrams, \citet{ChaudhuriandRichardson2003} and \citet{Chaudhuri2005} identify sufficient conditional independencies for ordering some squared partial correlation coefficients. \citet{Chaudhuri2014} extends these results to general Gaussian random vectors. \citet{ChaudhuriandTan2010} report similar general results for absolute values of partial regression coefficients. Finally, \citet{Ong2014} proves similar results for (signed) partial covariances, correlation coefficients and regression coefficients for singly-connected path diagrams and general Gaussian random vectors. In Section \ref{sec:Simpson}, we discuss further the work by Ong. Finally, it should be mentioned that \citet{Pearl2014} identifies three singly-connected path diagrams that cannot lead to Simpson's paradox. Our results are stronger, as we show that Simpson's paradox cannot occur in any singly-connected path diagram.

The rest of this work is structured as follows. Section \ref{sec:nocolliders} presents our factorization of partial covariances for singly-connected path diagrams with no colliders. Section \ref{sec:phenomena} demonstrates our factorization on some examples. Section \ref{sec:colliders} extends the factorization to diagrams with colliders. Section \ref{sec:Simpson} shows that our factorization implies that Simpson's paradox cannot occur in singly-connected path diagrams. Section \ref{sec:discussion} closes with some discussion.

\section{Paths without Colliders}\label{sec:nocolliders}

In this work, we make extensive use of the following recursive definition of the partial covariance of two random variables $X$ and $Y$ given a set of variables $Z$ and a variable $W$ such that $X, Y \notin Z \cup W$ \citep[Section 2.5.3]{Anderson2003}:
\begin{equation}\label{eq:partial}
\sigma_{XY \cdot ZW} =  \sigma_{XY \cdot Z} - \frac{\sigma_{XW \cdot Z} \sigma_{WY \cdot Z}}{\sigma^2_{W \cdot Z}}
\end{equation}
where, for simplicity, we use juxtaposition to denote union. Note that $X$ and $Y$ may be the same random variable, in which case the expression above corresponds to the partial variance $\sigma^2_{X \cdot ZW}=\sigma_{X X \cdot ZW}$. Recall that the partial (co)variances coincide with the conditional (co)variances for Gaussian random vectors.

We continue by recalling the separation criterion for path diagrams \citep[Section 1.2.3]{Pearl2009}. For simplicity, we do not make any distinction between the nodes in the path diagrams and the random variables that they represent. Given a path $\pi_{X:Y}$ from a node $X$ to a node $Y$ in a path diagram, a node $C$ is a collider in $\pi_{X:Y}$ if $A \oa C \ao B$ is a subpath of $\pi_{X:Y}$, where $\oa$ means $\ra$ or $\aa$. Given a set of nodes $Z$, $\pi_{X:Y}$ is said to be $Z$-open if
\begin{itemize}
\item every collider in $\pi_{X:Y}$ is in $Z$ or has some descendant in $Z$, and
\item every non-collider in $\pi_{X:Y}$ is outside $Z$.
\end{itemize}

If there is no $Z$-open path from $X$ to $Y$ (which we denote as $X \ci Y | Z$), then we can readily conclude that $X$ and $Y$ are conditionally independent given $Z$ in the joint normal distribution represented by the path diagram and, thus, $\sigma_{XY \cdot Z}=0$ \citep{Pearl2009}. If on the other hand there is a $Z$-open path from $X$ to $Y$ (which we denote as $X \nci Y | Z$), we assume in this section that it has no colliders, and defer the case with colliders to the next section.

When $X \nci Y | \emptyset$, it is known from path analysis that the covariance $\sigma_{XY}$ of two standardized random variables $X$ and $Y$ can be expressed as the sum for every $\emptyset$-open path from $X$ to $Y$ of the product of the path coefficients and error covariances of the edges in the path \citep{Wright1921,Pearl2009}. For non-standardized variables, one has to multiply the product associated to each path with the variance of the root variable in the path, i.e. the variable with no incoming edges. A path can have no root variables ($X \aa Z \ra \cdots \ra Y$ or $X \la \cdots \la Z \aa W \ra \cdots \ra Y$) or one root variable ($X \ra \cdots \ra Y$ or $X \la \cdots \la Z \ra \cdots \ra Y$).

When $X \nci Y | Z$ with $Z \neq \emptyset$, one may think that $\sigma_{XY \cdot Z}$ can be obtained by first applying path analysis to obtain an expression for $\sigma_{XY}$ and, then, modifying this expression by replacing (co)variances with conditional (co)variances given $Z$. However, this is incorrect as the following example shows.

\begin{figure}
\begin{tabular}{c|c}
\begin{tikzpicture}[inner sep=1mm]
\node at (0,0) (X) {$X$};
\node at (1.5,-1.2) (Z) {$Z$};
\node at (1.5,0) (Y) {$Y$};
\path[->] (X) edge node[above] {$\alpha$} (Y);
\path[->] (Y) edge node[right] {$\delta$} (Z);
\end{tikzpicture}
&
\begin{tikzpicture}[inner sep=1mm]
\node at (0,0) (X) {$X$};
\node at (0,1.2) (eX) {$\epsilon_X$};
\node at (1.5,-1.2) (Z) {$Z$};
\node at (3,-1.2) (eZ) {$\epsilon_Z$};
\node at (1.5,0) (Y) {$Y$};
\node at (3,0) (eY) {$\epsilon_Y$};
\path[->] (X) edge node[above] {$\alpha$} (Y);
\path[->] (Y) edge node[right] {$\delta$} (Z);
\path[->] (eX) edge node[right] {$1$} (X);
\path[->] (eY) edge node[above] {$1$} (Y);
\path[->] (eZ) edge node[above] {$1$} (Z);
\end{tikzpicture}\\
&\\
(i) & (ii)
\end{tabular}\caption{Path diagrams in Example \ref{exa:counterexample}.}\label{fig:counterexample}
\end{figure}

\begin{example}\label{exa:counterexample}
Consider the path diagram (i) in Figure \ref{fig:counterexample}, which corresponds to the following linear structural equation model:
\begin{align*}
X &= \epsilon_X\\
Y &= \alpha X + \epsilon_Y\\
Z &= \delta Y + \epsilon_Z.
\end{align*}
Consider representing the error terms explicitly in the diagram, which results in the path diagram (ii) in Figure \ref{fig:counterexample}. Then,
\[
\sigma_{XY} = cov(X, \alpha X + \epsilon_Y) = \alpha \sigma^2_X + cov(X,\epsilon_Y) = \alpha \sigma^2_X
\]
where the last equality follows from the fact that $cov(X,\epsilon_Y)=0$ since $X \ci \epsilon_Y | \emptyset$. However, \[
\sigma_{XY \cdot Z} = cov(X, \alpha X + \epsilon_Y | Z) = \alpha \sigma^2_{X \cdot Z} + cov(X,\epsilon_Y | Z) \neq \alpha \sigma^2_{X \cdot Z}
\]
where the last inequality follows from the fact that $cov(X,\epsilon_Y | Z) \neq 0$ in general, since $X \nci \epsilon_Y | Z$ because $Z$ is a descendant of $Y$, which is a collider in the path from $X$ to $\epsilon_Y$.
\end{example}

For singly-connected path diagrams, the following two theorems show how to obtain $\sigma_{XY \cdot Z}$ from $\sigma_{XY}$. Interestingly, $\sigma_{XY \cdot Z}$ can still be written as a product over the nodes and edges in the path. See Appendix A for the proofs. Hereinafter, we use the following notation. The parents of a node $X$ are $Pa(X) = \{Y | Y \ra X \}$. The children of $X$ are $Ch(X) = \{Y | X \ra Y \}$. The spouses of $X$ are $Sp(X) = \{Y | X \aa Y \}$.

\begin{theorem}\label{the:condpath1}
Let $\pi_{XY}$ be of the form $X = X_m \la \cdots \la X_2 \la X_1 \ra X_{m+1} \ra \cdots \ra X_{m+n} = Y$ or $X = X_1 \ra X_{2} \ra \cdots \ra X_{m+n} = Y$. Let $Z^i$ be a set of nodes such that each is connected to $Pa(X_i) \cup Sp(X_i)$ by a path that does not contain any node in $\pi_{XY}$.\footnote{It suffices that each node in $Z^i$ is connected to one node in $Pa(X_i) \cup Sp(X_i)$. The connecting path may be of length zero. The path does not need to be open with respect to any set of nodes.} Let $Z_i$ be a set of nodes such that each is connected to $Ch(X_i)$ by a path that does not contain any node in $\pi_{XY}$. Let $Z_i^i=Z_i \cup Z^i$ and $Z_{1:i}^{1:i} = Z_1^1 \cup \cdots \cup Z_i^i$. Then,
\[
\sigma_{X Y \cdot Z_{1:m+n}^{1:m+n}} = \sigma_{X Y} \frac{\sigma^2_{X_1 \cdot Z_1^1}}{\sigma^2_{X_1}} \prod_{i=2}^{m+n} \frac{\sigma^2_{X_i \cdot Z_{1:i-1}^{1:i-1} Z_i^i}}{\sigma^2_{X_i \cdot Z_{1:i-1}^{1:i-1} Z^i}}
\]
where $\sigma_{X Y}$ is obtained by path analysis.
\end{theorem}

\begin{theorem}\label{the:condpath2}
Let $\pi_{XY}$ be of the form $X = X_m \la \cdots \la X_2 \la X_1 \aa X_{m+1} \ra \cdots \ra X_{m+n} = Y$ or $X = X_1 \aa X_{2} \ra \cdots \ra X_{m+n} = Y$. Let $Z^i$ be a set of nodes such that each is connected to $Pa(X_i) \cup Sp(X_i)$ by a path that does not contain any node in $\pi_{XY}$. Let $Z_i$ be a set of nodes such that each is connected to $Ch(X_i)$ by a path that does not contain any node in $\pi_{XY}$. Let $Z_i^i=Z_i \cup Z^i$ and $Z_{1:i}^{1:i} = Z_1^1 \cup \cdots \cup Z_i^i$. Then,
\[
\sigma_{X Y \cdot Z_{1:m+n}^{1:m+n}} = \sigma_{X Y} \prod_{i=1}^{m+n} \frac{\sigma^2_{X_i \cdot Z_{1:i-1}^{1:i-1} Z_i^i}}{\sigma^2_{X_i \cdot Z_{1:i-1}^{1:i-1} Z^i}}
\]
where $\sigma_{X Y}$ is obtained by path analysis, and $Z_{1:0}^{1:0} = \emptyset$.
\end{theorem}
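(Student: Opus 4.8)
The plan is to reduce Theorem~\ref{the:condpath2} to the already-established Theorem~\ref{the:condpath1} by eliminating the bidirected edge $X_1 \aa X_{m+1}$ in favour of an explicit latent common cause. Concretely, I would replace $X_1 \aa X_{m+1}$ by a fresh node $L$ together with directed edges $X_1 \la L \ra X_{m+1}$, choosing the two new path coefficients and the adjusted error variances of $X_1$ and $X_{m+1}$ so that the covariance induced between the error terms of $X_1$ and $X_{m+1}$ equals the error covariance annotated on the original bidirected edge. This is the standard confounding semantics of a bidirected edge, and since $L$ is never placed in a conditioning set, marginalizing it out returns exactly the original joint normal distribution. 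Consequently every (partial) covariance and (partial) variance among the observed variables --- in particular $\sigma_{XY}$ and each partial variance of an $X_i$ appearing in the claimed product --- takes the same value in the two diagrams.

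The first step is to check that the augmented diagram meets the hypotheses of Theorem~\ref{the:condpath1}. Subdividing the bidirected edge into the two-edge path $X_1 \la L \ra X_{m+1}$ creates no undirected cycle, so the diagram stays singly-connected, and the modified path $X = X_m \la \cdots \la X_1 \la L \ra X_{m+1} \ra \cdots \ra X_{m+n} = Y$ has precisely the form required by Theorem~\ref{the:condpath1}, with $L$ as its unique root. I would then verify that the conditioning sets carry over unchanged: because $X_{m+1}$ and $L$ are nodes of $\pi_{XY}$, the sets $Z^i$ and $Z_i$ still attach to the same external parents, spouses, and children of each $X_i$ as before, and $L$ itself contributes no new set since both of its children lie on the path.

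The second step is to apply Theorem~\ref{the:condpath1} to the augmented path and simplify. The root $L$ produces the factor $\sigma^2_{L \cdot Z_L^L}/\sigma^2_L$; since $L$ has no parents or spouses and both its children are on $\pi_{XY}$, we have $Z_L^L = \emptyset$, so this factor collapses to $1$. The remaining factors run over $X_1, \dots, X_{m+n}$, exactly the nodes indexed in Theorem~\ref{the:condpath2}. The only genuine bookkeeping is to confirm that, under the ordering ``$L$ first, then $X_1, \dots, X_{m+n}$'', the conditioning set accumulated before each $X_i$ coincides with $Z_{1:i-1}^{1:i-1}$ (it does, because $L$ adds nothing), so the two products agree term by term and Theorem~\ref{the:condpath1} yields exactly the expression of Theorem~\ref{the:condpath2}.

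The main obstacle is not analytic but semantic and notational: one must argue cleanly that the latent substitution preserves all the relevant conditional (co)variances, so that no quantity in the formula changes value, and then reconcile the two indexing conventions so that the collapse of the root factor accounts precisely for the difference between the statements --- namely that Theorem~\ref{the:condpath1} singles out its root with the factor $\sigma^2_{X_1 \cdot Z_1^1}/\sigma^2_{X_1}$, whereas in Theorem~\ref{the:condpath2} the bidirected endpoints $X_1$ and $X_{m+1}$ receive ordinary non-root factors and the true source $L$ has been hidden. If one preferred to avoid introducing a latent node, the alternative is a direct induction on the path length using the recursive identity for $\sigma_{XY \cdot ZW}$, peeling off the endpoint $Y = X_{m+n}$ and invoking the separation criterion to discard the conditional covariances of $X$ with the non-path parents and the error of $Y$; there the hard part becomes the careful collider analysis showing which of those terms actually vanish given the structure of the sets $Z^i$ and $Z_i$.
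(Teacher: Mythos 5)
Your reduction is correct, and it takes a genuinely different route from the paper. The paper proves Theorem~\ref{the:condpath2} by rerunning the iterative argument used for Theorem~\ref{the:condpath1}: since the path $X_m \la \cdots \la X_1 \aa X_{m+1} \ra \cdots \ra X_{m+n}$ has no root node, every $X_i$ is handled by the non-root lemmas (Lemmas~\ref{lem:nonrootpasp} and~\ref{lem:nonrootch}, conditioning first on $Z^i$ and then on $Z_i$ in order along the path), and Lemma~\ref{lem:root} never enters --- which is exactly why the product starts at $i=1$ with no distinguished root factor. You instead replace the bidirected edge by a fresh exogenous node $L$ with $X_1 \la L \ra X_{m+1}$, invoke the standard latent-confounder semantics to preserve the joint law of the observed variables, and apply Theorem~\ref{the:condpath1} verbatim to the augmented (still singly-connected) path with $m'=m+1$; your bookkeeping is right: $Z'^1 = Z'_1 = \emptyset$ for $L$ because its only neighbors are on the path, so the root factor $\sigma^2_{L \cdot Z_1^1}/\sigma^2_L$ collapses to $1$, the accumulated conditioning sets shift by one index, and the remaining factors match Theorem~\ref{the:condpath2}'s product term by term; moreover every quantity in the identity is a partial (co)variance of observed variables with $L$ in no conditioning set, so its value is unchanged by the substitution. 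Your approach buys economy --- no duplicated lemma-by-lemma case analysis, and it explains structurally why the $\aa$ endpoints receive ordinary non-root factors (the true root has been marginalized out) --- at the cost of one semantic step you state but do not verify: feasibility of the split, i.e.\ choosing $\lambda_1 \lambda_2$ equal to the error covariance $c$ with $\lambda_i^2 \leq var(\epsilon_{X_i})$, which requires $c^2 \leq var(\epsilon_{X_1}) \, var(\epsilon_{X_{m+1}})$; this holds whenever the error covariance matrix is positive semidefinite, and strictly (avoiding degenerate residuals) under the nondegeneracy implicitly assumed throughout the paper, since partial variances appear in denominators. The paper's route avoids this model-surgery entirely but must re-establish the conditional-independence bookkeeping for the rootless path form.
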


We demonstrate the theorems above on some examples in the next section. Before that, note that the numerator and denominator of the partial variance ratio in the theorems above only differ in that the conditioning set of the former is a superset of the conditioning set of the latter. Thus, the ratio is never greater than 1, since conditioning never increases the variance of a random variable. Therefore, the theorems above show that the partial covariance between two nodes can be computed by multiplying the expression for the covariance given by path analysis with a product of partial variance ratios that account for the reduction of the partial variances of the variables in the path between the two nodes. Thus, like the covariance, the partial covariance factorizes over the nodes and edges in the path. Moreover, the partial covariance is never greater than the covariance. However, both share the same sign, i.e. conditioning does not change the sign of the covariance. This implies that if two nodes $X$ and $Y$ are connected by a path of the form $X \ra \cdots \ra Y$, then conditioning does not change the sign of the regression coefficient of $Y$ on $X$ and, thus, of the causal effect of $X$ on $Y$. This observation will be instrumental in proving in Section \ref{sec:Simpson} that Simpson's paradox does not occur in singly-connected path diagrams. The following corollary is immediate.

\begin{corollary}\label{cor:samesign}
Let $\pi_{XY}$ be of the form in Theorems \ref{the:condpath1} or \ref{the:condpath2}. Moreover, let $\pi_{XY}$ be open with respect to the sets of nodes $U$ and $V$. Then, $sign(\sigma_{XY}) = sign(\sigma_{XY \cdot U})=sign(\sigma_{XY \cdot V})$.
\end{corollary}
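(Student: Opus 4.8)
The plan is to deduce the statement directly from Theorems \ref{the:condpath1} and \ref{the:condpath2} by exhibiting $U$ and $V$ as conditioning sets of the form $Z_{1:m+n}^{1:m+n}$ appearing there, and then reading the sign off the factorization. Since the argument is identical for $U$ and for $V$, I would fix a single such set $S$ with $\pi_{XY}$ being $S$-open and show both that $S$ decomposes as required and that this decomposition forces $sign(\sigma_{XY \cdot S}) = sign(\sigma_{XY})$.

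First I would use openness to locate $S$ relative to the path. Because $\pi_{XY}$ has one of the forms in Theorems \ref{the:condpath1} and \ref{the:condpath2}, it contains no collider, so every internal node is a non-collider; the separation criterion then requires each such node to lie outside $S$, and since $X, Y \notin S$ as well, we get $S \cap \pi_{XY} = \emptyset$. Next I would use single-connectedness to attach each $s \in S$ to a unique node of $\pi_{XY}$: the absence of undirected cycles means that, if $s$ is connected to the path at all, there is a unique node $X_i$ reached first along the unique path from $s$, and that path meets no other node of $\pi_{XY}$. The node adjacent to $X_i$ on it lies in $Pa(X_i) \cup Sp(X_i)$ or in $Ch(X_i)$, so I would place $s$ in $Z^i$ or in $Z_i$ accordingly. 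A node of $S$ in a different connected component from $\pi_{XY}$ is independent of $X$, $Y$ and of the remaining conditioning variables, so it can be dropped without changing $\sigma_{XY \cdot S}$. This realizes $S$ as $Z_{1:m+n}^{1:m+n}$ for suitable $Z^i$ and $Z_i$.

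Having matched the hypotheses, I would apply the relevant theorem to write $\sigma_{XY \cdot S}$ as $\sigma_{XY}$ times a product of ratios of the form $\sigma^2_{X_i \cdot A} / \sigma^2_{X_i \cdot B}$ with $B \subseteq A$. In a nondegenerate model each such partial variance is strictly positive, so every factor lies in $(0,1]$ and the whole product is positive; hence $sign(\sigma_{XY \cdot S}) = sign(\sigma_{XY})$. Taking $S = U$ and then $S = V$ yields $sign(\sigma_{XY}) = sign(\sigma_{XY \cdot U}) = sign(\sigma_{XY \cdot V})$, as claimed. I expect the only genuine content to be the decomposition step, which is why the statement is labelled a corollary: the delicate point is to argue, from single-connectedness together with $S$-openness, that no conditioning node lands on $\pi_{XY}$ and that each remaining node attaches to exactly one $X_i$ through a path avoiding the rest of $\pi_{XY}$, so that the theorem's partition of the conditioning set genuinely recovers $S$; once this is established, the sign conclusion is immediate from the positivity of the partial variance ratios.
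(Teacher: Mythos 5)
Your proposal is correct and follows essentially the paper's own route: the paper declares the corollary immediate from Theorems \ref{the:condpath1} and \ref{the:condpath2}, having just observed that each partial variance ratio compares nested conditioning sets and hence lies in $(0,1]$, so the factorization preserves the sign. The only thing you add is an explicit justification (via single-connectedness and openness) that any admissible conditioning set decomposes into the $Z_i^i$ form of the theorems, with off-component nodes dropped by independence --- a step the paper leaves implicit, and which you handle correctly.
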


The expressions in the theorems above can be simplified by removing irrelevant variables from the conditioning set prior to applying the theorems. Specifically, let $T=Z_{1:m+n}^{1:m+n}$, and let $I=\{I_1,\ldots,I_s\}$ denote all the nodes in $T$ such that $X \cup Y \ci I_i | T \setminus I_i$. Then, $X \cup Y \ci I | T \setminus I$ by repeated application of the intersection property \citep[Proposition 2.1]{Studeny2005} and, thus, $\sigma_{X Y \cdot T}= \sigma_{X Y \cdot T \setminus I}$. In other words, $I$ contains irrelevant nodes. As a matter of fact, $I$ contains all the irrelevant nodes. To see it, assume to the contrary that there exists a second set of nodes $I' \nsubseteq I$ such that $X \cup Y \ci I' | T \setminus I'$. Then, $X \cup Y \ci I_j' | T \setminus I_j'$ for all $I_j' \in I'$ by the weak union property \citep[Lemma 2.1]{Studeny2005}, which contradicts the definition of $I$.

To sum up, the relevance of the theorems above lies in that they somehow complement path analysis: While path analysis in a singly-connected path diagram determines the contribution of each node and edge to the covariance $\sigma_{X Y}$, the theorems above determine the contribution of each node and edge to the partial covariance $\sigma_{X Y \cdot Z_{1:m+n}^{1:m+n}}$. Specifically, the theorems indicate whether the contribution of each node and edge to the covariance changes by conditioning and, if so, by how much. For example, consider Theorem \ref{the:condpath1} and let $\pi_{XY}$ be of the form $X = X_1 \ra X_{2} \ra \cdots \ra X_{m+n} = Y$. It follows from the theorem that the contribution of each edge in $\pi_{XY}$ to the covariance and partial covariance is the same, namely the corresponding path coefficient. It follows from path analysis that the contribution of $X_1$ to the covariance is $\sigma^2_{X_1}$. The theorem shows that this contribution gets reduced by a factor of $\frac{\sigma^2_{X_1 \cdot Z_1^1}}{\sigma^2_{X_1}}$ when conditioning on $Z_1^1$. Likewise, the contribution of $X_2$ to the covariance is $1$. This contribution gets reduced by a factor of $\frac{\sigma^2_{X_i \cdot Z_{1}^{1} Z_2^2}}{\sigma^2_{X_i \cdot Z_{1}^{1} Z^2}}$ when conditioning on $Z_2^2$. This indicates that conditioning on $Z^2$ may change the variance of $X_2$ but it does not constrain $X_2$ so as to alter the contribution of $X_2$. Conditioning on $Z_2$, on the other hand, has the opposite effect. Likewise for the rest of the nodes in $\pi_{XY}$. This fine-grained analysis is not possible with the recursion in Equation \ref{eq:partial}.

\begin{figure}
\begin{tabular}{c|c|c|c}
\begin{tikzpicture}[inner sep=1mm]
\node at (0,0) (X) {$X$};
\node at (1.5,0) (Z) {$Z$};
\node at (3,0) (Y) {$Y$};
\node at (1.5,-1.2) (W) {$W$};
\path[->] (X) edge node[above] {$\alpha$} (Z);
\path[->] (Z) edge node[above] {$\beta$} (Y);
\path[->] (Z) edge node[right] {$\gamma$} (W);
\end{tikzpicture}
&
\begin{tikzpicture}[inner sep=1mm]
\node at (0,0) (X) {$X$};
\node at (1.5,0) (Z) {$Z$};
\node at (3,0) (Y) {$Y$};
\node at (1.5,1.2) (W) {$W$};
\path[->] (X) edge node[above] {$\alpha$} (Z);
\path[->] (Z) edge node[above] {$\beta$} (Y);
\path[->] (W) edge node[right] {$\gamma$} (Z);
\end{tikzpicture}
&
\begin{tikzpicture}[inner sep=1mm]
\node at (0,0) (X) {$X$};
\node at (1.5,-1.2) (Z) {$Z$};
\node at (1.5,0) (Y) {$Y$};
\path[->] (X) edge node[above] {$\alpha$} (Y);
\path[->] (Y) edge node[right] {$\delta$} (Z);
\end{tikzpicture}
&
\begin{tikzpicture}[inner sep=1mm]
\node at (0,0) (X) {$X$};
\node at (0,-1.2) (Z) {$Z$};
\node at (1.5,0) (Y) {$Y$};
\path[->] (X) edge node[above] {a} (Y);
\path[->] (X) edge node[right] {b} (Z);
\end{tikzpicture}\\
&&&\\
(i) & (ii) & (iii) & (iv)
\end{tabular}\caption{Path diagrams in Examples \ref{exa:fig2and3} and \ref{exa:fig4and5}.}\label{fig:examples}
\end{figure}

\section{Causal Phenomena Explained}\label{sec:phenomena}

In this section, we demonstrate Theorems \ref{the:condpath1} and \ref{the:condpath2} on some examples that shed light on some causal phenomena. The examples are borrowed from \citet{Pearl2013}, who studied them using Equation \ref{eq:partial}. The objective of this section is purely illustrative. That is, we do not compare our explanations and those by \citet{Pearl2013}, as our theorems and Equation \ref{eq:partial} address different problems.

\begin{example}\label{exa:fig2and3}
Consider the path diagram (i) in Figure \ref{fig:examples}. The causal effect of $X$ on $Y$ is given by the regression coefficient $r_{YX} = \alpha \beta$. Since $W$ does not lie on the causal path from $X$ to $Y$, one may think that the causal effect of $X$ on $Y$ is also given by the partial regression coefficient $r_{YX \cdot W}$, which can be computed from the subpopulation satisfying $W=w$ for any $w$. However, this is incorrect as shown by \citet[Section 3.2]{Pearl2013}. We arrive at the same conclusion as \citet{Pearl2013} by applying Theorem \ref{the:condpath1} with $X_1=X, X_2=Z, X_3=Y, Z_1^1=Z^2=Z^3=Z_3^3=\emptyset$, and $Z^2_2=\{W\}$, which gives that
\[
\sigma_{XY \cdot W} = \sigma_{XY} \frac{\sigma^2_X}{\sigma^2_X} \frac{\sigma^2_{Z \cdot W}}{\sigma^2_Z} \frac{\sigma^2_{Y \cdot W}}{\sigma^2_{Y \cdot W}}.
\]
Moreover, $\sigma_{XY} = \sigma^2_X \alpha \beta$ by path analysis. Then,
\[
r_{YX \cdot W} = \frac{\sigma_{XY \cdot W}}{\sigma^2_{X \cdot W}} = \alpha \beta \frac{\sigma^2_X}{\sigma^2_{X \cdot W}} \frac{\sigma^2_{Z \cdot W}}{\sigma^2_Z}
\]
and, thus, $r_{YX \cdot W} \neq \alpha \beta$ unless $\gamma=0$ or $\alpha=\sigma_Z / \sigma_X$. To see it, note that
\[
\sigma^2_{X \cdot W} = \sigma^2_{X} - \frac{\sigma_{XW} \sigma_{WX}}{\sigma^2_{W}} = \sigma^2_{X} - \frac{( \sigma^2_X \alpha \gamma )^2}{\sigma^2_{W}} = \sigma^2_{X} \Big(\frac{\sigma^2_{W} - \sigma^2_X \alpha^2 \gamma^2}{\sigma^2_{W}} \Big)
\]
and, similarly,
\[
\sigma^2_{Z \cdot W} = \sigma^2_{Z} \Big(\frac{\sigma^2_{W} - \sigma^2_{Z} \gamma^2}{\sigma^2_{W}} \Big).
\]
Then,
\[
\frac{\sigma^2_X}{\sigma^2_{X \cdot W}} \frac{\sigma^2_{Z \cdot W}}{\sigma^2_Z} = \frac{\sigma^2_{W} - \sigma^2_{Z} \gamma^2}{\sigma^2_{W} - \sigma^2_{X} \alpha^2 \gamma^2} = 1
\]
if and only if $\gamma=0$ or $\alpha=\sigma_Z / \sigma_X$.\footnote{The effect of setting $\gamma=0$ on $r_{YX \cdot W}$ is as follows. Setting $\gamma=0$ is equivalent to removing the edge $Z \ra W$ from the path diagram (i) in Figure \ref{fig:examples}, which implies that $r_{YX \cdot W} = \alpha \beta$. The effect of setting $\alpha=\sigma_Z / \sigma_X$ on $r_{YX \cdot W}$ is as follows. The path diagram (i) in Figure \ref{fig:examples} corresponds to a model that contains the linear structural equation $Z=\alpha X + \epsilon_Z$ with $X \ci \epsilon_Z | \emptyset$. Then, $\sigma_Z^2=\alpha^2 \sigma_X^2 + var(\epsilon_Z)$ and, thus, $var(\epsilon_Z)=0$ when $\alpha=\sigma_Z / \sigma_X$, i.e. $Z$ is completely determined by $X$. In other words, the diagram (i) in Figure \ref{fig:examples} reduces to the diagram (iv), which is studied in Example \ref{exa:fig4and5}.}

As also shown by \citet[Section 3.2]{Pearl2013}, no bias is introduced when conditioning on $W$ in the path diagram (ii) in Figure \ref{fig:examples}. We arrive at the same conclusion as \citet{Pearl2013} by applying Theorem \ref{the:condpath1} with $X_1=X, X_2=Z, X_3=Y, Z_1^1=Z_3^3=\emptyset$, and $Z^2=Z_2^2=\{W\}$, which gives that
\[
\sigma_{XY \cdot W} = \sigma_{XY} \frac{\sigma^2_X}{\sigma^2_X} \frac{\sigma^2_{Z \cdot W}}{\sigma^2_{Z \cdot W}} \frac{\sigma^2_{Y \cdot W}}{\sigma^2_{Y \cdot W}}.
\]
Moreover, $\sigma_{XY} = \sigma^2_X \alpha \beta$ by path analysis. Then,
\[
r_{YX \cdot W} = \frac{\sigma_{XY \cdot W}}{\sigma^2_{X \cdot W}} = \alpha \beta \frac{\sigma^2_X}{\sigma^2_{X \cdot W}} = \alpha \beta
\]
where the last equality follows from the fact that $X \ci W | \emptyset$ and, thus, $\sigma^2_X = \sigma^2_{X \cdot W}$.
\end{example}

\begin{example}\label{exa:fig4and5}
Consider the path diagram (iii) in Figure \ref{fig:examples}. The causal effect of $X$ on $Y$ is given by $r_{YX} = \alpha$. As shown by \citet[Section 3.3]{Pearl2013}, conditioning on $Z$ introduces a bias. We arrive at the same conclusion as \citet{Pearl2013} by applying Theorem \ref{the:condpath1} with $X_1=X, X_2=Y, Z_1^1=Z^2=\emptyset$, and $Z^2_2=\{Z\}$, which gives that
\[
\sigma_{XY \cdot Z} = \sigma_{XY} \frac{\sigma^2_X}{\sigma^2_X} \frac{\sigma^2_{Y \cdot Z}}{\sigma^2_Y}.
\]
Moreover, $\sigma_{XY} = \sigma^2_X \alpha$ by path analysis. Then,
\begin{equation}\label{eq:partial2}
r_{YX \cdot Z} = \frac{\sigma_{XY \cdot Z}}{\sigma^2_{X \cdot Z}} = \alpha \frac{\sigma^2_X}{\sigma^2_{X \cdot Z}} \frac{\sigma^2_{Y \cdot Z}}{\sigma^2_Y}.
\end{equation}
and, thus, $r_{YX \cdot Z} \neq \alpha$ unless $\delta=0$ or $\alpha=\sigma_Y / \sigma_X$ as shown in Example \ref{exa:fig2and3}. In summary, the causal effect of $X$ on $Y$ cannot be computed from the subpopulation satisfying $Z=z$ because $r_{YX \cdot Z} \neq \alpha$. However, if $\sigma^2_{X}$ and $\sigma^2_{Z}$ are known, then the causal effect can be computed from that subpopulation by correcting $r_{YX \cdot Z}$ as shown in Equation \ref{eq:partial2}.

As also shown by \citet[Section 3.3]{Pearl2013}, no bias is introduced when conditioning on $Z$ in the path diagram (iv) in Figure \ref{fig:examples}. We arrive at the same conclusion as \citet{Pearl2013} by applying Theorem \ref{the:condpath1} with $X_1=X, X_2=Y, Z_1^1=\{Z\}$, and $Z^2=Z_2^2=\emptyset$, which gives that
\[
\sigma_{XY \cdot Z} = \sigma_{XY} \frac{\sigma^2_{X \cdot Z}}{\sigma^2_X} \frac{\sigma^2_{Y \cdot Z}}{\sigma^2_{Y \cdot Z}}.
\]
Moreover, $\sigma_{XY} = \sigma^2_X a$ by path analysis. Then,
\[
r_{YX \cdot Z} = \frac{\sigma_{XY \cdot Z}}{\sigma^2_{X \cdot Z}} = a.
\]
\end{example}

The examples above show that conditioning on a child of a mediator or on a child of the effect introduces a bias in the estimation of the causal effect of interest. Appendix B illustrates with experiments how this bias may lead to suboptimal decision making. On the other hand, the examples above show that conditioning on a parent of a mediator or on a child of the cause does not introduce any bias, which implies that the causal effect of interest can be estimated from a sample of the corresponding subpopulation.

For completeness, we show below that conditioning on a parent of the cause or on a parent of the effect does not introduce any bias.

\begin{example}\label{exa:exa6}
Consider the path diagram (ii) in Figure \ref{fig:examples}. The causal effect of $Z$ on $Y$ is given by $r_{YZ} = \beta$. We conclude that $r_{YZ \cdot W}=\beta$ by applying Theorem \ref{the:condpath1} with $X_1=Z, X_2=Y, Z_1^1=\{W\}$, and $Z^2=Z^2_2=\emptyset$. Specifically,
\[
\sigma_{ZY \cdot W} = \sigma_{ZY} \frac{\sigma^2_{Z \cdot W}}{\sigma^2_Z} \frac{\sigma^2_{Y \cdot W}}{\sigma^2_{Y \cdot W}}.
\]
Moreover, $\sigma_{ZY} = \sigma^2_Z \beta$ by path analysis. Then,
\[
r_{YZ \cdot W} = \frac{\sigma_{ZY \cdot W}}{\sigma^2_{Z \cdot W}} = \beta.
\]
This result also follows from the first rule of do-calculus \citep[Section 3.4]{Pearl2009}.

Consider again the path diagram (ii) in Figure \ref{fig:examples}. The causal effect of $X$ on $Z$ is given by $r_{ZX} = \alpha$. We conclude that $r_{ZX \cdot W}=\alpha$ by applying Theorem \ref{the:condpath1} with $X_1=X, X_2=Z, Z_1^1=\emptyset$, and $Z^2=Z^2_2=\{W\}$. Specifically,
\[
\sigma_{XZ \cdot W} = \sigma_{XZ} \frac{\sigma^2_{X}}{\sigma^2_X} \frac{\sigma^2_{Z \cdot W}}{\sigma^2_{Z \cdot W}}.
\]
Moreover, $\sigma_{XZ} = \sigma^2_X \alpha$ by path analysis. Then,
\[
r_{ZX \cdot W} = \frac{\sigma_{XZ \cdot W}}{\sigma^2_{X \cdot W}} = \alpha \frac{\sigma^2_X}{\sigma^2_{X \cdot W}} = \alpha
\]
where the last equality follows from the fact that $X \ci W | \emptyset$ and, thus, $\sigma^2_X = \sigma^2_{X \cdot W}$. We can arrive at the same conclusion by applying the definition of partial covariance. Specifically,
\[
\sigma_{XZ \cdot W} =  \sigma_{XZ} - \frac{\sigma_{XW} \sigma_{WZ}}{\sigma^2_{W}} = \sigma_{XZ}
\]
because $X \ci W | \emptyset$ implies that $\sigma_{XW} = 0$.

\end{example}

\begin{figure}
\begin{tabular}{c|c}
\begin{tikzpicture}[inner sep=1mm]
\node at (0,0) (X) {$X$};
\node at (2.5,1.2) (Z) {$Z$};
\node at (2,0) (Y) {$Y$};
\node at (1,1.2) (U) {$U$};
\path[->] (X) edge node[above] {$\alpha$} (Y);
\path[->] (U) edge node[left] {$\beta$} (X);
\path[->] (U) edge node[right] {$\gamma$} (Y);
\path[->] (U) edge node[above] {$\delta$} (Z);
\end{tikzpicture}
&
\begin{tikzpicture}[inner sep=1mm]
\node at (0,0) (X) {$X$};
\node at (2.5,1.2) (Z) {$Z$};
\node at (2,0) (Y) {$Y$};
\node at (1,1.2) (U) {$U$};
\path[->] (X) edge node[above] {$\alpha$} (Y);
\path[->] (U) edge node[left] {$\beta$} (X);
\path[->] (U) edge node[right] {$\gamma$} (Y);
\path[<-] (U) edge node[above] {$\delta$} (Z);
\end{tikzpicture}\\
&\\
(i) & (ii)
\end{tabular}\caption{Path diagrams in Example \ref{exa:fig13}.}\label{fig:fig13}
\end{figure}
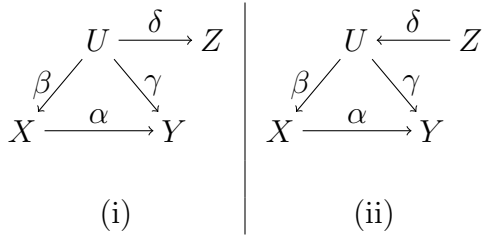

Finally, the example below shows that Theorems \ref{the:condpath1} and \ref{the:condpath2} may be of help even when the path diagram at hand is not singly-connected. The example is borrowed from \citet{Pearl2013}.

\begin{example}\label{exa:fig13}
Consider the path diagram (i) in Figure \ref{fig:fig13}. Let us denote it by $G$. Moreover, let $G^{\alpha}$ denote the diagram that results when the edge $X \ra Y$ is deleted from $G$. Since $X \ci Y | U$ holds in $G^{\alpha}$, we have that $\alpha = r_{YX \cdot U}$ \citep[Theorem 5.3.1]{Pearl2009}. However, if $U$ is unobserved then $r_{YX \cdot U}$ cannot be computed. Assume that the proxy $Z$ of $U$ is observed and, thus, $r_{YX \cdot Z}$ can be computed. Of course, $\alpha \neq r_{YX \cdot Z}$ because $X \ci Y | Z$ does not hold in $G^{\alpha}$. However, \citet[Section 3.11]{Pearl2013} shows that the bias introduced by adjusting for $Z$ instead of $U$ vanishes as the correlation between $U$ and $Z$ grows, i.e. when $Z$ is a good proxy of $U$. The same occurs in the path diagram (ii) in Figure \ref{fig:fig13}.

Although the path diagrams in Figure \ref{fig:fig13} are not singly-connected, we can still use our results to reach the same conclusions as Pearl. Since the covariance of $X$ and $Y$ may differ in $G$ and $G^{\alpha}$, we use $\sigma_{XY}$ for the former and $\sigma_{XY}^{\alpha}$ for the latter. For the same reason, we distinguish between $\sigma_{XY \cdot Z}$ and $\sigma_{XY \cdot Z}^{\alpha}$. Since the variance of $U$ is the same in $G$ and $G^{\alpha}$, we simply denote it as $\sigma_{U}^2$. For the same reason, we use $\sigma_{U \cdot Z}^2$ to denote the partial variance of $U$ given $Z$ in both $G$ and $G^{\alpha}$. Note that checking whether $X \ci Y | Z$ holds in $G^{\alpha}$ is equivalent to checking whether $\sigma_{XY \cdot Z}^{\alpha}=0$ holds. Since $G^{\alpha}$ is a singly-connected path diagram, we can apply Theorem \ref{the:condpath1} and conclude that $\sigma_{XY \cdot Z}^{\alpha} = \sigma_{XY}^{\alpha} \sigma_{U \cdot Z}^2 / \sigma_{U}^2$. This implies that, although conditioning on $Z$ does not nullify the covariance of $X$ and $Y$ in $G^{\alpha}$, it does reduce it. Moreover, the greater the correlation between $U$ and $Z$, the greater the reduction and, thus, the closer $r_{YX \cdot Z}$ comes to $\alpha$. We illustrate this with some experiments in Appendix C.
\end{example}

\section{Paths with Colliders}\label{sec:colliders}

In this section, we address the case where $\pi_{XY}$ has colliders. Specifically, let $\pi_{XY}$ be $Z$-open. Given a collider $C$ in $\pi_{XY}$, an opener is any node $W \in Z$ such that $C = C_1 \ra \cdots \ra C_n = W$ and $C_1, \ldots, C_{n-1} \notin Z$. Note that $C$ is an opener if $C \in Z$.

\begin{theorem}\label{the:collider}
Let $C$ be a collider in $\pi_{XY}$. Moreover, let $\pi_{XY}$ be closed with respect to $Z$ but open with respect to $Z \cup Z_{1:n}^{1:n} \cup W_{1:n}$ where (i) $W_1, \ldots, W_n$ are openers for $C$, (ii) $Z^i$ is a the set of nodes such that each is connected to $Pa(W_i) \cup Sp(W_i)$ by a path that does not contain any node in $\pi_{XY}$ or $\pi_{CW_i}$, (iii) $Z_i$ is a set of nodes such that each is connected to $Ch(W_i)$ by a path, and (iv) $Z_{1:i}^{1:j} = Z_1 \cup \cdots \cup Z_i \cup Z^1 \cup \cdots \cup Z^j$. Then,
\begin{equation}\label{eq:collider}
\sigma_{XY \cdot Z Z_{1:n}^{1:n} W_{1:n}} = - \sum_{i=1}^n \frac{\sigma_{X W_i \cdot Z Z_{1:i-1}^{1:i} W_{1:i-1}} \sigma_{W_i Y \cdot Z Z_{1:i-1}^{1:i} W_{1:i-1}}}{\sigma^2_{W_i \cdot Z Z_{1:i-1}^{1:i} W_{1:i-1}}}
\end{equation}
where $Z_{1:0}^{1:1} = Z^1$ and $W_{1:0}=\emptyset$.
\end{theorem}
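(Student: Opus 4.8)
The plan is to prove the identity by adding the nodes of the conditioning set in a carefully chosen order and invoking the recursive definition of the partial covariance exactly once per opener. Fix the collider $C$ and its openers $W_1,\ldots,W_n$, and process the index $i$ from $1$ to $n$, adding at stage $i$ first the block $Z^i$, then the opener $W_i$, and finally the block $Z_i$. Let $A_i^{(0)}=Z\cup Z_{1:i-1}^{1:i-1}\cup W_{1:i-1}$ be the set in force before stage $i$, and set $A_i^{(1)}=A_i^{(0)}\cup Z^i$, $A_i^{(2)}=A_i^{(1)}\cup W_i$, and $A_i^{(3)}=A_i^{(2)}\cup Z_i=A_{i+1}^{(0)}$. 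A direct comparison of index sets shows $A_i^{(1)}=Z\cup Z_{1:i-1}^{1:i}\cup W_{1:i-1}$, which is exactly the conditioning set of the $i$-th summand of (\ref{eq:collider}), and that $A_n^{(3)}=Z\cup Z_{1:n}^{1:n}\cup W_{1:n}$ is the conditioning set on the left-hand side; the conventions $Z_{1:0}^{1:1}=Z^1$ and $W_{1:0}=\emptyset$ make stage $i=1$ agree with $A_1^{(1)}=Z\cup Z^1$.

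The argument then rests on three facts. First (base case), because the diagram is singly-connected, $\pi_{XY}$ is the unique path between $X$ and $Y$; since it is closed with respect to $Z$ (the collider $C$ receives no opener from $Z$), we get $X\ci Y\mid Z$ and hence $\sigma_{XY\cdot Z}=0$. Second (parents and spouses), adding the block $Z^i$ leaves the partial covariance unchanged, $\sigma_{XY\cdot A_i^{(1)}}=\sigma_{XY\cdot A_i^{(0)}}$: a node of $Z^i$ is joined to $X$ only through the unique path that reaches a parent or spouse $P$ of $W_i$ off the generating path $\pi_{CW_i}$, so near $W_i$ the path looks like $\cdots\ra W_i\la\cdots$ or $\cdots\ra W_i\aa\cdots$, making $W_i$ a collider on it; as $W_i$ is not yet conditioned and (as certified below) none of its descendants is either, the path is blocked and $X\ci Z^i\mid A_i^{(0)}$. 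Third (children), adding $Z_i$ likewise leaves the value unchanged, $\sigma_{XY\cdot A_i^{(3)}}=\sigma_{XY\cdot A_i^{(2)}}$: a node of $Z_i$ is joined to $X$ only through $W_i$, now as a non-collider $\cdots\ra W_i\ra\cdots$ because the link used is an edge into a child of $W_i$, and $W_i\in A_i^{(2)}$ blocks the path, giving $X\ci Z_i\mid A_i^{(2)}$. In both cases the independence is converted into the equality of partial covariances by iterating the recursive definition over the nodes of the block, each added node having vanishing partial covariance with $X$ so that every subtracted term is $0$.

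With these facts in hand, the single application of the recursive definition at stage $i$,
\[
\sigma_{XY\cdot A_i^{(2)}}=\sigma_{XY\cdot A_i^{(1)}}-\frac{\sigma_{XW_i\cdot A_i^{(1)}}\,\sigma_{W_iY\cdot A_i^{(1)}}}{\sigma^2_{W_i\cdot A_i^{(1)}}},
\]
contributes exactly the $i$-th summand of (\ref{eq:collider}), since $A_i^{(1)}$ is its conditioning set. Chaining the three facts across $i=1,\ldots,n$ telescopes: starting from $\sigma_{XY\cdot Z}=0$, the blocks $Z^i$ and $Z_i$ never change the value while each opener subtracts one summand, so by induction $\sigma_{XY\cdot A_i^{(3)}}=-\sum_{k=1}^{i}\big(\text{$k$-th summand}\big)$, and at $i=n$ this is precisely (\ref{eq:collider}).

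The main obstacle is the bookkeeping behind the two independence facts, which is where singly-connectedness and the hypotheses on the attaching paths are genuinely used. One must verify that the unique $X$--$Z^i$ (resp.\ $X$--$Z_i$) path is the one through $W_i$ and that $W_i$ occurs on it as a collider (resp.\ non-collider); this follows from the requirement that $Z^i$ attaches to $Pa(W_i)\cup Sp(W_i)$ by a path avoiding $\pi_{XY}$ and $\pi_{CW_i}$, together with uniqueness of paths. For the parents-and-spouses fact one must also certify that no descendant of $W_i$ has entered the conditioning set before $W_i$ is added; the key point is that no opener of $C$ can be an ancestor of another, for otherwise the intermediate opener would be an interior node of the generating path of the second opener and yet also a conditioned node, contradicting the definition of opener. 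Consequently $W_{1:i-1}$ and the auxiliary blocks added so far contain no descendant of $W_i$, so the collider $W_i$ stays inactive until it is added and the telescoping is valid. Finally, since $C$ is already open through $W_1,\ldots,W_{i-1}$ once $i\ge 2$, one should note that this cannot create an alternative active $X$--$Z^i$ or $X$--$Z_i$ connection, which is again immediate from singly-connectedness.
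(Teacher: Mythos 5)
Your proof is correct and is essentially the paper's own argument run in reverse: the paper starts from $\sigma_{XY \cdot Z Z_{1:n}^{1:n} W_{1:n}}$ and strips $Z_n$, then $W_n$ (via one application of the recursive definition), then $Z^n$, recursing down to the base case $\sigma_{XY \cdot Z}=0$, whereas you build the conditioning set up from $\sigma_{XY \cdot Z}=0$ in the same per-opener order $Z^i$, $W_i$, $Z_i$, using the identical independence facts ($X$ and $Y$ independent of $Z^i$ before $W_i$ enters the conditioning set, and of $Z_i$ once $W_i$ blocks as a non-collider). The additional bookkeeping you supply (that no descendant of $W_i$ lies in the earlier conditioning sets, and that the unique path from $X$ or $Y$ to the auxiliary blocks passes through $W_i$) is detail the paper merely asserts in its two stated independencies, not a genuinely different route.
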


In the theorem above, if $\pi_{X W_i}$ has some collider then $\sigma_{X W_i \cdot Z Z_{1:i-1}^{1:i} W_{1:i-1}}$ is obtained by recursively applying the theorem to $\pi_{X W_i}$. When $\pi_{X W_i}$ has no colliders, $\sigma_{X W_i \cdot Z Z_{1:i-1}^{1:i} W_{1:i-1}}$ is obtained as shown in Theorems \ref{the:condpath1} and \ref{the:condpath2}. Likewise for $\pi_{W_i Y}$ and $\sigma_{W_i Y \cdot Z Z_{1:i-1}^{1:i} W_{1:i-1}}$. Example \ref{exa:collider} demonstrates this recursive procedure. Specifically, let $\pi_{XY}$ have colliders $C_1, \ldots, C_k$, where $C_i$ has openers $\mathcal{W}_i = \{W_{i1}, \ldots, W_{i n_i}\}$. Then, the recursive procedure just described allows us to write $\sigma_{XY \cdot Z Z_{1:n}^{1:n} W_{1:n}}$ as
\begin{equation}\label{eq:recursion}
(-1)^k \sum_{O_1 \in \mathcal{W}_1} \cdots \sum_{O_k \in \mathcal{W}_k} \frac{\sigma_{X O_1 \cdot U_{O_1}} \sigma_{O_1 O_2 \cdot U_{O_{1:2}}} \cdots \: \sigma_{O_{k-1} O_k \cdot U_{O_{1:k}}} \sigma_{O_k Y \cdot U_{O_{1:k}}}}{\sigma^2_{O_1 \cdot U_{O_1}} \sigma^2_{O_2 \cdot U_{O_{1:2}}} \cdots \: \sigma^2_{O_{k-1} \cdot U_{O_{1:k}}} \sigma^2_{O_k \cdot U_{O_{1:k}}}}
\end{equation}
for some sets of nodes $U_{O_1}, U_{O_{1:2}}, \ldots, U_{O_{1:k}}$. In other words, the partial covariance decomposes as a sum over the different ways of opening $\pi_{XY}$, and each term in the sum is a product of calls to Theorems \ref{the:condpath1} and \ref{the:condpath2}. Then, each term in the sum factorizes over the nodes and edges of $\pi_{XY}$. This resembles how path analysis on unconstrained path diagrams decomposes the covariance of two random variables over the different $\emptyset$-open paths between them. We demonstrate the theorem above with an example.

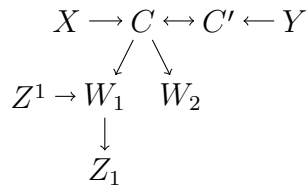
\begin{figure}
\begin{tikzpicture}[inner sep=1mm]
\node at (0,0) (X) {$X$};
\node at (1,0) (C) {$C$};
\node at (2,0) (Cp) {$C'$};
\node at (.5,-1) (W1) {$W_1$};
\node at (1.5,-1) (W2) {$W_2$};
\node at (3,0) (Y) {$Y$};
\node at (-.5,-1) (Zp) {$Z^1$};
\node at (.5,-2) (Zc) {$Z_1$};
\path[->] (X) edge (C);
\path[<->] (C) edge (Cp);
\path[->] (C) edge (W1);
\path[->] (C) edge (W2);
\path[->] (Y) edge (Cp);
\path[->] (W1) edge (Zc);
\path[->] (Zp) edge (W1);
\end{tikzpicture}\caption{Path diagram in Example \ref{exa:collider}.}\label{fig:collider}
\end{figure}

\begin{example}\label{exa:collider}
Consider the path diagram in Figure \ref{fig:collider}. Then, the partial covariance $\sigma_{XY \cdot C' Z_1^1 W_{1:2}}$ can be computed with the help of Theorem \ref{the:collider} with $Z=\{C'\}$. Specifically,
\[
\sigma_{XY \cdot C' Z_1^1 W_{1:2}} = - \frac{\sigma_{X W_1 \cdot C' Z^1} \sigma_{W_1 Y \cdot C' Z^1}}{\sigma^2_{W_1 \cdot C' Z^1}} - \frac{\sigma_{X W_2 \cdot C' Z^1_1 W_1} \sigma_{W_2 Y \cdot C' Z^1_1 W_1}}{\sigma^2_{W_2 \cdot C' Z^1_1 W_1}}.
\]
Moreover, $\sigma_{X W_1 \cdot C' Z^1}$ and $\sigma_{X W_2 \cdot C' Z^1_1 W_1}$ can be computed as shown in Theorem \ref{the:condpath1}. On the other hand, $\sigma_{W_1 Y \cdot C' Z^1}$ and $\sigma_{W_2 Y \cdot C' Z^1_1 W_1}$ can be computed by applying Theorem \ref{the:collider} again with $Z=\{Z^1\}$ and $Z=Z^1_1 \cup \{W_1\}$, respectively. Specifically,
\[
\sigma_{W_1 Y \cdot C' Z^1} = - \frac{\sigma_{W_1 C' \cdot Z^1} \sigma_{C' Y \cdot Z^1}}{\sigma^2_{C' \cdot Z^1}}
\]
and
\[
\sigma_{W_2 Y \cdot C' Z^1_1 W_1} = - \frac{\sigma_{W_2 C' \cdot Z^1_1 W_1} \sigma_{C' Y \cdot Z^1_1 W_1}}{\sigma^2_{C' \cdot Z^1_1 W_1}}
\]
where the partial covariances in the numerators can be computed as shown in Theorems \ref{the:condpath1} and \ref{the:condpath2}. Putting all together, we have that
\[
\sigma_{XY \cdot C' Z_1^1 W_{1:2}} = \frac{\sigma_{X W_1 \cdot C' Z^1} \sigma_{W_1 C' \cdot Z^1} \sigma_{C' Y \cdot Z^1}}{\sigma^2_{W_1 \cdot C' Z^1} \sigma^2_{C' \cdot Z^1}} + \frac{\sigma_{X W_2 \cdot C' Z^1_1 W_1} \sigma_{W_2 C' \cdot Z^1_1 W_1} \sigma_{C' Y \cdot Z^1_1 W_1}}{\sigma^2_{W_2 \cdot C' Z^1_1 W_1} \sigma^2_{C' \cdot Z^1_1 W_1}}
\]
which confirms Equation \ref{eq:recursion} and the discussion thereof.
\end{example}

\section{Simpson's Paradox}\label{sec:Simpson}

In this section, we use the theorems developed in the previous sections to show that Simpson's paradox does not occur in singly-connected path diagrams. For path diagrams, Simpson's paradox can be described as the reversal of the sign of the regression coefficient of a random variable $Y$ on a second variable $X$ upon conditioning on a set of variables $Z$. Note that this is a generalization of the definition by \citet{Pearl2013,Pearl2014}, who restricts $Z$ to a singleton. \citet[Section 3.1]{Pearl2013} shows that the paradox can well occur for the path diagram $X \ra Y \la Z \ra X$. Note that the diagram is not singly-connected. \citet[Section 2.2]{Pearl2014} argues that the paradox does not occur for the singly-connected path diagrams $Z \la X \ra Y$, $Z \ra X \ra Y$, and $X \ra Y \la Z$, because the association between $X$ and $Y$ is collapsible over $Z$. However, the correctness of this statement depends on the definition of association. To see it, recall from \citet[Definition 6.5.1]{Pearl2009} that given a functional $g(p(x,y))$ that measures the association between two random variables $Y$ and $X$ in $p(x,y)$, we say that $g$ is collapsible over a variable $Z$ if
\[
E_z[g(p(x,y|z))] = g(p(x,y)).
\]
If we now consider the diagram $Z \la X \ra Y$ and let $g$ be the covariance between $Y$ and $X$, then collapsibility does not hold since
\begin{align*}
E_z[g(p(x,y|z))] = E_z[cov(X,Y|Z=z)] &= cov(X,Y|Z)\\
& = \sigma_{XY \cdot Z} \neq \sigma_{XY} = g(p(x,y))
\end{align*}
where the second equality follows from the fact that the conditional covariance does not depend on the value on which we condition, and the inequality is proven in Example \ref{exa:fig4and5}. Similarly for the diagram $Z \ra X \ra Y$ as shown in Example \ref{exa:exa6}. For the diagram $X \ra Y \la Z$, on the other hand, collapsibility does hold as shown in Example \ref{exa:exa6}. If we instead let $g$ be the regression coefficient of $Y$ on $X$, then collapsibility holds for the three diagrams under consideration, as shown in Examples \ref{exa:fig4and5} and \ref{exa:exa6}. Moreover, \citet{Pearl2014} does not discuss if Simpson's paradox can occur for the diagram $X \ra Y \ra Z$. Recall that Example \ref{exa:fig4and5} shows that collapsibility does not hold for this diagram, regardless of whether association means covariance or regression coefficient. Pearl does not discuss either the case of singly-connected path diagrams where $X$ and $Y$ are connected by a path of length greater than one with and without colliders, or the case where the conditioning set contains more than one variable. We fill these gaps below.

Note that Simpson's paradox concerns the sign of the regression coefficient of a random variable $Y$ on a random variable $X$ upon conditioning on a set of variables $Z$ or, equivalently, it concerns the sign of the covariance between $X$ and $Y$ upon conditioning on $Z$. Therefore, we are interested in the collapsibility of the sign rather than in the collapsibility of the regression coefficient or covariance. Corollary \ref{cor:samesign} implies that conditioning does not change the sign of the covariance for paths without colliders. The following theorem shows that this also holds for paths with colliders. Consequently, Simpson's paradox cannot occur in any singly-connected path diagram.

\begin{theorem}\label{the:Simpson}
Let $\pi_{XY}$ be open with respect to the sets of nodes $U$ and $V$. Then, $sign(\sigma_{XY \cdot U})=sign(\sigma_{XY \cdot V})$.
\end{theorem}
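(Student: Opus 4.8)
The plan is to split the argument according to whether the (unique, since the diagram is singly-connected) path $\pi_{XY}$ contains colliders, and to reduce everything to the sign-invariance already recorded in Corollary \ref{cor:samesign}. If $\pi_{XY}$ has one of the collider-free forms treated in Theorems \ref{the:condpath1} or \ref{the:condpath2}, then Corollary \ref{cor:samesign} applies verbatim and yields $sign(\sigma_{XY \cdot U}) = sign(\sigma_{XY}) = sign(\sigma_{XY \cdot V})$, so nothing more is needed. The substance lies in the case where $\pi_{XY}$ has colliders $C_1, \ldots, C_k$, ordered from $X$ to $Y$, with $C_i$ admitting opener set $\mathcal{W}_i$.

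In that case I would invoke the recursive decomposition of Equation \ref{eq:recursion}, applied once with the conditioning set $U$ and once with $V$. Each partial covariance becomes a sum, indexed by a choice of opener $O_i \in \mathcal{W}_i$ for every collider, of terms of the form
\[
(-1)^k \frac{\sigma_{X O_1 \cdot U_{O_1}} \sigma_{O_1 O_2 \cdot U_{O_1 O_2}} \cdots \sigma_{O_k Y \cdot U_{O_k}}}{\sigma^2_{O_1 \cdot U_{O_1}} \cdots \sigma^2_{O_k \cdot U_{O_k}}}.
\]
The denominators are partial variances, hence strictly positive, so the sign of each term equals $(-1)^k$ times the product of the signs of the covariance factors in the numerator. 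By construction each such factor runs along a collider-free subpath of the singly-connected diagram: from $X$ up to $C_1$ and down to $O_1$; from $O_i$ up to $C_i$, across $\pi_{XY}$ to $C_{i+1}$, and down to $O_{i+1}$; and finally from $O_k$ to $Y$. Consequently Corollary \ref{cor:samesign} lets me replace the sign of every conditional covariance by the sign of the corresponding unconditional covariance, i.e. by the product of the signs of the edge coefficients along that unique subpath.

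The heart of the proof is then a cancellation. Writing $s_X$ for the sign of the $X$-to-$C_1$ segment of $\pi_{XY}$, $s_Y$ for the sign of the $C_k$-to-$Y$ segment, $m_{i,i+1}$ for the sign of the $C_i$-to-$C_{i+1}$ segment, and $t_i(O_i)$ for the sign of the descending path $C_i \ra \cdots \ra O_i$, the product of the numerator signs telescopes as
\[
\big(s_X t_1\big)\big(t_1 m_{12} t_2\big) \cdots \big(t_{k-1} m_{k-1,k} t_k\big)\big(t_k s_Y\big) = s_X \, m_{12} \cdots m_{k-1,k}\, s_Y,
\]
since each opener sign $t_i(O_i)$ occurs in exactly the two adjacent factors indexed by $O_i$ and squares to $1$. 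Hence every term carries the same sign $(-1)^k s_X m_{12} \cdots m_{k-1,k}\, s_Y$, which depends only on $\pi_{XY}$ and neither on the chosen openers nor on whether we conditioned on $U$ or $V$. Because $\pi_{XY}$ is open with respect to both $U$ and $V$, each $\mathcal{W}_i$ is nonempty for both sets, so both sums are nonempty and carry this common term-sign with no cancellation; therefore $sign(\sigma_{XY \cdot U}) = sign(\sigma_{XY \cdot V})$.

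The main obstacle I anticipate is the bookkeeping behind this cancellation. One must verify that in the singly-connected diagram each subpath joining consecutive openers genuinely traverses the intervening colliders as non-colliders, so that the subpath is collider-free and Corollary \ref{cor:samesign} is applicable, and that each opener's descending sign is shared by precisely two adjacent covariance factors. Making this pairing precise, and confirming that the residual quantities $s_X$, $s_Y$, and $m_{i,i+1}$ are intrinsic to $\pi_{XY}$, is where the real care is needed; once that is in place, the positivity of the partial variances together with Corollary \ref{cor:samesign} finishes the argument immediately.
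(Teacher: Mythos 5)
Your proposal is correct and is in substance the paper's own argument: the paper proves the theorem via Lemma \ref{lem:Simpson} by induction on the number of colliders, using the opener-sum decomposition of Theorem \ref{the:collider} (whose unrolled form is exactly Equation \ref{eq:recursion}, which you invoke), the positivity of the partial variances, and Corollary \ref{cor:samesign} applied to the collider-free pieces. Your telescoping cancellation $t_i^2 = 1$ appears in the paper as the squared factor $\sigma^2_{C W_i \cdot Z Z_{1:i-1}^{1:i} W_{1:i-1}}$ in Case 2 of that induction, so the two write-ups differ only in bookkeeping (one-shot global expansion versus collider-by-collider recursion), not in mechanism.
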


Note that the result above is actually stronger than required to disprove Simpson's paradox, because $U$ may neither include nor be included in $V$.

It is worth mentioning that \citet[Theorem 6.6]{Ong2014} proves the theorem above by other means when $U$ and $V$ only contain descendants of the colliders in $\pi_{XY}$. Ong states that the proof can be extended to include other conditionates, as in our theorem. However, since he does not provide the details, we believe that our proof fills some gap. Moreover, Ong does not discuss the relevance of this result for disproving Simpson's paradox. Therefore, our discussion above fills some gap, too.

\section{Discussion}\label{sec:discussion}

In this work, we have extended path analysis by showing that, for a singly-connected path diagram, the partial covariance of two random variables factorizes over the nodes and edges in the path between the variables. This result applies even when the path contains colliders. We find the case where the path has no colliders particularly interesting, since then the partial covariance can be computed by multiplying the expression for the covariance given by path analysis with a product of partial variance ratios that account for the reduction of the partial variances of the variables in the path. Moreover, these results have allowed us to show that Simpson's paradox cannot occur in singly-connected path diagrams. Naturally, we would like in the future to extend our results beyond singly-connected path diagrams. Appendix E presents a first attempt in this direction.

\section*{Acknowledgments}

We thank the Reviewers for their comments, which helped us to improve our work.

\section*{Appendix A: Proofs of Sections \ref{sec:nocolliders}-\ref{sec:Simpson}}

Recall that in all the results in this appendix the path diagram is assumed to be singly-connected.

\begin{lemma}\label{lem:root}
Let $S$ be the root node in a path $\pi_{XY}$ without colliders, i.e. $A \la S \ra B$ or $S \ra B$ is a subpath of $\pi_{XY}$. Note that $S=X$ or $S=Y$ in the latter case. Let $W$ be a set of nodes such that each is connected to $Pa(S) \cup Ch(S) \cup Sp(S)$ by a path that does not contain any node in $\pi_{XY}$. Then,
\[
\sigma_{XY \cdot ZW} =  \sigma_{XY \cdot Z} \frac{\sigma^2_{S \cdot ZW}}{\sigma^2_{S \cdot Z}}.
\]
\end{lemma}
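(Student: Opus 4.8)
The plan is to reduce everything to the recursive definition of partial covariance together with the separation criterion, and then induct on $|W|$, adding one node of $W$ to the conditioning set at a time. Write $W = \{W_1, \ldots, W_k\}$ and, for the inductive step, abbreviate the current conditioning set by $Z' = Z \cup \{W_1, \ldots, W_{j-1}\}$; it then suffices to prove the single-node identity
\[
\sigma_{XY \cdot Z' W_j} = \sigma_{XY \cdot Z'} \frac{\sigma^2_{S \cdot Z'W_j}}{\sigma^2_{S \cdot Z'}},
\]
since chaining these identities telescopes to the claim (the base case $W = \emptyset$ being vacuous).

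The structural heart of the argument is three conditional independencies, all obtained from the separation criterion using single-connectedness. Because the diagram is singly-connected there is a unique path between any two nodes. Since $S$ is the root of $\pi_{XY}$, its neighbours inside $\pi_{XY}$ are children, so every node in $Pa(S) \cup Sp(S)$ is off $\pi_{XY}$; and the hypothesis forces the connecting path of each $W_i$, hence its attaching neighbour in $Ch(S)$, to be off $\pi_{XY}$ as well. Concatenating the subpath $X \la \cdots \la S$ of $\pi_{XY}$ with the off-path route from $S$ to $W_j$ therefore yields the unique $X$--$W_j$ path, and this path passes through $S$. I would then check, case by case on whether the attaching neighbour lies in $Pa(S)$, $Ch(S)$, or $Sp(S)$, that the two edges incident to $S$ on this path never form two arrowheads at $S$; hence $S$ is a non-collider on it. The same holds for the unique $Y$--$W_j$ path and, trivially, for $\pi_{XY}$ itself, which has no colliders and meets $S$ as a fork or as an endpoint. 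Consequently $\{S\} \cup Z'$ blocks each of these unique paths at the non-collider $S$, giving $X \ci W_j | S Z'$, $Y \ci W_j | S Z'$, and $X \ci Y | S Z'$.

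Next I would translate each independence into a factorization through $S$ via the recursive definition: applying it to add $S$ to the base $Z'$ and setting the left-hand side to zero by the respective independence yields
\[
\sigma_{XW_j \cdot Z'} = \frac{\sigma_{XS \cdot Z'}\,\sigma_{SW_j \cdot Z'}}{\sigma^2_{S \cdot Z'}}, \quad \sigma_{W_jY \cdot Z'} = \frac{\sigma_{SY \cdot Z'}\,\sigma_{SW_j \cdot Z'}}{\sigma^2_{S \cdot Z'}}, \quad \sigma_{XY \cdot Z'} = \frac{\sigma_{XS \cdot Z'}\,\sigma_{SY \cdot Z'}}{\sigma^2_{S \cdot Z'}}.
\]
Substituting the first two into the recursive definition of $\sigma_{XY \cdot Z'W_j}$ collapses the product of the two $\sigma_{SW_j \cdot Z'}$ factors, and the third identity lets me recognise the resulting coefficient as $\sigma_{XY \cdot Z'}$. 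What remains is the scalar identity $1 - \sigma^2_{SW_j \cdot Z'}/(\sigma^2_{W_j \cdot Z'}\sigma^2_{S \cdot Z'}) = \sigma^2_{S \cdot Z'W_j}/\sigma^2_{S \cdot Z'}$, which is precisely the recursive definition of the partial variance $\sigma^2_{S \cdot Z'W_j}$ divided by $\sigma^2_{S \cdot Z'}$. This establishes the single-node identity; throughout I assume the covariance matrix is nonsingular, so that the partial variances in the denominators are nonzero.

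The step I expect to be the main obstacle is verifying the conditional independencies as the conditioning set grows during the induction: I must ensure that inserting the earlier nodes $W_1, \ldots, W_{j-1}$ into the conditioning set cannot open a new $X$--$W_j$, $Y$--$W_j$, or $X$--$Y$ connection. Single-connectedness is what rescues this, since each of these pairs has a unique connecting path that is already blocked at the non-collider $S$, and a path blocked at a non-collider in the conditioning set stays blocked no matter what else is conditioned on. The remaining work is the purely algebraic collapse above, which is routine.
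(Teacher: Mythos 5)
Your proposal is correct and follows essentially the same route as the paper: the same three conditional independencies through $S$ (justified, as you do, by single-connectedness and the off-path hypothesis on $W$), the same algebraic collapse via the recursive definition of the partial covariance, and the same one-node-at-a-time telescoping to handle $W$ as a set. The only cosmetic difference is that the paper splits off the endpoint case $S=X$ (where your statements $X \ci W_j | S Z'$ and $X \ci Y | S Z'$ are degenerate) and proves it with the single independence $Y \ci W_j | Z' \cup X$, but your unified algebra specializes to exactly that computation because the two degenerate factorization identities hold trivially when $S=X$.
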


\begin{proof}
Assume that $W$ is a singleton. Consider first the case where $A \la S \ra B$ is a subpath of $\pi_{XY}$. Note that $X \ci W | Z \cup S$. Then,
\[
0 = \sigma_{XW \cdot ZS} = \sigma_{XW \cdot Z} - \frac{\sigma_{XS \cdot Z} \sigma_{SW \cdot Z}}{\sigma^2_{S \cdot Z}}
\]
which implies that $\sigma_{XW \cdot Z} = \delta_{XS \cdot Z} \sigma_{SW \cdot Z}$ where $\delta_{XS \cdot Z} = \sigma_{XS \cdot Z} / \sigma^2_{S \cdot Z}$. Likewise, $Y \ci W | Z \cup S$ implies that $\sigma_{YW \cdot Z} = \delta_{YS \cdot Z} \sigma_{SW \cdot Z}$ where $\delta_{YS \cdot Z} = \sigma_{YS \cdot Z} / \sigma^2_{S \cdot Z}$. Likewise, $X \ci Y | Z \cup S$ implies that
\[
0 = \sigma_{XY \cdot ZS} = \sigma_{XY \cdot Z} - \frac{\sigma_{XS \cdot Z} \sigma_{SY \cdot Z}}{\sigma^2_{S \cdot Z}}
\]
which implies that $\sigma_{XY \cdot Z} = \delta_{XS \cdot Z} \delta_{YS \cdot Z} \sigma^2_{S \cdot Z}$. Therefore,
\begin{align*}
\sigma_{XY \cdot ZW} &= \sigma_{XY \cdot Z} - \frac{\sigma_{XW \cdot Z} \sigma_{WY \cdot Z}}{\sigma^2_{W \cdot Z}}\\
& = \delta_{XS \cdot Z} \delta_{YS \cdot Z} \sigma^2_{S \cdot Z} - \frac{\delta_{XS \cdot Z} \sigma_{SW \cdot Z} \delta_{YS \cdot Z} \sigma_{SW \cdot Z}}{\sigma^2_{W \cdot Z}}\\
& = \delta_{XS \cdot Z} \delta_{YS \cdot Z} \Big( \sigma^2_{S \cdot Z} - \frac{\sigma_{SW \cdot Z} \sigma_{SW \cdot Z}}{\sigma^2_{W \cdot Z}} \Big)\\
& = \delta_{XS \cdot Z} \delta_{YS \cdot Z} \sigma^2_{S \cdot ZW} =  \sigma_{XY \cdot Z} \frac{\sigma^2_{S \cdot ZW}}{\sigma^2_{S \cdot Z}}.
\end{align*}

Now, consider the case where $S \ra B$ is a subpath of $\pi_{XY}$. Assume without loss of generality that $S=X$. Note that $Y \ci W | Z \cup X$. Then,
\[
0 = \sigma_{YW \cdot ZX} = \sigma_{YW \cdot Z} - \frac{\sigma_{YX \cdot Z} \sigma_{XW \cdot Z}}{\sigma^2_{X \cdot Z}}
\]
which implies that
\[
\sigma_{YW \cdot Z} = \frac{\sigma_{YX \cdot Z} \sigma_{XW \cdot Z}}{\sigma^2_{X \cdot Z}}.
\]
Therefore,
\begin{align*}
\sigma_{XY \cdot ZW} &= \sigma_{XY \cdot Z} - \frac{\sigma_{XW \cdot Z} \sigma_{WY \cdot Z}}{\sigma^2_{W \cdot Z}} = \sigma_{XY \cdot Z} - \frac{\sigma_{XW \cdot Z} \sigma_{YX \cdot Z} \sigma_{XW \cdot Z}}{\sigma^2_{W \cdot Z} \sigma^2_{X \cdot Z}}\\
&= \sigma_{XY \cdot Z} \Big( 1 - \frac{\sigma_{XW \cdot Z} \sigma_{XW \cdot Z}}{\sigma^2_{W \cdot Z} \sigma^2_{X \cdot Z}} \Big) = \frac{\sigma_{XY \cdot Z}}{\sigma^2_{X \cdot Z}} \Big( \sigma^2_{X \cdot Z} - \frac{\sigma_{XW \cdot Z} \sigma_{XW \cdot Z}}{\sigma^2_{W \cdot Z}} \Big)\\
&= \sigma_{XY \cdot Z} \frac{\sigma^2_{X \cdot ZW}}{\sigma^2_{X \cdot Z}}.
\end{align*}

Repeated application of the paragraphs above proves the result for when $W$ is a set. Specifically, let $W=\{W_1, \ldots, W_n\}$. Then,
\[
\sigma_{XY \cdot ZW_1} =  \sigma_{XY \cdot Z} \frac{\sigma^2_{S \cdot ZW_1}}{\sigma^2_{S \cdot Z}}
\]
by replacing $W$ with $W_1$ in the paragraphs above. Likewise, 
\[
\sigma_{XY \cdot ZW_1W_2} =  \sigma_{XY \cdot ZW_1} \frac{\sigma^2_{S \cdot ZW_1W_2}}{\sigma^2_{S \cdot ZW_1}}
\]
by replacing $Z$ and $W$ with $Z \cup \{W_1\}$ and $W_2$, respectively, in the paragraphs above. These last two results imply that 
\[
\sigma_{XY \cdot ZW_1W_2} =  \sigma_{XY \cdot Z} \frac{\sigma^2_{S \cdot ZW_1W_2}}{\sigma^2_{S \cdot Z}}.
\]
Continuing with this process for $W_3, \ldots, W_n$ yields the desired result.
\end{proof}

\begin{lemma}\label{lem:nonrootpasp}
Let $S$ be a non-root node in a path $\pi_{XY}$ without colliders, i.e. $A \oa S \ra B$ or $A \oa S$ is a subpath of $\pi_{XY}$. Note that $S=X$ or $S=Y$ in the latter case. Let $W$ be a set of nodes such that each is connected to $Pa(S) \cup Sp(S)$ by a path that does not contain any node in $\pi_{XY}$. Then,
\[
\sigma_{XY \cdot ZW} =  \sigma_{XY \cdot Z}
\]
if $Z$ contains no descendants of $S$.
\end{lemma}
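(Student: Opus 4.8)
The plan is to follow the template of Lemma~\ref{lem:root}: prove the statement first for a singleton $W=\{w\}$ and then lift it to arbitrary finite $W$ by induction. For the singleton case I would start from the recursive definition of the partial covariance,
\[
\sigma_{XY \cdot Zw} = \sigma_{XY \cdot Z} - \frac{\sigma_{Xw \cdot Z}\,\sigma_{wY \cdot Z}}{\sigma^2_{w \cdot Z}},
\]
so that everything reduces to showing that the correction term is zero, i.e. that $\sigma_{Xw \cdot Z}=0$ or $\sigma_{wY \cdot Z}=0$. I would get one of these from the separation criterion, by exhibiting an endpoint of $\pi_{XY}$ that is separated from $w$ given $Z$.

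The separation argument is the core of the proof. Since $S$ is a non-root, non-collider node, it appears on $\pi_{XY}$ as $A \oa S \ra B$ (or $A \oa S$ when $S$ is an endpoint), so its two path-neighbours $A$ and $B$ lie on opposite sides of $S$, with an arrowhead into $S$ coming from $A$ and a tail at $S$ pointing to $B$. Because the path joining $w$ to $Pa(S)\cup Sp(S)$ avoids every node of $\pi_{XY}$, the parent or spouse $P$ it reaches is itself off $\pi_{XY}$, and the edge between them is $P \ra S$ or $P \aa S$, again carrying an arrowhead into $S$. As the diagram is singly-connected, $P$ can meet $\pi_{XY}$ only at $S$, so the unique path from $w$ to each endpoint of $\pi_{XY}$ enters $\pi_{XY}$ at $S$ and then continues along it. On the path heading toward the endpoint on $A$'s side, the two edges incident to $S$ both point into $S$ (one from $P$, one from $A$), so $S$ is a collider on that path; since $Z$ contains no descendant of $S$ and, in particular, $S\notin Z$, this collider stays closed and the path is blocked. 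Hence $w$ is separated from that endpoint given $Z$, the associated partial covariance vanishes, and with it the correction term, giving $\sigma_{XY \cdot Zw}=\sigma_{XY \cdot Z}$.

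To handle a set $W=\{w_1,\dots,w_k\}$ I would invoke the singleton result with the growing conditioning sets $Z,\;Z\cup\{w_1\},\;Z\cup\{w_1,w_2\},\dots$, exactly as in the closing part of Lemma~\ref{lem:root}. The only hypothesis to re-verify at each step is that the enlarged conditioning set still contains no descendant of $S$, and this comes down to checking that no $w_j$ is a descendant of $S$: were there a directed path $S \ra \cdots \ra w_j$, it would combine with the route from $w_j$ to $P$ and the edge between $P$ and $S$ to produce a second path between $S$ and $w_j$, contradicting single-connectedness.

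The hard part will be the topological bookkeeping in the second paragraph---confirming that $P$ is genuinely off $\pi_{XY}$, that single-connectedness funnels the unique $w$-to-endpoint path through $S$, and that the local pattern $A \oa S$, $S \ra B$ makes $S$ a collider precisely on $A$'s side. Once that is secured, no separate treatment is needed for the endpoint cases $S=X$ or $S=Y$, nor for the bidirected-root configuration of Theorem~\ref{the:condpath2}, since only the arrowhead into $S$ from $A$ and the tail at $S$ toward $B$ are used.
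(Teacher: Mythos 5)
Your proposal matches the paper's proof: the same singleton-then-iterate structure, with the correction term in the recursive partial-covariance formula killed by the separation $X \ci W | Z$ or $W \ci Y | Z$, which holds because the unique path from $W$ to the endpoint on the arrowhead side of $S$ has $S$ as a collider that stays closed when $Z$ contains no descendants of $S$. The paper states this separation without elaboration and handles the set case by pointing to Lemma \ref{lem:root}; your spelled-out blocking argument and your check that no $w_j$ can be a descendant of $S$ (via single-connectedness) merely make explicit what the paper leaves implicit.
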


\begin{proof}
Assume that $W$ is a singleton. Then,
\begin{align*}
\sigma_{XY \cdot ZW} &= \sigma_{XY \cdot Z} - \frac{\sigma_{XW \cdot Z} \sigma_{WY \cdot Z}}{\sigma^2_{W \cdot Z}}
\end{align*}
which implies that $\sigma_{XY \cdot ZW} = \sigma_{XY \cdot Z}$ because $\sigma_{XW \cdot Z}=0$ or $\sigma_{WY \cdot Z}=0$ since $X \ci W | Z$ or $W \ci Y | Z$.

Repeated application of the paragraph above proves the result for when $W$ is a set. Specifically, let $W=\{W_1, \ldots, W_n\}$. Then,
\[
\sigma_{XY \cdot ZW_1} =  \sigma_{XY \cdot Z}
\]
by replacing $W$ with $W_1$ in the paragraph above. Likewise, 
\[
\sigma_{XY \cdot ZW_1W_2} =  \sigma_{XY \cdot ZW_1}
\]
by replacing $Z$ and $W$ with $Z \cup \{W_1\}$ and $W_2$, respectively, in the paragraph above. These last two results imply that 
\[
\sigma_{XY \cdot ZW_1W_2} =  \sigma_{XY \cdot Z}.
\]
Continuing with this process for $W_3, \ldots, W_n$ yields the desired result.
\end{proof}

\begin{lemma}\label{lem:nonrootch}
Let $S$ be a non-root node in a path $\pi_{XY}$ without colliders, i.e. $A \oa S \ra B$ or $A \oa S$ is a subpath of $\pi_{XY}$. Note that $S=X$ or $S=Y$ in the latter case. Let $W$ be a set of nodes such that each is connected to $Ch(S)$ by a path that does not contain any node in $\pi_{XY}$. Then,
\[
\sigma_{XY \cdot ZW} =  \sigma_{XY \cdot Z} \frac{\sigma^2_{S \cdot ZW}}{\sigma^2_{S \cdot Z}}.
\]
\end{lemma}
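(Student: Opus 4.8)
The plan is to mirror the proof of Lemma \ref{lem:root}, since the required algebraic identities turn out to be exactly the same; the only real work is to re-establish the underlying conditional independencies for a \emph{non-root} $S$ whose extra neighbour lies in $Ch(S)$. As in Lemma \ref{lem:root}, I would first reduce to the case where $W$ is a singleton: once the singleton statement is proved, the set case follows by applying it to $W_1$, then to $W_2$ with $Z$ replaced by $Z \cup \{W_1\}$, and so on. This reduction is legitimate because Lemma \ref{lem:nonrootch} imposes no condition on $Z$, and because the separations established below are hard blocks that survive any enlargement of the conditioning set.

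Assume then that $W$ is a single node, connected to a child $D$ of $S$ (with $S \ra D$ and $D \notin \pi_{XY}$) by a path avoiding $\pi_{XY}$. Consider first the internal case $A \oa S \ra B$. I would establish the three separations $X \ci W | ZS$, $Y \ci W | ZS$, and $X \ci Y | ZS$. Because the diagram is singly-connected, each concerns a single path, and each of those paths passes through $S$: the $X$--$W$ path enters $S$ with an arrowhead (from $A \oa S$) and leaves through the tail $S \ra D$, so $S$ is a non-collider on it; the $Y$--$W$ path meets $S$ as the fork $B \la S \ra D$; and $\pi_{XY}$ itself meets $S$ as the chain $A \oa S \ra B$. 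In every case $S$ is a non-collider, so conditioning on $S$ closes the path irrespective of the rest of the conditioning set, yielding all three independencies. Writing $\delta_{XS \cdot Z} = \sigma_{XS \cdot Z}/\sigma^2_{S \cdot Z}$ and $\delta_{YS \cdot Z} = \sigma_{YS \cdot Z}/\sigma^2_{S \cdot Z}$, the first two give $\sigma_{XW \cdot Z} = \delta_{XS \cdot Z}\sigma_{SW \cdot Z}$ and $\sigma_{YW \cdot Z} = \delta_{YS \cdot Z}\sigma_{SW \cdot Z}$, while the third gives $\sigma_{XY \cdot Z} = \delta_{XS \cdot Z}\delta_{YS \cdot Z}\sigma^2_{S \cdot Z}$; substituting these into the recursive definition of $\sigma_{XY \cdot ZW}$ reproduces the computation of Lemma \ref{lem:root} and gives $\sigma_{XY \cdot ZW} = \sigma_{XY \cdot Z}\,\sigma^2_{S \cdot ZW}/\sigma^2_{S \cdot Z}$.

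For the endpoint case $A \oa S$ with, say, $S = Y$, only $X \ci W | ZS$ is needed: the unique $X$--$W$ path reaches $S = Y$ as the chain $A \oa Y \ra D$, a non-collider, so conditioning on $Y$ blocks it, whence $\sigma_{XW \cdot Z} = \sigma_{XY \cdot Z}\sigma_{YW \cdot Z}/\sigma^2_{Y \cdot Z}$, and the algebra of the second half of Lemma \ref{lem:root} (with the roles of $X$ and $Y$ interchanged) delivers the claim. The main point to get right --- and the place where this lemma genuinely differs from the parent/spouse case of Lemma \ref{lem:nonrootpasp} --- is the orientation at $S$: because the extra neighbour $D$ is a \emph{child} of $S$, the edge $S \ra D$ contributes a tail at $S$, so $S$ remains a non-collider on the $X$--$W$ path and conditioning on it blocks rather than opens that path; had $D$ been a parent or spouse of $S$, the two arrowheads would make $S$ a collider and $X \ci W | ZS$ would fail. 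Verifying these d-separation statements is thus the only real obstacle; once they are in hand, the conclusion is immediate from the algebra already carried out for Lemma \ref{lem:root}.
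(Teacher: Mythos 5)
Your proposal is correct and follows essentially the same route as the paper's own proof: reduce to a singleton $W$, establish $X \ci W | Z \cup S$, $Y \ci W | Z \cup S$ and $X \ci Y | Z \cup S$ in the case $A \oa S \ra B$ (respectively $X \ci W | Z \cup Y$ in the endpoint case $A \oa S$ with $S=Y$), plug the resulting identities into the recursive definition of the partial covariance exactly as in Lemma \ref{lem:root}, and handle sets by iterating with $Z$ enlarged at each step. The only difference is cosmetic: you spell out the d-separation arguments (that $S$ is a non-collider on each relevant unique path, so the blocks survive any conditioning set) which the paper asserts without justification, and this added detail is sound.
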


\begin{proof}
Assume that $W$ is a singleton. Consider first the case where $A \oa S \ra B$ is a subpath of $\pi_{XY}$. Note that $X \ci W | Z \cup S$. Then,
\[
0 = \sigma_{XW \cdot ZS} = \sigma_{XW \cdot Z} - \frac{\sigma_{XS \cdot Z} \sigma_{SW \cdot Z}}{\sigma^2_{S \cdot Z}}
\]
which implies that $\sigma_{XW \cdot Z} = \sigma_{XS \cdot Z} \delta_{SW \cdot Z}$ where $\delta_{SW \cdot Z} = \sigma_{SW \cdot Z} / \sigma^2_{S \cdot Z}$. Note also that $Y \ci W | Z \cup S$. Then,
\[
0 = \sigma_{YW \cdot ZS} = \sigma_{YW \cdot Z} - \frac{\sigma_{YS \cdot Z} \sigma_{SW \cdot Z}}{\sigma^2_{S \cdot Z}}
\]
which implies that $\sigma_{YW \cdot Z} = \delta_{YS \cdot Z} \sigma_{SW \cdot Z}$ where $\delta_{YS \cdot Z} = \sigma_{YS \cdot Z} / \sigma^2_{S \cdot Z}$. Likewise, $X \ci Y | Z \cup S$ implies that
\[
0 = \sigma_{XY \cdot ZS} = \sigma_{XY \cdot Z} - \frac{\sigma_{XS \cdot Z} \sigma_{SY \cdot Z}}{\sigma^2_{S \cdot Z}}
\]
which implies that $\sigma_{XY \cdot Z} = \sigma_{XS \cdot Z} \delta_{YS \cdot Z}$. Therefore,
\begin{align*}
\sigma_{XY \cdot ZW} &= \sigma_{XY \cdot Z} - \frac{\sigma_{XW \cdot Z} \sigma_{WY \cdot Z}}{\sigma^2_{W \cdot Z}} = \sigma_{XY \cdot Z} - \frac{\sigma_{XS \cdot Z} \delta_{SW \cdot Z} \delta_{YS \cdot Z} \sigma_{SW \cdot Z}}{\sigma^2_{W \cdot Z}}\\
&= \sigma_{XY \cdot Z} \Big(1 - \frac{\delta_{SW \cdot Z} \sigma_{SW \cdot Z}}{\sigma^2_{W \cdot Z}} \Big) = \frac{\sigma_{XY \cdot Z}}{\sigma^2_{S \cdot Z}} \Big(\sigma^2_{S \cdot Z} - \frac{\sigma^2_{S \cdot Z} \delta_{SW \cdot Z} \sigma_{SW \cdot Z}}{\sigma^2_{W \cdot Z}} \Big)\\
& = \sigma_{XY \cdot Z} \frac{\sigma^2_{S \cdot ZW}}{\sigma^2_{S \cdot Z}}.
\end{align*}

Now, consider the case where $A \oa S$ is a subpath of $\pi_{XY}$. Assume without loss of generality that $S=Y$. Note that $X \ci W | Z \cup Y$. Then,
\[
0 = \sigma_{XW \cdot ZY} = \sigma_{XW \cdot Z} - \frac{\sigma_{XY \cdot Z} \sigma_{YW \cdot Z}}{\sigma^2_{Y \cdot Z}}
\]
which implies that
\[
\sigma_{XW \cdot Z} = \frac{\sigma_{XY \cdot Z} \sigma_{YW \cdot Z}}{\sigma^2_{Y \cdot Z}}.
\]
Therefore,
\begin{align*}
\sigma_{XY \cdot ZW} &= \sigma_{XY \cdot Z} - \frac{\sigma_{XW \cdot Z} \sigma_{WY \cdot Z}}{\sigma^2_{W \cdot Z}} = \sigma_{XY \cdot Z} - \frac{\sigma_{XY \cdot Z} \sigma_{YW \cdot Z} \sigma_{WY \cdot Z}}{\sigma^2_{Y \cdot Z} \sigma^2_{W \cdot Z}}\\
&= \sigma_{XY \cdot Z} \Big( 1 - \frac{\sigma_{YW \cdot Z} \sigma_{WY \cdot Z}}{\sigma^2_{Y \cdot Z} \sigma^2_{W \cdot Z}} \Big) = \frac{\sigma_{XY \cdot Z}}{\sigma^2_{Y \cdot Z}} \Big( \sigma^2_{Y \cdot Z} - \frac{\sigma_{YW \cdot Z} \sigma_{WY \cdot Z}}{\sigma^2_{W \cdot Z}} \Big)\\
& = \sigma_{XY \cdot Z} \frac{\sigma^2_{Y \cdot ZW}}{\sigma^2_{Y \cdot Z}}.
\end{align*}

Repeated application of the paragraphs above proves the result for when $W$ is a set. Specifically, let $W=\{W_1, \ldots, W_n\}$. Then,
\[
\sigma_{XY \cdot ZW_1} =  \sigma_{XY \cdot Z} \frac{\sigma^2_{S \cdot ZW_1}}{\sigma^2_{S \cdot Z}}
\]
by replacing $W$ with $W_1$ in the paragraphs above. Likewise, 
\[
\sigma_{XY \cdot ZW_1W_2} =  \sigma_{XY \cdot ZW_1} \frac{\sigma^2_{S \cdot ZW_1W_2}}{\sigma^2_{S \cdot ZW_1}}
\]
by replacing $Z$ and $W$ with $Z \cup \{W_1\}$ and $W_2$, respectively, in the paragraphs above. These last two results imply that 
\[
\sigma_{XY \cdot ZW_1W_2} =  \sigma_{XY \cdot Z} \frac{\sigma^2_{S \cdot ZW_1W_2}}{\sigma^2_{S \cdot Z}}.
\]
Continuing with this process for $W_3, \ldots, W_n$ yields the desired result.
\end{proof}

\begin{proof}[Proof of Theorem \ref{the:condpath1}]
First, note that
\[
\sigma_{X_m X_{m+n} \cdot Z_1^1} = \sigma_{X_m X_{m+n}} \frac{\sigma^2_{X_1 \cdot Z_1^1}}{\sigma^2_{X_1}}
\]
by Lemma \ref{lem:root}. Then, note that
\[
\sigma_{X_m X_{m+n} \cdot Z_1^1 Z^2} = \sigma_{X_m X_{m+n} \cdot Z_1^1}
\]
by Lemma \ref{lem:nonrootpasp}. Finally, note that
\[
\sigma_{X_m X_{m+n} \cdot Z_1^1 Z^2 Z_2} = \sigma_{X_m X_{m+n} \cdot Z_1^1 Z^2} \frac{\sigma^2_{X_1 \cdot Z_1^1 Z^2 Z_2}}{\sigma^2_{X_1 \cdot Z_1^1 Z^2}}
\]
by Lemma \ref{lem:nonrootch}. Continuing with this process for the rest of the nodes yields the desired result.
\end{proof}

\begin{proof}[Proof of Theorem \ref{the:condpath2}]
First, note that
\[
\sigma_{X_m X_{m+n} \cdot Z^1} = \sigma_{X_m X_{m+n}}
\]
by Lemma \ref{lem:nonrootpasp}. Then, note that
\[
\sigma_{X_m X_{m+n} \cdot Z^1 Z_1} = \sigma_{X_m X_{m+n} \cdot Z^1} \frac{\sigma^2_{X_1 \cdot Z^1 Z_1}}{\sigma^2_{X_1 \cdot Z^1}}
\]
by Lemma \ref{lem:nonrootch}. Continuing with this process for the rest of the nodes yields the desired result.
\end{proof}

\begin{proof}[Proof of Theorem \ref{the:collider}]
First, note that
\begin{align*}
\sigma_{XY \cdot Z Z_{1:n}^{1:n} W_{1:n}} &= \sigma_{XY \cdot Z Z_{1:n-1}^{1:n} W_{1:n}}\\
&= \sigma_{XY \cdot Z Z_{1:n-1}^{1:n} W_{1:n-1}} - \frac{\sigma_{X W_n \cdot Z Z_{1:n-1}^{1:n} W_{1:n-1}} \sigma_{W_n Y \cdot Z Z_{1:n-1}^{1:n} W_{1:n-1}}}{\sigma^2_{W_n \cdot Z Z_{1:n-1}^{1:n} W_{1:n-1}}}\\
&= \sigma_{XY \cdot Z Z_{1:n-1}^{1:n-1} W_{1:n-1}} - \frac{\sigma_{X W_n \cdot Z Z_{1:n-1}^{1:n} W_{1:n-1}} \sigma_{W_n Y \cdot Z Z_{1:n-1}^{1:n} W_{1:n-1}}}{\sigma^2_{W_n \cdot Z Z_{1:n-1}^{1:n} W_{1:n-1}}}
\end{align*}
because $X \cup Y \ci Z_n | Z \cup Z_{1:n-1}^{1:n} \cup W_{1:n}$ and $X \cup Y \ci Z^n | Z \cup Z_{1:n-1}^{1:n-1} \cup W_{1:n-1}$. Then, the theorem follows by recursively applying the paragraph above to $\sigma_{XY \cdot Z Z_{1:n-1}^{1:n-1} W_{1:n-1}}$ until $n-1=0$, in which case $\sigma_{XY \cdot Z Z_{1:n-1}^{1:n-1} W_{1:n-1}} = \sigma_{XY \cdot Z} = 0$ because $X \ci Y | Z$.
\end{proof}

\begin{lemma}\label{lem:Simpson}
Let $\pi_{XY}$ be a path that is open with respect to a set of nodes $U$. Then, the sign of $\sigma_{XY \cdot U}$ does not depend on $U$, i.e. $sign(\sigma_{XY})=sign(\sigma_{XY \cdot U})$.
\end{lemma}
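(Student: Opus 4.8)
The plan is to split the argument according to whether $\pi_{XY}$ contains colliders. When $\pi_{XY}$ has no colliders, the claim is immediate from Corollary \ref{cor:samesign}: Theorems \ref{the:condpath1} and \ref{the:condpath2} write $\sigma_{XY \cdot U}$ as $\sigma_{XY}$ times a product of partial-variance ratios, each of which is positive, so $sign(\sigma_{XY \cdot U}) = sign(\sigma_{XY})$ no matter what $U$ is. The substance of the lemma is therefore the case where $\pi_{XY}$ has colliders $C_1, \ldots, C_k$ with $k \geq 1$, and this is where I would concentrate the work.

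For the collider case I would invoke the closed-form decomposition in Equation \ref{eq:recursion}. Fixing a set $U$ that opens $\pi_{XY}$, that equation writes $\sigma_{XY \cdot U}$ as $(-1)^k$ times a sum over all tuples of openers $(O_1, \ldots, O_k)$, one opener $O_i$ per collider $C_i$, where each summand is a product of partial covariances along the collider-free segments $\pi_{X O_1}, \pi_{O_1 O_2}, \ldots, \pi_{O_{k-1} O_k}, \pi_{O_k Y}$ divided by partial variances. The denominators are variances and hence positive, so the sign of each summand is $(-1)^k$ times the product of the signs of the segment covariances. Since each segment is collider-free, by the construction of Theorem \ref{the:collider} together with Corollary \ref{cor:samesign} the sign of each segment covariance is independent of its conditioning set and equals the sign of the corresponding unconditioned covariance, which by path analysis is the product of the signs of the path coefficients and error covariances along that segment.

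The key step, and the one I expect to be the main obstacle, is to show that this product of segment signs is the same for every opener tuple. Writing $O_i$ as the endpoint of the directed path $C_i \ra \cdots \ra O_i$, let $s_i$ be the sign of the product of its coefficients, and let $t_0, \ldots, t_k$ be the signs of the products of coefficients along the fixed inter-collider pieces of $\pi_{XY}$ (from $X$ to $C_1$, from $C_1$ to $C_2$, and so on up to from $C_k$ to $Y$). The segment $\pi_{O_{i-1} O_i}$ traverses the directed path to $O_{i-1}$ upward and the directed path to $O_i$ downward, giving $sign(\sigma_{O_{i-1} O_i}) = s_{i-1} t_{i-1} s_i$, with the endpoint conventions $sign(\sigma_{X O_1}) = t_0 s_1$ and $sign(\sigma_{O_k Y}) = s_k t_k$. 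Multiplying over all segments, each $s_i$ occurs exactly twice, once in the segment ending at $O_i$ and once in the segment leaving $O_i$, so $s_i^2 = 1$ and every opener-dependent factor cancels, leaving the product of segment signs equal to $\prod_{i=0}^{k} t_i$. This depends only on $\pi_{XY}$, so every summand carries the same sign $(-1)^k \prod_{i=0}^k t_i$, independent both of the opener tuple and of $U$. The bookkeeping of this shared-path cancellation is the delicate part, since it requires tracking how each opener's directed path is split between two adjacent segments.

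To finish, I would note that each segment is open with respect to its conditioning set by the construction of Theorem \ref{the:collider}, so each segment covariance is nonzero and hence each summand is nonzero. A sum of nonzero terms all sharing the sign $(-1)^k \prod_{i=0}^k t_i$ is itself nonzero with that sign, so $sign(\sigma_{XY \cdot U}) = (-1)^k \prod_{i=0}^k t_i$, a quantity determined entirely by $\pi_{XY}$. This shows the sign is independent of $U$, as required, with the only nonroutine ingredient being the cancellation argument of the previous paragraph.
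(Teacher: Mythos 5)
Your proof is correct, but for the collider case it takes a genuinely different route from the paper's. The paper argues by induction on the number of colliders: it fixes a single collider $C$, works only with the one-level sum of Theorem \ref{the:collider} (Equation \ref{eq:collider}), and, when $C$ is not itself an opener, uses the conditional independences $X \ci W_i \,|\, Z \cup Z_{1:i-1}^{1:i} \cup W_{1:i-1} \cup C$ and its analogue for $Y$ to rewrite the $i$-th term as $\sigma_{XC \cdot M}\,\sigma_{CY \cdot M}\,\sigma^2_{CW_i \cdot M}/(\sigma^2_{W_i \cdot M}\,\sigma^4_{C \cdot M})$, so that every term manifestly carries the sign of $\sigma_{XC \cdot M}\sigma_{CY \cdot M}$, which is conditioning-set independent by Corollary \ref{cor:samesign} in the base case and by the induction hypothesis applied to $\pi_{XC}$ and $\pi_{CY}$ when these subpaths still contain colliders; no bookkeeping of edge signs is ever needed. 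You instead take the fully expanded Equation \ref{eq:recursion}, evaluate the sign of each opener-tuple summand by path analysis on the collider-free segments, and show that each opener-path sign $s_i$ occurs exactly twice and cancels, leaving the common sign $(-1)^k \prod_{i=0}^k t_i$. Your cancellation computation is right (the colliders are non-colliders inside the concatenated segments, so the segments are indeed collider-free and Corollary \ref{cor:samesign} applies, which also lets you avoid induction altogether), and it buys something the paper's proof does not: an explicit closed-form for the sign of $\sigma_{XY \cdot U}$ in terms of the edge signs of $\pi_{XY}$ alone, with the opener paths dropping out --- a strictly stronger conclusion than the lemma. What the paper's route buys in exchange is self-containedness: it rests only on the precisely stated Theorem \ref{the:collider}, whereas Equation \ref{eq:recursion} is presented in the paper only semi-formally (``for some sets of nodes $U_{O_1}, \ldots$''), so your appeal to Corollary \ref{cor:samesign} tacitly assumes each segment is open with respect to those unspecified conditioning sets and that they decompose as required by Theorems \ref{the:condpath1} and \ref{the:condpath2} --- true by construction, but a step you would need to spell out to match the rigor of the paper's inductive argument. (Your final nonvanishing claim rests on the same tacit genericity of the path coefficients that the paper also assumes implicitly.)
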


\begin{proof}
If $\pi_{XY}$ has no colliders, then the result follows from Corollary \ref{cor:samesign}. Otherwise, consider any collider $C$ in $\pi_{XY}$, and split $U$ into $Z \cup Z_{1:n}^{1:n} \cup W_{1:n}$ as indicated in Theorem \ref{the:collider}. We prove the result by induction over the number of colliders in $\pi_{XY}$. First, assume that $C$ is the only collider in $\pi_{XY}$. Consider the following two cases.
\begin{itemize}
\item[Case 1.] Suppose that $C$ is an opener for the collider $C$. Then, $C$ is the only opener by definition and, thus, the summation in Equation \ref{eq:collider} reduces to
\[
\frac{\sigma_{X W_1 \cdot Z Z^{1}} \sigma_{W_1 Y \cdot Z Z^{1}}}{\sigma^2_{W_1 \cdot Z Z^{1}}}
\]
with $W_1=C$.

\item[Case 2.] Suppose that $C$ is not an opener for the collider $C$. Then, note that $X \ci W_i | Z \cup Z_{1:i-1}^{1:i} \cup W_{1:i-1} \cup C$ for all $i$. Then,
\[
0 = \sigma_{X W_i \cdot Z Z_{1:i-1}^{1:i} W_{1:i-1} C} = \sigma_{X W_i \cdot Z Z_{1:i-1}^{1:i} W_{1:i-1}} - \frac{\sigma_{X C \cdot Z Z_{1:i-1}^{1:i} W_{1:i-1}} \sigma_{C W_i \cdot Z Z_{1:i-1}^{1:i} W_{1:i-1}}}{\sigma^2_{C \cdot Z Z_{1:i-1}^{1:i} W_{1:i-1}}}
\]
and thus
\[
\sigma_{X W_i \cdot Z Z_{1:i-1}^{1:i} W_{1:i-1}} = \frac{\sigma_{X C \cdot Z Z_{1:i-1}^{1:i} W_{1:i-1}} \sigma_{C W_i \cdot Z Z_{1:i-1}^{1:i} W_{1:i-1}}}{\sigma^2_{C \cdot Z Z_{1:i-1}^{1:i} W_{1:i-1}}}.
\]
Likewise for $\sigma_{W_i Y \cdot Z Z_{1:i-1}^{1:i} W_{1:i-1}}$. Then, each term in the summation in Equation \ref{eq:collider} can be rewritten as
\[
\frac{\sigma_{X C \cdot Z Z_{1:i-1}^{1:i} W_{1:i-1}} \sigma_{C Y \cdot Z Z_{1:i-1}^{1:i} W_{1:i-1}} \sigma^2_{C W_i \cdot Z Z_{1:i-1}^{1:i} W_{1:i-1}}}{\sigma^2_{W_i \cdot Z Z_{1:i-1}^{1:i} W_{1:i-1}} \sigma^4_{C \cdot Z Z_{1:i-1}^{1:i} W_{1:i-1}}}.
\]
\end{itemize}
In both cases above, $\pi_{X C}$ and $\pi_{Y C}$ have no colliders and, thus, the signs of $\sigma_{X C \cdot M}$ and $\sigma_{C Y \cdot N}$ do not depend on $M$ and $N$ by Corollary \ref{cor:samesign} for any sets of nodes $M$ and $N$. Therefore, the sign of $\sigma_{XY \cdot Z Z_{1:n}^{1:n} W_{1:n}}$ is the same in both cases above and, moreover, it does not depend on $Z \cup Z_{1:n}^{1:n} \cup W_{1:n}$.

After proving above that the result holds for any path with zero or one collider, we now assume as induction hypothesis that the result holds for any path with fewer than $k$ colliders. To prove it for $k$ colliders, we simply let $C$ be any collider in $\pi_{XY}$ and consider the same two cases as above. Note that $\pi_{X C}$ and $\pi_{C Y}$ may now have colliders. So, Corollary \ref{cor:samesign} does not apply. However, since $\pi_{X C}$ and $\pi_{C Y}$ have fewer than $k$ colliders, the induction hypothesis does apply, which leads to the same conclusions as before.
\end{proof}

\begin{proof}[Proof of Theorem \ref{the:Simpson}]
It follows from Lemma \ref{lem:Simpson}.
\end{proof}

\section*{Appendix B: Suboptimal Decision Making I}

In this appendix, we show that the bias introduced by conditioning on a child of the effect (recall Example \ref{exa:fig4and5}) may lead to suboptimal decision making. We do so with the help of the following fictitious but, in our opinion, realistic scenario. Doctor 1 and Doctor 2 both treat a certain disease by administering approximately 5 units of drug $X$, i.e. $X \sim \mathcal{N}(5,\sigma_X)$. The doctors use different methods to administer the drug, which we suspect affects the effectiveness of the drug. The effectiveness of the drug is assessed by measuring the abundance of Y in blood, which is determined by $X$, i.e. $Y = \alpha_i X + \epsilon_Y$ for Doctor $i$ and $\epsilon_Y \sim \mathcal{N}(0,\sigma_Y)$. The higher the value of $Y$ the higher the effectiveness of the treatment. Moreover, the doctors also monitor the abundances of $Z$ and $W$ in blood, which are determined by respectively $X$ and $Y$, specifically $Z = X + \epsilon_Z$ and $ W = Y + \epsilon_W$ for both doctors and $\epsilon_Z \sim \mathcal{N}(0,\sigma_Z)$ and $\epsilon_W \sim \mathcal{N}(0,\sigma_W)$. The doctors divide the treatments into ordinary and extraordinary. Specifically, Doctor 1 declares the treatment ordinary when $4 < Z < 6$, and Doctor 2 when $4 < W < 6$. The doctors share with us data only about ordinary treatments. They believe that extraordinary treatments may lead to new findings about the disease at hand and, thus, they are not willing to share them as of today.

The problem above can be rephrased as follows. We want to estimate $\alpha_1$ in the following path diagram (Doctor 1) from a sample of the subpopulation satisfying $4 < Z < 6$:
\begin{center}
\begin{tikzpicture}[inner sep=1mm]
\node at (0,0) (X) {$X$};
\node at (0,-1.2) (Z) {$Z$};
\node at (1.5,0) (Y) {$Y$};
\path[->] (X) edge node[above] {$\alpha_1$} (Y);
\path[->] (X) edge node[right] {$1$} (Z);
\end{tikzpicture}
\end{center}
We also want to estimate $\alpha_2$ in the following path diagram (Doctor 2) from a sample of the subpopulation satisfying $4 < W < 6$:
\begin{center}
\begin{tikzpicture}[inner sep=1mm]
\node at (0,0) (X) {$X$};
\node at (1.5,-1.2) (Z) {$W$};
\node at (1.5,0) (Y) {$Y$};
\path[->] (X) edge node[above] {$\alpha_2$} (Y);
\path[->] (Y) edge node[right] {$1$} (Z);
\end{tikzpicture}
\end{center}
As discussed in Example \ref{exa:fig4and5}, the estimate of $\alpha_1$ will be unbiased, whereas the estimate of $\alpha_2$ will be biased. This may make us recommend the suboptimal doctor to future patients. We illustrate this below with some experiments.

\begin{figure}
\includegraphics[scale=.4]{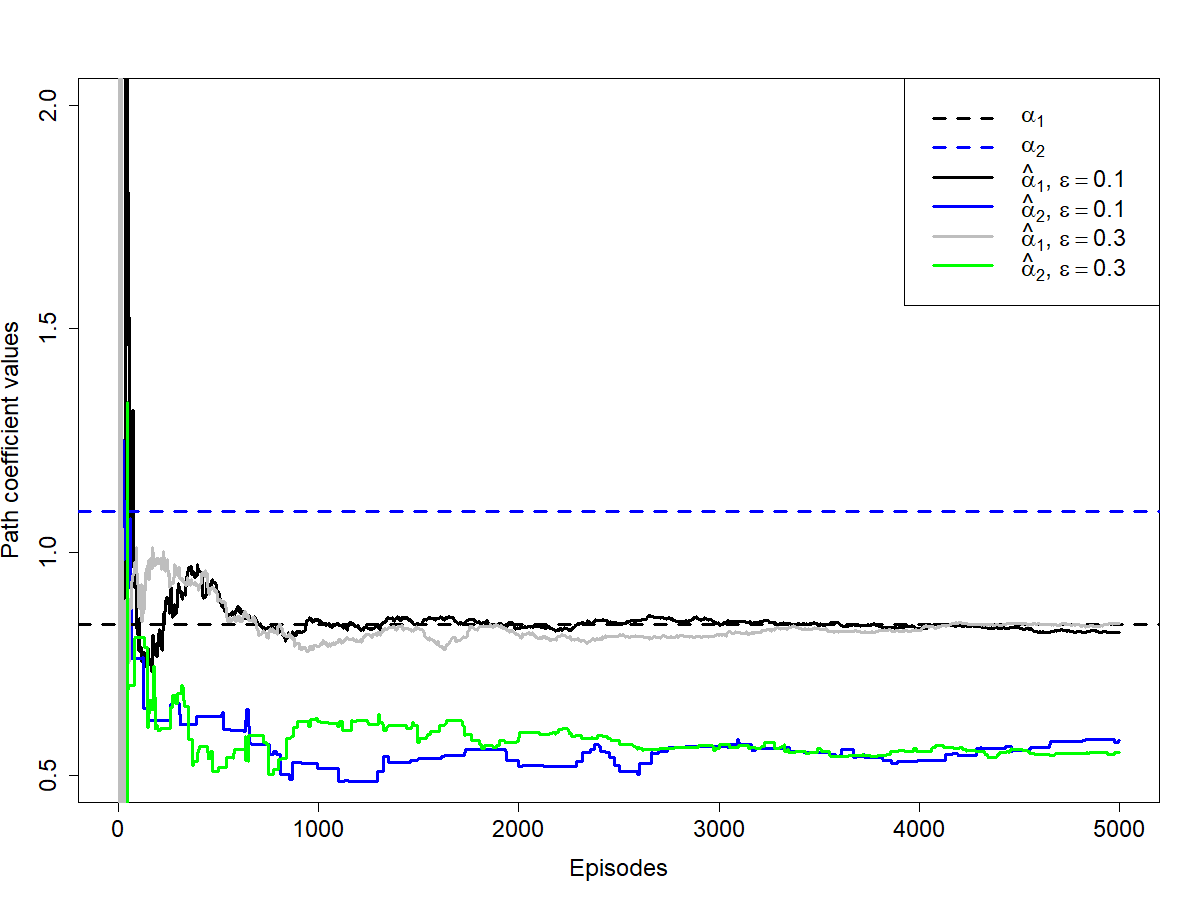}
\includegraphics[scale=.4]{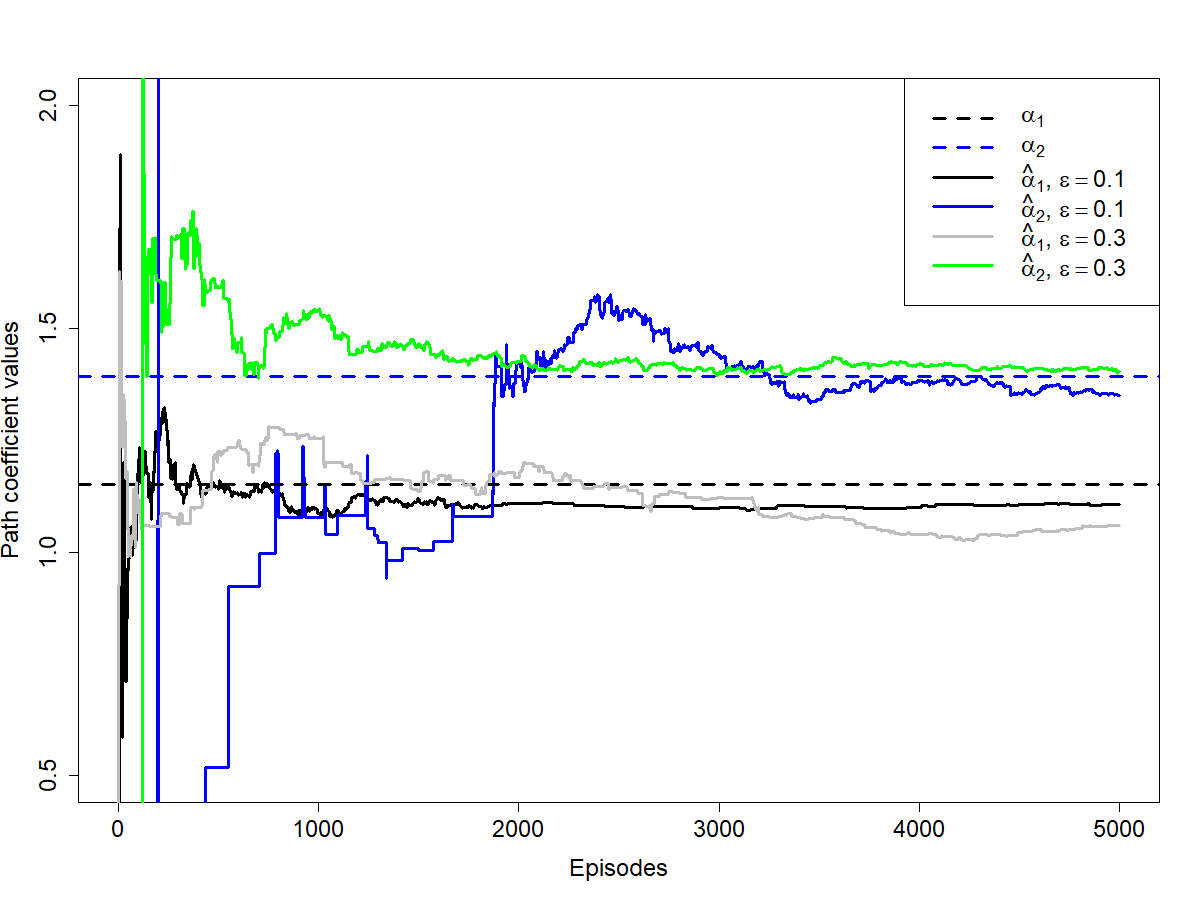}
\caption{Estimates of $\alpha_1$ and $\alpha_2$ obtained by $\epsilon$-greedy without correction (top) and with correction (bottom).}\label{fig:rl}
\end{figure}

Estimating $\alpha_1$ and $\alpha_2$ above can be seen as an instance of the exploration/exploitation dilemma: In order to learn the effectiveness of the treatment administered by a particular doctor, the doctor has to administer the treatment to some patients, which leads to some patients receiving suboptimal treatment. A straightforward solution to this dilemma consists in combining exploration and exploitation as follows: Select the doctor with the highest effectiveness so far (i.e. exploitation) with probability $1-\epsilon$, otherwise select the doctor at random (i.e. exploration). This strategy is called $\epsilon$-greedy in the reinforcement learning community \citep{SuttonandBarto2018}.

Figure \ref{fig:rl} (top) shows the estimates of $\alpha_1$ and $\alpha_2$ (denoted as $\hat{\alpha}_1$ and $\hat{\alpha}_2$) obtained by running $\epsilon$-greedy for 5000 iterations (a.k.a. episodes). Each episode consists in selecting a doctor for treating a patient. The doctor shares the data with us only if the treatment is regarded as ordinary. In our experiments, this means that each episode starts by choosing a doctor, say Doctor 1, according to the $\epsilon$-greedy strategy. Then, a triplet of values $(x,y,z)$ is sampled from the corresponding linear structural equation model.\footnote{Since negative abundance values do not make sense, if $x$, $y$ or $z$ are negative then a new triplet is sampled. This seldom happens, anyway.} Finally, the triplet is kept if $4<z<6$ and discarded otherwise. In the figure, we can clearly see that $\hat{\alpha}_1$ converges to $\alpha_1$, whereas $\hat{\alpha}_2$ does not converge to $\alpha_2$. Moreover, $\hat{\alpha}_1$ converges to a larger value than $\hat{\alpha}_2$, which means that Doctor 1 is considered more effective than Doctor 2 and, thus, we should recommend the former. This is suboptimal because, as shown in the figure, $\alpha_2$ is greater than $\alpha_1$ and, thus, Doctor 2 should be preferred. This conclusion was consistent across many runs of the experiment. In each run, $\alpha_1$ and $\alpha_2$ were sampled uniformly from the intervals $(0.5, 1.5)$ and $(\alpha_1 + 0.15, \alpha_1 + 0.3)$ respectively, i.e. Doctor 2 was more effective than Doctor 1. In each run, $\sigma_X = \sigma_Y = \sigma_Z = \sigma_W = 1$.

As discussed in Example \ref{exa:fig4and5}, if we can estimate $\sigma_X^2$ and $\sigma_Y^2$, then we can correct the bias in $\hat{\alpha}_2$. To illustrate this, assume that the doctors do not share with us data about individual extraordinary treatments but they do share aggregated data, in particular some estimates of $\sigma_X^2$ and $\sigma_Y^2$ (which they can compute from all the ordinary and extraordinary treatments performed). Figure \ref{fig:rl} (bottom) shows $\hat{\alpha}_1$ and $\hat{\alpha}_2$ when the correction is applied to the latter. We can appreciate that both path coefficient estimates converge to the true values, and that Doctor 2 is now preferred. Again, this conclusion was consistent across many runs of the experiment.

Of course, $\epsilon$-greedy is not the only way of solving the problem above. Alternative solutions include Thompson sampling, upper confidence bound (UCB) or directly performing a randomized controlled trial. However, the conclusions should not differ essentially from the ones presented above. The code for our experiments is publicly available at \texttt{https://www.dropbox.com/s/hawshrihhgr5uvi/MAB.zip?dl=0}.

\section*{Appendix C: Suboptimal Decision Making II}

In this appendix, we show that the bias introduced by adjusting for a faithful proxy of a confounder is negligible for decision making (recall Example \ref{exa:fig13}). However, the bias may be substantial when adjusting for a proxy of a non-confounder in a confounding path, which may lead to suboptimal decision making. We do so with the help of the following fictitious but, in our opinion, realistic scenario. Doctor 1 and Doctor 2 both treat a certain disease by administering a dose of drug $X$. The dose is determined by the abundance of $U$ in blood. The doctors use different methods to administer the drug, which we suspect affects the effectiveness of the drug. The effectiveness of the drug is assessed by measuring the abundance of Y in blood, which is determined by $X$ and $U$. The lower the value of $Y$ the higher the effectiveness of the treatment. Unwilling to disclose further details about the treatment, the doctors do not share with us any measurements of $U$. However, they do provide us with measurements of two proxies of $U$. Specifically, Doctor 1 provides us with the abundance of $Z$ in blood, whereas Doctor 2 provides us with the abundance of $W$ in blood. The former is known to be caused by $U$, whereas the latter is known to cause $U$.\footnote{Some authors would call $Z$ a proxy and $W$ a driver.}

In the language of path diagrams, the problem above can be stated as follows. We want to estimate $\alpha_1$ in the following path diagram (Doctor 1) from a sample for $X$, $Y$ and $Z$:
\begin{center}
\begin{tikzpicture}[inner sep=1mm]
\node at (0,0) (X) {$X$};
\node at (2.5,1.2) (Z) {$Z$};
\node at (2,0) (Y) {$Y$};
\node at (1,1.2) (U) {$U$};
\path[->] (X) edge node[above] {$\alpha_1$} (Y);
\path[->] (U) edge node[left] {$1$} (X);
\path[->] (U) edge node[right] {$1$} (Y);
\path[->] (U) edge node[above] {$1$} (Z);
\end{tikzpicture}
\end{center}
We also want to estimate $\alpha_2$ in the following diagram (Doctor 2) from a sample for $X$, $Y$ and $W$:
\begin{center}
\begin{tikzpicture}[inner sep=1mm]
\node at (0,0) (X) {$X$};
\node at (2.5,1.2) (Z) {$W$};
\node at (2,0) (Y) {$Y$};
\node at (1,1.2) (U) {$U$};
\path[->] (X) edge node[above] {$\alpha_2$} (Y);
\path[->] (U) edge node[left] {$1$} (X);
\path[->] (U) edge node[right] {$1$} (Y);
\path[<-] (U) edge node[above] {$1$} (Z);
\end{tikzpicture}
\end{center}
Recall that $U$ is unobserved. As discussed in Example \ref{exa:fig13}, if $Z$ and $W$ are faithful proxies of $U$, then the estimates of $r_{YX \cdot Z}$ and $r_{YX \cdot W}$ should be close to $\alpha_1$ and $\alpha_2$, respectively, which implies that we should be able to identify the optimal doctor. We illustrate this below with some experiments.

\begin{figure}
\includegraphics[scale=.4]{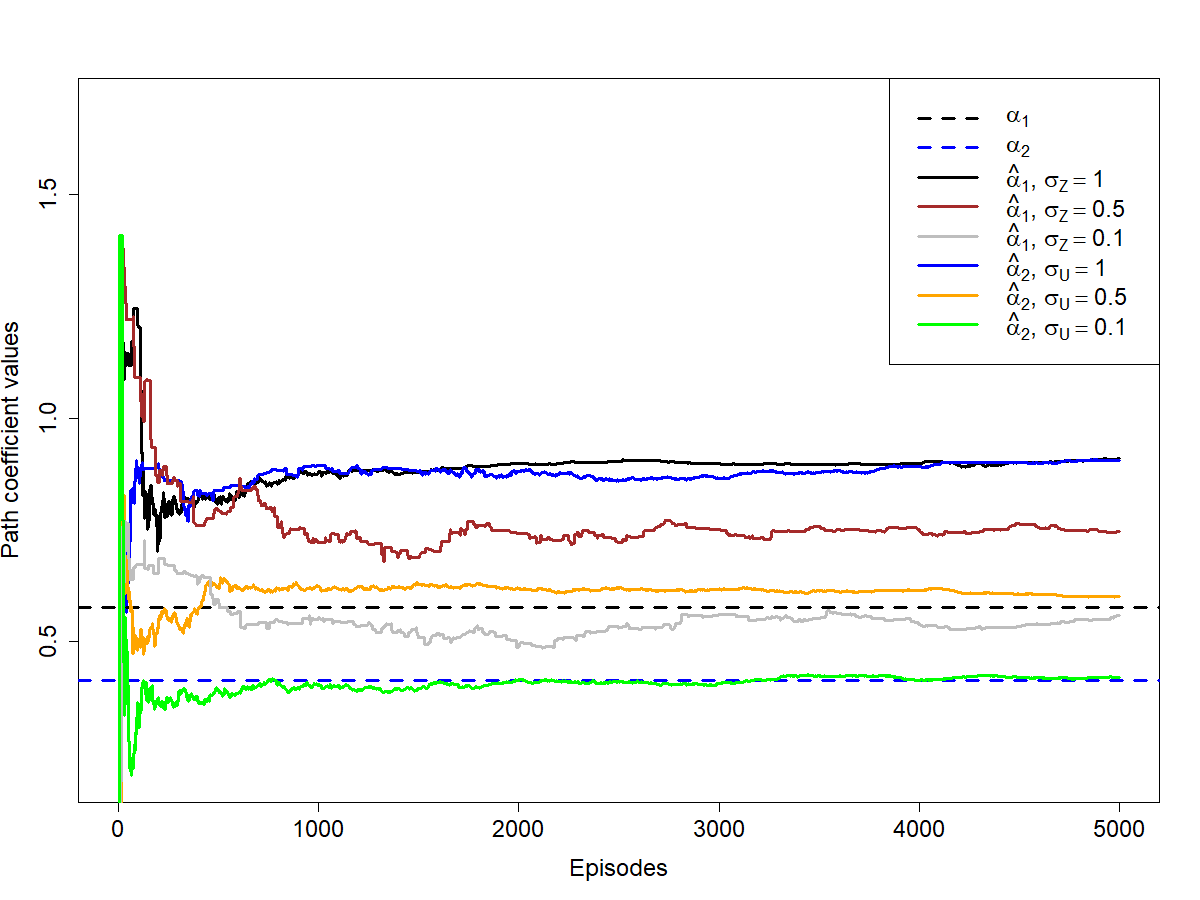}
\includegraphics[scale=.4]{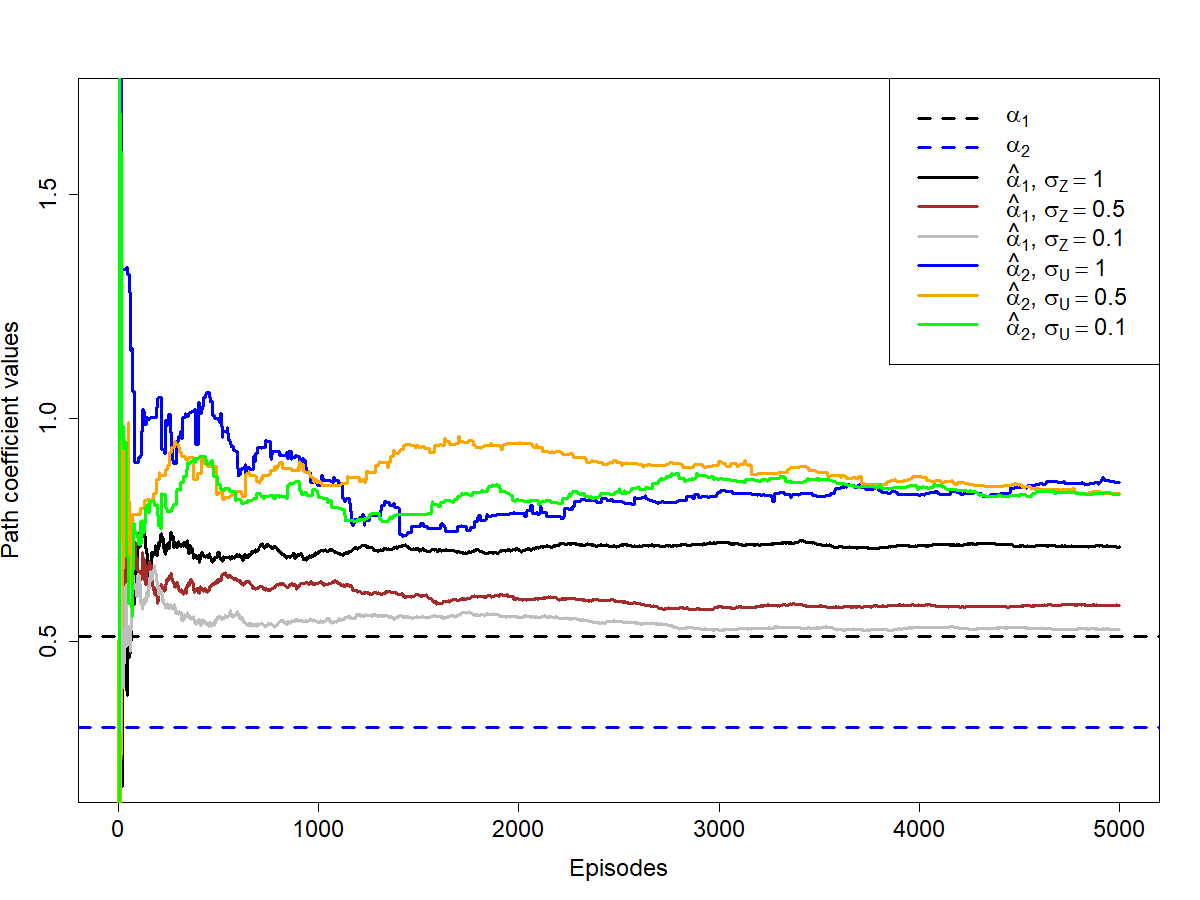}
\caption{Estimates of $\alpha_1$ and $\alpha_2$ obtained by $\epsilon$-greedy when adjusting for a proxy of a confounder (top) and for a proxy of a non-confounder (bottom).}\label{fig:rl2}
\end{figure}
 
As in Appendix B, we use $\epsilon$-greedy to solve the problem above. We consider $\epsilon=0.2$ and 5000 episodes. In each run of $\epsilon$-greedy, $\alpha_1$ and $\alpha_2$ are sampled uniformly from the intervals $(0.5, 1.5)$ and $(\alpha_1 - 0.3, \alpha_1 - 0.15)$ respectively, i.e. Doctor 2 is more effective than Doctor 1. The standard deviations of the error terms are all equal to 1, with the exception of the term corresponding to $Z$ for Doctor 1 and the term corresponding to $U$ for Doctor 2. Specifically, $\sigma_Z, \sigma_{U} = 0.1, 0.5, 1$. The smaller the values of $\sigma_Z$ and $\sigma_{U}$ the better $Z$ and $W$ are as proxies of $U$. Figure \ref{fig:rl2} (top) shows a representative run of the many that we performed. We can see that Doctor 2 is preferred if $Z$ and $W$ are equally good proxies of $U$, i.e. $\sigma_Z=\sigma_{U}$. Moreover, both $\hat{\alpha}_1$ and $\hat{\alpha}_2$ converge to the true values when $Z$ and $W$ are faithful proxies of $U$, i.e. $\sigma_Z=\sigma_{U}=0.1$.

The experiments above may lead one to conclude that blocking a confounding path by adjusting for a proxy does not bias much the estimate of a causal effect as long as the proxy is a good one. However, this is not true. To illustrate it, we repeat the experiments above after replacing the confounding path from $X$ to $Y$ in the path diagrams with the confounding path $X \la U' \ra U \ra Y$. Additional parameters are all set to 1. Figure \ref{fig:rl2} (bottom) shows a representative run of the new experiments. We can clearly see that $\hat{\alpha}_1$ converges to a smaller value than $\hat{\alpha}_2$ for every combination of $\sigma_Z$ and $\sigma_{U}$, i.e. no matter how good $Z$ and $W$ are as proxies of $U$. This means that Doctor 1 is considered more effective than Doctor 2 and, thus, that we should recommend the former. This is suboptimal because, as shown in the figure, $\alpha_2$ is smaller than $\alpha_1$ and, thus, Doctor 2 should be preferred. Note also that $\hat{\alpha}_1$ converges to $\alpha_1$ when $Z$ is almost a perfect proxy of $U$. On the other hand, $\hat{\alpha}_2$ behaves bad no matter how good $W$ is as a proxy of $U$. In summary, on the negative side, we wrongly recommend Doctor 1 but, on the positive side, we can estimate her effectiveness accurately if $Z$ is a faithful proxy of $U$. To get further insight into these results, we can repeat the reasoning in Example \ref{exa:fig13} now for the path diagrams of Doctor 1 and Doctor 2 after replacing the confounding path from $X$ to $Y$ with the confounding path $X \la U' \ra U \ra Y$. Let $G$ denote the path diagram of Doctor 1. Since $G^{\alpha_1}$ is singly-connected, we can apply Theorem \ref{the:condpath1} and conclude that $\sigma_{XY \cdot Z}^{\alpha_1} = \sigma_{XY}^{\alpha_1} \sigma_{U \cdot Z}^2 / \sigma_{U}^2$. This implies that conditioning on $Z$ reduces the covariance between $X$ and $Y$. Moreover, the greater the correlation between $U$ and $Z$, the greater the reduction and, thus, the closer $r_{YX \cdot Z}$ comes to $\alpha_1$. Let $G$ now denote the path diagram of Doctor 2. Applying Theorem \ref{the:condpath1} to $G^{\alpha_2}$ gives that $\sigma_{XY \cdot W}^{\alpha_2} = \sigma_{XY}^{\alpha_2}$. In other words, conditioning on $W$ leaves the covariance of $X$ and $Y$ unchanged. Moreover, $X \ci W | \emptyset$ in $G$ and, thus, $\sigma_{X \cdot W} = \sigma_X$ and, thus, $r_{YX \cdot W} = r_{YX}$. In other words, adjusting for $W$ does not solve our problem, even if $W$ is almost a perfect proxy of $U$. In summary, the effectiveness of adjusting for a proxy depends on the type of confounding path, the type of causal relation between the proxy and the unobserved variable, and the correlation between them.

\section*{Appendix D: Beyond Singly-Connected Path Diagrams}

In this appendix, we extend Theorems \ref{the:condpath1} and \ref{the:condpath2} from singly-connected diagrams to a superclass thereof. Note that we then only consider paths without colliders. The extension to path with colliders seems complicated.

In Section \ref{sec:nocolliders}, we defined the separation criterion for path diagrams in terms of paths. For some of the results in this appendix, it is more convenient to define it in terms of routes. Recall that whereas all the nodes in a path must be different, the nodes in a route do not need to be so. Given a route $\rho_{X:Y}$ from a node $X$ to a node $Y$ in a path diagram, a node $C$ is a collider in $\rho_{X:Y}$ if $A \oa C \ao B$ is a subroute of $\rho_{X:Y}$. Note that $A$ and $B$ may be the same node. Given a set of nodes $Z$, $\rho_{X:Y}$ is said to be $Z$-open if
\begin{itemize}
\item every collider in $\rho_{X:Y}$ is in $Z$, and
\item ever non-collider in $\rho_{X:Y}$ is outside $Z$.
\end{itemize}
Note that there is a $Z$-open route from $X$ to $Y$ if and only if there is a $Z$-open path from $X$ to $Y$ (see Lemma \ref{lem:pathroute} in Appendix E). When such a path or route exists, we say that $X$ and $Y$ are $Z$-connected.

Before presenting the results in this appendix, we define the operation of conditioning a path diagram on a node $A$ as replacing every edge $A \ra B$ with an edge $A_B \ra B$, where $A_B$ is a new node. Note that $A$ is not removed. In terms of the associated system of linear equations, this implies (i) adding a new equation $A_B = \epsilon_{A_B}$ where $\epsilon_{A_B}$ is normally distributed with arbitrary mean and variance, and (ii) replacing every equation $B = \alpha^T (A, Pa(B) \setminus A) + \epsilon_B$ with an equation $B = \alpha^T (A_B, Pa(B) \setminus A) + \epsilon_B$. Note that, after conditioning, we have that $Ch(A)=\emptyset$ whereas $Pa(A_B) \cup Sp(A_B)=\emptyset$ and $Ch(A_B)=B$. See Figure \ref{fig:illustration} for an illustration. Let $V$ denote all the nodes in the path diagram at hand, and consider the distribution $p(V \setminus A, A=a)$ defined by the system of equations before conditioning on $A$. This is the unnormalized conditional distribution of $V \setminus A$ given $A$. Let $A'$ denote the new nodes created by conditioning on $A$, and consider the distribution $p(V \setminus A, A=a, A'=a)$ defined by the system of equations after conditioning on $A$. This is the unnormalized conditional distribution of $V \setminus A$ given $A \cup A'$. Note that both unnormalized conditional distributions coincide for all $a$, i.e. $p(V \setminus A = x, A=a) = p(V \setminus A = x, A=a, A'=a)$ for all $x$ and $a$. Thus, their normalized versions coincide as well. So, computing partial covariances in either of them gives the same result, since partial covariances coincide with conditional covariances for Gaussian random vectors. In other words, the partial covariance $\sigma_{XY \cdot A}$ in the original path diagram is equal to $\sigma_{XY \cdot A A'}$ in the conditional diagram. Finally, we define conditioning on a set of nodes $S$ as conditioning on each node in $S$. By the previous reasoning, the partial covariance $\sigma_{XY \cdot S}$ in the original path diagram is equal to $\sigma_{XY \cdot S S'}$ in the conditional diagram, where $S'$ denotes the new nodes created by conditioning on $S$. The following theorems show how to compute the latter. See the Appendix E for the proofs. We illustrate the theorems through some examples afterwards.

\begin{figure}
\begin{tabular}{c|c}
\begin{tabular}{c}
\begin{tikzpicture}[inner sep=1mm]
\node at (0,0) (W) {$A$};
\node at (-1,1) (X1) {};
\node at (1,1) (X2) {};
\node at (-1.3,0) (X3) {};
\node at (1.3,0) (X4) {};
\node at (-1,-1) (X5) {$B$};
\node at (1,-1) (X6) {$C$};
\path[->] (X1) edge (W);
\path[->] (X2) edge (W);
\path[<->] (X3) edge (W);
\path[<->] (X4) edge (W);
\path[<-] (X5) edge (W);
\path[<-] (X6) edge (W);
\end{tikzpicture}
\end{tabular}
&
\begin{tabular}{c}
\begin{tikzpicture}[inner sep=1mm]
\node at (0,0) (W) {$A$};
\node at (-0.5,-1) (W1) {$A_B$};
\node at (0.5,-1) (W2) {$A_C$};
\node at (-1,1) (X1) {};
\node at (1,1) (X2) {};
\node at (-1.3,0) (X3) {};
\node at (1.3,0) (X4) {};
\node at (-1.5,-2) (X5) {$B$};
\node at (1.5,-2) (X6) {$C$};
\path[->] (X1) edge (W);
\path[->] (X2) edge (W);
\path[<->] (X3) edge (W);
\path[<->] (X4) edge (W);
\path[<-] (X5) edge (W1);
\path[<-] (X6) edge (W2);
\end{tikzpicture}
\end{tabular}
\end{tabular}\caption{Conditioning the path diagram to the left on the node $A$ results in the diagram to the right.}\label{fig:illustration}
\end{figure}
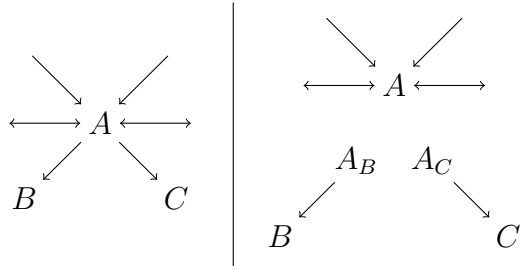

\begin{theorem}\label{the:condpathroot}
Consider a path diagram conditioned on a set of nodes $S$. Let $Z = S \cup S'$. Let $\Pi_{X:Y}$ denote all the $Z$-open paths from $X$ to $Y$. Suppose that no path in $\Pi_{X:Y}$ has colliders. Suppose that all the paths in $\Pi_{X:Y}$ have a subpath $X_m \la \cdots \la X_2 \la X_1 \ra X_{m+1} \ra \cdots \ra X_{m+n}$ or $X_1 = X \ra X_{2} \ra \cdots \ra X_{m+n}$. Suppose that there is no $Z$-open route $X_i \ra A \oo \cdots \oo B \oa X_i$ with $i>1$. Moreover, let $Z_i^i=Z_i \cup Z^i$ and $Z_{1:a}^{1:b} = Z_1 \cup \cdots \cup Z_a \cup Z^1 \cup \cdots \cup Z^b$ where
\begin{itemize}
\item $Z^i = \{W_1, W_2, \ldots\}$ is a subset of $Z \setminus Z^{1:i-1}_{1:i-1}$ such that each $W_j$ is $(Z^{1:i-1}_{1:i-1} \cup W_{1:j-1})$-connected to $X_i$ through $Pa(X_i) \cup Sp(X_i)$ by a path that does not contain any node that is in some path in $\Pi_{X:Y}$ except $X_i$, and
\item $Z_i = \{W_1, W_2, \ldots\}$ is a subset of $Z \setminus Z^{1:i}_{1:i-1}$ such that each $W_j$ is $(Z^{1:i}_{1:i-1} \cup W_{1:j-1})$-connected to $X_i$ through $Ch(X_i)$ by a path that does not contain any node that is in some path in $\Pi_{X:Y}$ except $X_i$.
\end{itemize}
Finally, let $Z \setminus Z^{1:m+n}_{1:m+n} = \{W_1, W_2, \ldots\}$ and suppose that $X \ci W_j | Z^{1:m+n}_{1:m+n} \cup W_{1:j-1}$ or $Y \ci W_j | Z^{1:m+n}_{1:m+n} \cup W_{1:j-1}$. Then,
\[
\sigma_{X Y \cdot Z} = \sigma_{X Y} \frac{\sigma^2_{X_1 \cdot Z_1^1}}{\sigma^2_{X_1}} \prod_{i=2}^{m+n} \frac{\sigma^2_{X_i \cdot Z_{1:i}^{1:i}}}{\sigma^2_{X_i \cdot Z_{1:i-1}^{1:i}}}.
\]
\end{theorem}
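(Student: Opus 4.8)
The plan is to mirror the incremental argument used to prove Theorem~\ref{the:condpath1}, but carried out in the conditional diagram and with the route-based separation criterion in place of the path-based one. First I would invoke the conditioning construction introduced just before the statement to replace $\sigma_{XY \cdot S}$ by $\sigma_{XY \cdot Z}$ with $Z = S \cup S'$, and then work entirely in the conditional diagram. There each conditioned node is a directed sink and each copy a fresh root, and the hypotheses (no colliders on any path in $\Pi_{X:Y}$, the common subpath $X_m \la \cdots \la X_1 \ra \cdots \ra X_{m+n}$, and the absence of any $Z$-open route $X_i \ra A \oo \cdots \oo B \oa X_i$) are exactly what is needed to reduce the multi-path situation to the same local separation facts that drive the three singly-connected lemmas. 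The route/path equivalence (Lemma~\ref{lem:pathroute}) lets me phrase all separation arguments in terms of routes, which is more convenient when several open paths coexist.

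Second, I would establish superclass versions of Lemmas~\ref{lem:root}, \ref{lem:nonrootpasp} and \ref{lem:nonrootch}. The algebra is identical to the singly-connected proofs: each added variable $W$ is handled through the recursive definition $\sigma_{XY \cdot RW} = \sigma_{XY \cdot R} - \sigma_{XW \cdot R}\,\sigma_{WY \cdot R}/\sigma^2_{W \cdot R}$, and the factors collapse once we know the separations $X \ci W | R \cup X_i$, $Y \ci W | R \cup X_i$ and $X \ci Y | R \cup X_i$ (for the root and child lemmas), or $X \ci W | R$ or $W \ci Y | R$ (for the parent/spouse lemma). What changes is the \emph{justification} of these independences: instead of ``the unique open path is blocked,'' I would argue that every $Z$-open route between the two endpoints passes through $X_i$ as a non-collider, using the no-bad-route hypothesis to rule out detours $X_i \ra \cdots \oa X_i$ and using the $Z$-connectedness clauses in the definitions of $Z^i$ and $Z_i$ to guarantee that $W$ attaches to $X_i$ only through the declared parents/spouses or children.

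Third, I would carry out the peeling in the prescribed order. Starting from $\sigma_{XY}$, obtained by path analysis over $\Pi_{X:Y}$, I add $Z_1^1$ at the root $X_1$ (root lemma) to produce the factor $\sigma^2_{X_1 \cdot Z_1^1}/\sigma^2_{X_1}$; then, for $i = 2, \ldots, m+n$, I add $Z^i$ (parent/spouse lemma, no change, moving the conditioning set from $Z_{1:i-1}^{1:i-1}$ to $Z_{1:i-1}^{1:i}$) and then $Z_i$ (child lemma, producing $\sigma^2_{X_i \cdot Z_{1:i}^{1:i}}/\sigma^2_{X_i \cdot Z_{1:i-1}^{1:i}}$, moving the set to $Z_{1:i}^{1:i}$). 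Finally I add the leftover variables $Z \setminus Z_{1:m+n}^{1:m+n} = \{W_1, W_2, \ldots\}$ one at a time; by the last hypothesis each satisfies $X \ci W_j | Z_{1:m+n}^{1:m+n} \cup W_{1:j-1}$ or $Y \ci W_j | Z_{1:m+n}^{1:m+n} \cup W_{1:j-1}$, so one of the two covariances in the recursive formula vanishes and the covariance is left unchanged, yielding exactly the stated product.

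The hard part will be the second step: verifying that the required conditional independences survive the passage to the superclass. In the singly-connected setting each independence is immediate from uniqueness of the open path, whereas here I must show that conditioning on $X_i$ (together with the already-added groups) blocks every $Z$-open route between $X$ and $W$, between $Y$ and $W$, and between $X$ and $Y$. This is where the route formulation, the no-bad-route assumption, and the relative ordering built into the $Z$-connectedness conditions must be combined; tracking which variables have already entered the conditioning set, so that each ``$(Z_{1:i-1}^{1:i-1} \cup W_{1:j-1})$-connected'' clause is tested against the correct set, is the most delicate and error-prone part of the argument.
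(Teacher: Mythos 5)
Your plan coincides with the paper's own proof: it too passes to the conditional diagram, proves superclass versions of Lemmas \ref{lem:root}, \ref{lem:nonrootpasp} and \ref{lem:nonrootch} by route-based separation arguments (Lemmas \ref{lem:root1}--\ref{lem:nonroot4}, built on Lemma \ref{lem:pathroute} and the algebraic Lemmas \ref{lem:aux1}--\ref{lem:aux3}), and then peels off $Z^1, Z_1, Z^2, Z_2, \ldots$ in exactly your order, using the no-bad-route assumption to keep $W_j$ attached to $X_i$ only through the declared side and disposing of the leftover nodes via the final vanishing-covariance hypothesis. The one detail you leave implicit is that each superclass lemma must come in two variants --- one for copy nodes with $Pa(W) \cup Sp(W)=\emptyset$ and one for conditioned nodes with $Ch(W)=\emptyset$, the latter requiring the extra hypothesis that conditioning on $W$ opens no new $X$--$Y$ paths, which the paper discharges by observing that nodes of $Z$ can never close a path in the conditional diagram --- but your remark that every node of $Z$ is a directed sink or a fresh root shows you have the right case split in mind.
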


\begin{theorem}\label{the:condpathnonroot}
Consider the same assumptions as in Theorem \ref{the:condpathroot} with the only difference that all the paths in $\Pi_{X:Y}$ have now a subpath $X_m \la \cdots \la X_2 \la X_1 \aa X_{m+1} \ra \cdots \ra X_{m+n}$ or $X_1 \aa X_{2} \ra \cdots \ra X_{m+n}$ or $\oa X_1 \ra \cdots \ra X_{m+n}$.\footnote{In the third subpath type, the predecessor of $X_1$ does not have to be the same in every path in $\Pi_{X:Y}$. It just has to reach $X_1$ through an edge $\ra$ or $\aa$ in every path in $\Pi_{X:Y}$.\label{foo:subpath}} Then,
\[
\sigma_{X Y \cdot Z} = \sigma_{X Y} \prod_{i=1}^{m+n} \frac{\sigma^2_{X_i \cdot Z_{1:i}^{1:i}}}{\sigma^2_{X_i \cdot Z_{1:i-1}^{1:i}}}.
\]
\end{theorem}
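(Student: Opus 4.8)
The plan is to mirror the proof of Theorem \ref{the:condpathroot}, since the hypotheses differ only in the edge incident to the meeting node $X_1$. Exactly as in the derivation of Theorem \ref{the:condpath1} from Lemmas \ref{lem:root}, \ref{lem:nonrootpasp} and \ref{lem:nonrootch}, I would start from the unconditioned covariance $\sigma_{XY}$ (given by path analysis) and introduce the variables of $Z$ in groups, following the order $Z^1, Z_1, Z^2, Z_2, \ldots, Z^{m+n}, Z_{m+n}$ dictated by the path, and finally the leftover variables $Z \setminus Z_{1:m+n}^{1:m+n}$. At each step I would invoke the route-framework analogue of the appropriate lemma: conditioning on a group $Z^i$ attached to $Pa(X_i) \cup Sp(X_i)$ leaves the running covariance unchanged (analogue of Lemma \ref{lem:nonrootpasp}), whereas conditioning on a group $Z_i$ attached to $Ch(X_i)$ multiplies it by the ratio $\sigma^2_{X_i \cdot Z_{1:i}^{1:i}} / \sigma^2_{X_i \cdot Z_{1:i-1}^{1:i}}$ (analogue of Lemma \ref{lem:nonrootch}). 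Telescoping these factors over $i = 1, \ldots, m+n$ yields the stated product.

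The single point of departure from Theorem \ref{the:condpathroot} is the treatment of $X_1$. There $X_1$ is the root of every path in $\Pi_{X:Y}$, so Lemma \ref{lem:root} applies and its parents/spouses and children contributions collapse into the isolated first factor $\sigma^2_{X_1 \cdot Z_1^1} / \sigma^2_{X_1}$. Here $X_1$ is instead the head of an edge ($X_1 \aa X_{m+1}$, or $\oa X_1$ from outside the path), hence a non-root of every open path. I would therefore handle $X_1$ exactly like every other $X_i$: the group $Z^1$ contributes no change and the group $Z_1$ contributes $\sigma^2_{X_1 \cdot Z_1^1} / \sigma^2_{X_1 \cdot Z^1}$. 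This is precisely why the isolated factor of Theorem \ref{the:condpathroot} disappears and the product runs uniformly from $i = 1$, exactly paralleling the passage from Theorem \ref{the:condpath1} to Theorem \ref{the:condpath2}.

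To license each lemma application I would supply the conditional independencies that the singly-connected lemmas took for granted, now read off the route-based separation criterion. When adding a variable $W$ from the children group $Z_i$, I need the analogues of $X \ci W \mid (\text{running set}) \cup X_i$ and $Y \ci W \mid (\text{running set}) \cup X_i$, i.e.\ that every route from $W$ to $X$ (or $Y$) is blocked once $X_i$ and the already-added variables are conditioned on. Here the hypothesis forbidding any $Z$-open route $X_i \ra A \oo \cdots \oo B \oa X_i$ is essential: it ensures that conditioning on $X_i$ does not open a sideways collider route reconnecting $W$ to the far endpoint and spoiling the factorization, while the $Z$-connectedness clauses in the definitions of $Z^i$ and $Z_i$ guarantee that $W$ does attach to the intended neighbour of $X_i$ at the stage it is introduced. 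For the leftover variables $Z \setminus Z_{1:m+n}^{1:m+n}$, the explicit hypothesis $X \ci W_j \mid \cdots$ or $Y \ci W_j \mid \cdots$ gives directly that they contribute no change, by the one-line computation of Lemma \ref{lem:nonrootpasp}.

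I expect the main obstacle to be exactly this bookkeeping of separations in the non-singly-connected setting: with several paths in $\Pi_{X:Y}$ sharing the meeting structure, one must check that introducing the conditioning variables in the prescribed order never simultaneously opens two competing routes to the same endpoint, so that the correction is a single running product rather than a sum over paths (as $\sigma_{XY}$ itself is under path analysis). Establishing the route-framework versions of Lemmas \ref{lem:root}--\ref{lem:nonrootch} under these weaker, connectedness-based hypotheses, in place of the tree structure, is where the real work lies; once they are in place, the telescoping argument and the $X_1$ modification are routine.
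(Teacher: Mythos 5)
Your proposal is correct and takes essentially the same approach as the paper, whose proof is literally the one line ``analogous to that of Theorem \ref{the:condpathroot}'': you unpack that analogy exactly as intended, treating $X_1$ as a non-root so that $Z^1$ leaves the running covariance unchanged and $Z_1$ contributes the factor $\sigma^2_{X_1 \cdot Z^1_1}/\sigma^2_{X_1 \cdot Z^1}$, with the product running uniformly from $i=1$, in precise parallel to the passage from Theorem \ref{the:condpath1} to Theorem \ref{the:condpath2}. The route-framework lemmas you anticipate as the remaining work are exactly Lemmas \ref{lem:nonroot1}--\ref{lem:nonroot4} and \ref{lem:aux3} of Appendix E, which the paper already establishes, so nothing further is needed.
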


Like Theorems \ref{the:condpath1} and \ref{the:condpath2}, the two theorems above show that the partial covariance of two random variables can be computed by correcting their covariance with the product of some partial variance ratios. This implies that the partial covariance inherits the salient feature of factorizing over the nodes and edges in the paths between the two variables of interest. The two theorems above also imply that conditioning does not change the sign of the covariance, as stated in the following immediate corollary.

\begin{corollary}
Suppose that two sets of nodes $S_1$ and $S_2$ satisfy the assumptions in Theorem \ref{the:condpathroot} or \ref{the:condpathnonroot}. Then, $sign(\sigma_{XY}) = sign(\sigma_{XY \cdot S_1})=sign(\sigma_{XY \cdot S_2})$.
\end{corollary}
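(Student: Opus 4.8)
The plan is to read the statement off directly from the product formulas in Theorems \ref{the:condpathroot} and \ref{the:condpathnonroot}, since the corollary asserts nothing beyond the positivity of the correction factors appearing there. First I would apply the relevant theorem (whichever of the two matches the type of the common subpath) with $S = S_1$ and with $S = S_2$; in each case the theorem expresses the partial covariance as
\[
\sigma_{XY \cdot S_j} = \sigma_{XY} \prod \frac{\sigma^2_{X_i \cdot B_i}}{\sigma^2_{X_i \cdot A_i}},
\]
a finite product of partial variance ratios.

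The crux is to verify that in every one of these ratios the conditioning set $B_i$ of the numerator contains the conditioning set $A_i$ of the denominator. Unwinding the index notation $Z_{1:a}^{1:b} = Z_1 \cup \cdots \cup Z_a \cup Z^1 \cup \cdots \cup Z^b$, the generic factor has numerator conditioned on $Z_{1:i}^{1:i}$ and denominator on $Z_{1:i-1}^{1:i}$, and these differ by exactly $Z_i$, that is $Z_{1:i}^{1:i} = Z_{1:i-1}^{1:i} \cup Z_i$. The two leading factors are of the same kind: in Theorem \ref{the:condpathroot} the first denominator conditions on $\emptyset \subseteq Z_1^1$, and in Theorem \ref{the:condpathnonroot} on $Z^1 \subseteq Z_1 \cup Z^1 = Z_1^1$. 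Because conditioning on additional variables never increases a variance, and all partial variances are strictly positive, each ratio lies in $(0,1]$, so the whole product is strictly positive. Consequently
\[
sign(\sigma_{XY}) = sign(\sigma_{XY \cdot S_1}) = sign(\sigma_{XY \cdot S_2}).
\]

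I do not anticipate a genuine obstacle, as the entire content is already encapsulated in the two theorems, exactly as for the analogous remark following Theorems \ref{the:condpath1} and \ref{the:condpath2}. The only point requiring care is the bookkeeping of the nested index sets, to confirm that the superset relation $A_i \subseteq B_i$ holds uniformly across all factors and across both subpath types; once that is checked, monotonicity of the variance under conditioning closes the argument immediately.
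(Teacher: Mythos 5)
Your proposal is correct and coincides with the paper's own justification: the paper states this corollary as immediate from the product formulas of Theorems \ref{the:condpathroot} and \ref{the:condpathnonroot}, resting on exactly the observation you verify, namely that in every factor the numerator's conditioning set $Z_{1:i}^{1:i}$ contains the denominator's $Z_{1:i-1}^{1:i}$ (including the leading factors, $\emptyset \subseteq Z_1^1$ and $Z^1 \subseteq Z_1^1$ respectively), so each partial variance ratio lies in $(0,1]$ and the product is strictly positive. Your bookkeeping of the nested index sets and the appeal to monotonicity of the variance under conditioning mirror the paper's discussion following Theorems \ref{the:condpath1} and \ref{the:condpath2}, which is precisely the argument the paper intends here.
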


For the path diagrams that satisfy the conditions in Theorem \ref{the:condpathroot} or \ref{the:condpathnonroot}, the corollary above implies that conditioning does not change the sign of the causal effect of $X$ on $Y$, and that Simpson's paradox cannot occur.

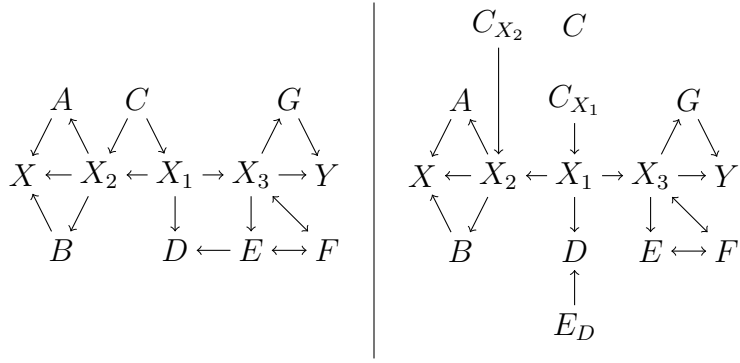
\begin{figure}
\begin{tabular}{c|c}
\begin{tabular}{c}
\begin{tikzpicture}[inner sep=1mm]
\node at (0,0) (X) {$X$};
\node at (1,0) (X1) {$X_2$};
\node at (2,0) (X2) {$X_1$};
\node at (3,0) (X3) {$X_3$};
\node at (4,0) (Y) {$Y$};
\node at (3.5,1) (G) {$G$};
\node at (0.5,1) (A) {$A$};
\node at (0.5,-1) (B) {$B$};
\node at (4,-1) (C) {$F$};
\node at (1.5,1) (S1) {$C$};
\node at (2,-1) (S2) {$D$};
\node at (3,-1) (S3) {$E$};
\path[<-] (X) edge (X1);
\path[<-] (X1) edge (X2);
\path[->] (X2) edge (X3);
\path[->] (X3) edge (Y);
\path[<-] (X) edge (A);
\path[<-] (A) edge (X1);
\path[<-] (B) edge (X1);
\path[->] (B) edge (X);
\path[<->] (X3) edge (C);
\path[<->] (C) edge (S3);
\path[->] (S1) edge (X1);
\path[->] (S1) edge (X2);
\path[->] (X2) edge (S2);
\path[->] (S3) edge (S2);
\path[<-] (S3) edge (X3);
\path[->] (X3) edge (G);
\path[->] (G) edge (Y);
\end{tikzpicture}
\end{tabular}
&
\begin{tabular}{c}
\begin{tikzpicture}[inner sep=1mm]
\node at (0,0) (X) {$X$};
\node at (1,0) (X1) {$X_2$};
\node at (2,0) (X2) {$X_1$};
\node at (3,0) (X3) {$X_3$};
\node at (4,0) (Y) {$Y$};
\node at (3.5,1) (G) {$G$};
\node at (0.5,1) (A) {$A$};
\node at (0.5,-1) (B) {$B$};
\node at (4,-1) (C) {$F$};
\node at (2,2) (D) {$C$};
\node at (1,2) (S1X1) {$C_{X_2}$};
\node at (2,1) (S1X2) {$C_{X_1}$};
\node at (2,-1) (S2) {$D$};
\node at (3,-1) (S3) {$E$};
\node at (2,-2) (S3S2) {$E_{D}$};
\path[<-] (X) edge (X1);
\path[<-] (X1) edge (X2);
\path[->] (X2) edge (X3);
\path[->] (X3) edge (Y);
\path[<-] (X) edge (A);
\path[<-] (A) edge (X1);
\path[<-] (B) edge (X1);
\path[->] (B) edge (X);
\path[<->] (X3) edge (C);
\path[<->] (C) edge (S3);
\path[->] (S1X1) edge (X1);
\path[->] (S1X2) edge (X2);
\path[->] (X2) edge (S2);
\path[->] (S3S2) edge (S2);
\path[<-] (S3) edge (X3);
\path[->] (X3) edge (G);
\path[->] (G) edge (Y);
\end{tikzpicture}
\end{tabular}
\end{tabular}\caption{Left: Path diagram where Theorem \ref{the:condpathroot} can be applied to compute $\sigma_{XY \cdot C D E}$. Right: The path diagram to the left conditioned on $\{C, D, E\}$.}\label{fig:example}
\end{figure}

We illustrate Theorem \ref{the:condpathroot} with the following example.

\begin{example}
Consider the path diagram to the left in Figure \ref{fig:example}. Say that we want to compute $\sigma_{XY \cdot S}$ with $S=\{C,D,E\}$. The path diagram conditioned on $S$ can be seen to the right in Figure \ref{fig:example}. Then, $S'=\{C_{X_1}, C_{X_2}, E_D\}$ and $Z = S \cup S' = \{C,D,E,C_{X_1}, C_{X_2}, E_D\}$. As discussed before, $\sigma_{XY \cdot S}$ in the original diagram coincides with $\sigma_{X Y \cdot Z}$ in the conditional diagram. Now, note that $Z^1=\{C_{X_1}\}$, $Z_1=\{D,E_D\}$, $Z^2=\{C_{X_2}\}$, $Z_2=\emptyset$, $Z^3=\emptyset$, and $Z_3=\{E\}$. Then, Theorem \ref{the:condpathroot} gives
\[
\sigma_{X Y \cdot Z} = \sigma_{X Y} \frac{\sigma^2_{X_1 \cdot C_{X_1} D E_D}}{\sigma^2_{X_1}} \frac{\sigma^2_{X_2 \cdot C_{X_1} D E_D C_{X_2}}}{\sigma^2_{X_2 \cdot C_{X_1} D E_D C_{X_2}}} \frac{\sigma^2_{X_3 \cdot C_{X_1} D E_D C_{X_2} E}}{\sigma^2_{X_3 \cdot C_{X_1} D E_D C_{X_2}}}.
\]
\end{example}

\begin{figure}
\begin{tabular}{c|c}
\begin{tabular}{c}
\begin{tikzpicture}[inner sep=1mm]
\node at (0,0) (X) {$X$};
\node at (1,0) (X1) {$X_1$};
\node at (2,0) (X2) {$X_2$};
\node at (3,0) (X3) {$X_3$};
\node at (4,0) (Y) {$Y$};
\node at (0.5,1) (A) {$A$};
\node at (0.5,-1) (B) {$B$};
\node at (1.5,1) (C) {$C$};
\node at (2.5,2) (D) {$D$};
\node at (2.5,1) (E) {$E$};
\node at (3.5,-1) (F) {$F$};
\path[->] (X) edge (X1);
\path[->] (X1) edge (X2);
\path[->] (X2) edge (X3);
\path[->] (X3) edge (Y);
\path[<-] (X) edge (A);
\path[<->] (A) edge (X1);
\path[->] (B) edge (X1);
\path[<-] (B) edge (X);
\path[->] (C) edge (X1);
\path[->] (C) edge (X2);
\path[->] (E) edge (X2);
\path[->] (E) edge (X3);
\path[->] (D) edge (E);
\path[->] (D) edge (Y);
\path[->] (X3) edge (F);
\path[->] (F) edge (Y);
\end{tikzpicture}
\end{tabular}
&
\begin{tabular}{c}
\begin{tikzpicture}[inner sep=1mm]
\node at (0,0) (X) {$X$};
\node at (1,0) (X1) {$X_1$};
\node at (2,0) (X2) {$X_2$};
\node at (3,0) (X3) {$X_3$};
\node at (4,0) (Y) {$Y$};
\node at (0.5,1) (A) {$A$};
\node at (0.5,-1) (B) {$B$};
\node at (2,2) (C) {$C$};
\node at (1,2) (S1X1) {$C_{X_1}$};
\node at (2,1) (S1X2) {$C_{X_2}$};
\node at (4,2) (D) {$D$};
\node at (4,1) (DY) {$D_Y$};
\node at (3,2) (DE) {$D_E$};
\node at (3,1) (E) {$E$};
\node at (3.5,-1) (F) {$F$};
\path[->] (X) edge (X1);
\path[->] (X1) edge (X2);
\path[->] (X2) edge (X3);
\path[->] (X3) edge (Y);
\path[->] (X) edge (A);
\path[<->] (A) edge (X1);
\path[->] (B) edge (X1);
\path[<-] (B) edge (X);
\path[->] (S1X1) edge (X1);
\path[->] (S1X2) edge (X2);
\path[->] (E) edge (X2);
\path[->] (E) edge (X3);
\path[->] (DY) edge (Y);
\path[->] (DE) edge (E);
\path[->] (X3) edge (F);
\path[->] (F) edge (Y);
\end{tikzpicture}
\end{tabular}
\end{tabular}\caption{Left: Path diagram where Theorem \ref{the:condpathnonroot} can be applied to compute $\sigma_{XY \cdot C D}$. Right: The path diagram to the left conditioned on $\{C, D\}$.}\label{fig:example2}
\end{figure}
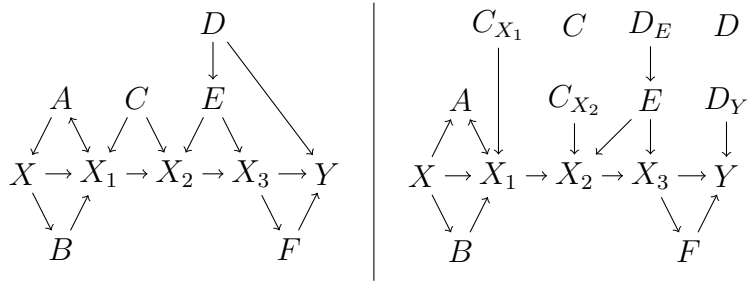

We illustrate Theorem \ref{the:condpathnonroot} with the following example.

\begin{example}
Consider the path diagram to the left in Figure \ref{fig:example2}. Say that we want to compute $\sigma_{XY \cdot S}$ with $S=\{C,D\}$. The path diagram conditioned on $S$ can be seen to the right in Figure \ref{fig:example2}. Then, $S'=\{C_{X_1}, C_{X_2}, D_E\}$ and $Z = S \cup S' = \{C,D,C_{X_1}, C_{X_2}, D_E, D_Y\}$. As discussed before, $\sigma_{XY \cdot S}$ in the original diagram coincides with $\sigma_{X Y \cdot Z}$ in the conditional diagram. Now, note that $Z^1=\{C_{X_1}\}$, $Z^2=\{C_{X_2}, D_E\}$, and $Z_1=Z_2=Z_3=Z^3=\emptyset$. Then, Theorem \ref{the:condpathnonroot} gives
\[
\sigma_{X Y \cdot Z} = \sigma_{X Y} \frac{\sigma^2_{X_1 \cdot C_{X_1}}}{\sigma^2_{X_1 \cdot C_{X_1}}} \frac{\sigma^2_{X_2 \cdot C_{X_1} C_{X_2} D_E}}{\sigma^2_{X_2 \cdot C_{X_1} C_{X_2} D_E}} \frac{\sigma^2_{X_3 \cdot C_{X_1} C_{X_2} D_E}}{\sigma^2_{X_3 \cdot C_{X_1} C_{X_2} D_E}}.
\]
\end{example}

\begin{figure}
\begin{tabular}{c|c}
\begin{tabular}{c}
\begin{tikzpicture}[inner sep=1mm]
\node at (0,0) (X) {$X_1$};
\node at (1,0) (X1) {$X_2$};
\node at (2,0) (X2) {$X_3$};
\node at (3,0) (Y) {$X_4$};
\node at (0.5,1) (A) {$A$};
\node at (1.5,-1) (B) {$C$};
\node at (2.5,1) (C) {$B$};
\node at (3,-1) (D) {$D$};
\path[->] (X) edge (X1);
\path[->] (X1) edge (X2);
\path[->] (X2) edge (Y);
\path[<-] (X1) edge (A);
\path[<-] (A) edge (X);
\path[->] (C) edge (X2);
\path[->] (C) edge (Y);
\path[->] (X1) edge (B);
\path[->] (B) edge (X2);
\path[->] (Y) edge (D);
\end{tikzpicture}
\end{tabular}
&
\begin{tabular}{c}
\begin{tikzpicture}[inner sep=1mm]
\node at (0,0) (X) {$X_1$};
\node at (1,0) (X1) {$X_2$};
\node at (2,0) (X2) {$X_3$};
\node at (3,0) (Y) {$X_4$};
\node at (0,1) (A) {$A$};
\node at (1,1) (AX2) {$A_{X_2}$};
\node at (1.5,-1) (B) {$C$};
\node at (2.5,2) (C) {$B$};
\node at (2,1) (CX2) {$B_{X_3}$};
\node at (3,1) (CY) {$B_{X_4}$};
\node at (3,-1) (D) {$D$};
\path[->] (X) edge (X1);
\path[->] (X1) edge (X2);
\path[->] (X2) edge (Y);
\path[<-] (X1) edge (AX2);
\path[<-] (A) edge (X);
\path[->] (X1) edge (B);
\path[->] (B) edge (X2);
\path[->] (CX2) edge (X2);
\path[->] (CY) edge (Y);
\path[->] (Y) edge (D);
\end{tikzpicture}
\end{tabular}
\end{tabular}\caption{Left: Path diagram where Theorems \ref{the:condpathroot} and \ref{the:condpathnonroot} can be combined to compute $\sigma_{X_1 X_4 \cdot A B D}$. Left: The path diagram to the left conditioned on $\{A, B, D\}$.}\label{fig:example3}
\end{figure}
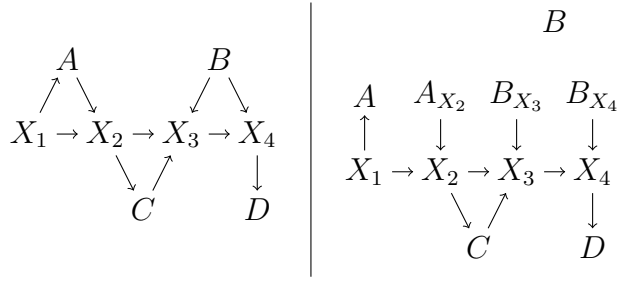

Theorems \ref{the:condpathroot} and \ref{the:condpathnonroot} can be extended to when all the paths from $X$ to $Y$ in the conditional path diagram share more than one subpath. For instance, consider the path diagram to the left in Figure \ref{fig:example3}. Say that we want to compute $\sigma_{X_1 X_4 \cdot S}$ with $S=\{A, B, D\}$. The path diagram conditioned on $S$ can be seen to the right in Figure \ref{fig:example3}. Then, $S'=\{A_{X_2}, B_{X_3}, B_{X_4}\}$ and $Z = S \cup S' = \{A, B, D, A_{X_2}, B_{X_3}, B_{X_4}\}$. As discussed before, $\sigma_{X_1 X_4 \cdot S}$ in the original diagram coincides with $\sigma_{X_1 X_4 \cdot Z}$ in the conditional diagram. Now, note that $Z_1=\{A\}$, $Z^2=\{A_{X_2}\}$, $Z^3=\{B_{X_3}\}$, $Z^4=\{B_{X_4}\}$, $Z_4=\{D\}$, and $Z^1=Z_2=Z_3=\emptyset$. Note also that the conditional diagram has two $Z$-open paths from $X_1$ to $X_4$, which share two subpaths: $X_1 \ra X_2$ and $\ra X_3 \ra X_4$. Therefore, neither Theorem \ref{the:condpathroot} nor \ref{the:condpathnonroot} applies. However, applying Theorem \ref{the:condpathroot} followed by Theorem \ref{the:condpathnonroot} gives
\begin{equation}\label{eq:example3}
\sigma_{X_1 X_4 \cdot Z} = \sigma_{X_1 X_4} \frac{\sigma^2_{X_1 \cdot A}}{\sigma^2_{X_1}} \frac{\sigma^2_{X_2 \cdot A A_{X_2}}}{\sigma^2_{X_2 \cdot A A_{X_2}}} \frac{\sigma^2_{X_3 \cdot A A_{X_2} B_{X_3}}}{\sigma^2_{X_3 \cdot A A_{X_2} B_{X_3}}} \frac{\sigma^2_{X_4 \cdot A A_{X_2} B_{X_3} B_{X_4} D}}{\sigma^2_{X_4 \cdot A A_{X_2} B_{X_3} B_{X_4}}}.
\end{equation}
The proof of correctness of the previous expression is simply a concatenation of the proofs of Theorems \ref{the:condpathroot} and \ref{the:condpathnonroot}. We omit the details. An alternative way of answering the previous query is by first absorbing the subpath $X_2 \ra C \ra X_3$ into the subpath $X_2 \ra X_3$. Now, there is only one shared subpath in the path diagram conditioned on $S$, namely $X_1 \ra X_2 \ra X_3 \ra X_4$. Then, Theorem \ref{the:condpathroot} gives Equation \ref{eq:example3}. This absorption trick is always possible when there are several shared subpaths. We omit the details.
\section*{Appendix E: Proofs of Appendix D}

We start with some lemmas stating some auxiliary results. Recall from Footnote \ref{foo:subpath} that when we say that every path has a subpath $A \oa$, we do not mean that the successor of $A$ is the same in every path. We mean that the successor is reached through an edge $\ra$ or $\aa$ in every path. Given a route $\rho_{X:Y}$ from $X$ to $Y$, we let $\rho_{X:A}$ denote the subroute of $\rho_{X:Y}$ from $X$ to $A$. Given two routes $\rho_{X:A}$ and $\rho_{A:Y}$, we let $\rho_{X:A} \cup \rho_{A:Y}$ denote the route from $X$ to $Y$ resulting from concatenating $\rho_{X:A}$ and $\rho_{A:Y}$. Finally, the path corresponding to a $Z$-open route from $X$ to $Y$ is a $Z$-open path from $X$ to $Y$ whose edges are a subset of the edges in the route. Such a path always exists by Lemma \ref{lem:pathroute}.

\begin{lemma}\label{lem:pathroute}
There is a $Z$-open route from $X$ to $Y$ if and only if there is a $Z$-open path from $X$ to $Y$. Moreover, the path and the route can be chosen such that the edges in the former are a subset of the edges in the latter.
\end{lemma}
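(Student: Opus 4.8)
The plan is to prove the two implications separately and, in each construction, to keep the edges of the path inside the edges of the route, which yields the \emph{moreover} clause for free. Throughout I will use that a route with no repeated node is already a path, and that such a route which is $Z$-open under the route criterion (every collider in $Z$) is automatically $Z$-open under the path criterion of Section~\ref{sec:nocolliders} (every collider in $Z$ or with a descendant in $Z$), since membership in $Z$ is the stronger of the two conditions. Thus the real content is the passage between the strict collider rule for routes and the descendant collider rule for paths.

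For the direction from a path to a route, I would start with a $Z$-open path $\pi_{X:Y}$ and repair each collider $C$ that lies outside $Z$. By path-openness such a $C$ has a descendant in $Z$; fix a shortest directed path $C \ra D_1 \ra \cdots \ra D_k$ with $D_k \in Z$ and $D_1, \ldots, D_{k-1} \notin Z$, and splice into the path the ``down and up'' detour that leaves $C$ along this directed path to $D_k$ and returns along the same edges back to $C$. At the two visits to $C$ the configuration becomes $\ra C \ra$ and $\la C \la$, so $C$ is now a non-collider (outside $Z$, as required); the interior nodes $D_1, \ldots, D_{k-1}$ are non-colliders outside $Z$; and $D_k$ is the unique collider of the detour and lies in $Z$. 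Performing this for every such $C$ yields a route that is $Z$-open under the route criterion and whose edge set contains that of $\pi_{X:Y}$.

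For the converse I would take a $Z$-open route (route criterion) and remove repeated nodes one loop at a time until a path remains, the edge set only shrinking. Picking an innermost repetition $X_i = X_j = V$ with $X_{i+1}, \ldots, X_{j-1}$ distinct, I excise the closed subroute, leaving at $V$ the edge $\alpha$ entering the first occurrence and the edge $\gamma$ leaving the last (if $V$ is an endpoint there is no constraint and the step is trivial). If $V \in Z$ both occurrences are colliders, so $\alpha$ and $\gamma$ both point into $V$ and the excision keeps $V$ a collider in $Z$: still open. If $V \notin Z$ and at least one of $\alpha,\gamma$ has its tail at $V$, then $V$ stays a non-collider outside $Z$: still open. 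The remaining, and in my view hardest, case is $V \notin Z$ with $\alpha$ and $\gamma$ both into $V$, so that the excision creates a collider at a node outside $Z$; here I must certify openness through the weaker path criterion by exhibiting a descendant of $V$ in $Z$. The key observation is that in this case both end edges of the excised loop are tails at $V$, i.e.\ the loop leaves $V$ through children on both ends; tracing the loop forward from that child and using that every non-collider it meets (necessarily outside $Z$) passes an arrowhead through to the next node, I obtain a directed path $V \ra X_{i+1} \ra \cdots$ that must eventually hit a collider of the loop, which by the route criterion lies in $Z$ and is therefore a descendant of $V$ in $Z$. Hence the created collider is admissible for a path, and since every other local configuration is untouched by the excision, openness is preserved at each step; iterating produces a $Z$-open path whose edges lie among those of the route.
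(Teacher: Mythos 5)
Your proposal is correct and takes essentially the same approach as the paper's proof: the same down-and-up detour along a shortest directed path to $Z$ that turns each out-of-$Z$ collider into two non-collider occurrences with the new collider landing in $Z$, and the same loop-excision with a three-case analysis at the repeated node, where the only delicate case is the newly created collider outside $Z$, certified open via a descendant in $Z$. Your explicit directed-path trace through the excised loop in that hardest case merely spells out the justification the paper compresses into the remark that $A$ or some descendant of $A$ must be in $Z$ for the original route to be $Z$-open.
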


\begin{proof}
Let $\pi_{X:Y}$ be a $Z$-open path from $X$ to $Y$. For every subpath $A \oa C \ao B$ of $\pi_{X:Y}$ such that $C \notin Z$, do the following. First, find a path $C \ra X_1 \ra \cdots \ra X_n$ with $X_n \in Z$ and $X_1, \ldots, X_{n-1} \notin Z$. Such a path must exist for $\pi_{X:Y}$ to be $Z$-open. Second, replace $A \oa C \ao B$ with $A \oa C \ra X_1 \ra \cdots \ra X_n \la \cdots \la X_1 \la C \ao B$. The result is the desired route.

Let $\rho_{X:Y}$ be a $Z$-open route from $X$ to $Y$. Repeat the following two steps while possible. The result is the desired path. First, choose a node $A$ that occurs several times in $\rho_{X:Y}$. Let $A_1$ and $A_2$ denote the first and the last occurrences of $A$ in $\rho_{X:Y}$. Assume without loss of generality that no node in $\rho_{X:A_1}$ occurs in $\rho_{A_2:Y}$ except $A$. Second, consider the following cases.
\begin{itemize}
\item If $\rho_{X:Y}$ is $X \oo \cdots \la A_1 \oo \cdots \oo A_2 \oo \cdots \oo Y$, then replace it with $\rho_{X:A_1} \cup \rho_{A_2:Y}$.

\item If $\rho_{X:Y}$ is $X \oo \cdots \oa A_1 \oo \cdots \oo A_2 \ra \cdots \oo Y$, then replace it with $\rho_{X:A_1} \cup \rho_{A_2:Y}$.

\item If $\rho_{X:Y}$ is $X \oo \cdots \oa A_1 \oo \cdots \oo A_2 \ao \cdots \oo Y$, then replace it with $\rho_{X:A_1} \cup \rho_{A_2:Y}$. Note that $A$ or some descendant of $A$ must be in $Z$ for the original $\rho_{X:Y}$ to be $Z$-open.
\end{itemize}
\end{proof}

\begin{lemma}\label{lem:aux1}
Consider a path diagram. Let $X$, $Y$, $R$ and $W$ be nodes and $Z$ a set of nodes. If $X \ci W | Z \cup R$ and $Y \ci W | Z \cup R$ and $X \ci Y | Z \cup R$, then
\[
\sigma_{XY \cdot ZW} =  \sigma_{XY \cdot Z} \frac{\sigma^2_{R \cdot ZW}}{\sigma^2_{R \cdot Z}}.
\]
\end{lemma}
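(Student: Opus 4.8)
The plan is to recognize this as the abstract, graph-free version of the first case in the proof of Lemma~\ref{lem:root} (the subpath $A \la S \ra B$), with $R$ playing the role of the root node $S$. The three conditional independence hypotheses are precisely the three that drove that argument, so the proof should proceed purely algebraically from the recursive definition of partial covariance, without any further appeal to the graph.

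First I would convert each independence hypothesis into a vanishing partial covariance: since conditional independence forces the corresponding partial (co)variance to zero, $X \ci W | Z \cup R$ gives $\sigma_{XW \cdot ZR} = 0$, and likewise $\sigma_{YW \cdot ZR} = 0$ and $\sigma_{XY \cdot ZR} = 0$. Expanding each with the recursive definition $\sigma_{AB \cdot ZR} = \sigma_{AB \cdot Z} - \sigma_{AR \cdot Z}\sigma_{RB \cdot Z}/\sigma^2_{R \cdot Z}$ and setting it to zero yields the three identities $\sigma_{XW \cdot Z} = \delta_{XR \cdot Z}\,\sigma_{RW \cdot Z}$, $\sigma_{YW \cdot Z} = \delta_{YR \cdot Z}\,\sigma_{RW \cdot Z}$, and $\sigma_{XY \cdot Z} = \delta_{XR \cdot Z}\,\delta_{YR \cdot Z}\,\sigma^2_{R \cdot Z}$, where I set $\delta_{XR \cdot Z} = \sigma_{XR \cdot Z}/\sigma^2_{R \cdot Z}$ and $\delta_{YR \cdot Z} = \sigma_{YR \cdot Z}/\sigma^2_{R \cdot Z}$. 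This step tacitly requires $\sigma^2_{R \cdot Z} > 0$, which I would note is guaranteed by the standing nondegeneracy assumption on the Gaussian vector.

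Finally I would substitute the first two identities into the recursive definition of the target quantity, $\sigma_{XY \cdot ZW} = \sigma_{XY \cdot Z} - \sigma_{XW \cdot Z}\sigma_{WY \cdot Z}/\sigma^2_{W \cdot Z}$. The product $\sigma_{XW \cdot Z}\sigma_{WY \cdot Z}$ becomes $\delta_{XR \cdot Z}\,\delta_{YR \cdot Z}\,\sigma^2_{RW \cdot Z}$, so factoring out the common $\delta_{XR \cdot Z}\,\delta_{YR \cdot Z}$ leaves the bracket $\sigma^2_{R \cdot Z} - \sigma^2_{RW \cdot Z}/\sigma^2_{W \cdot Z}$, which is exactly $\sigma^2_{R \cdot ZW}$ (the recursive definition read with $X=Y=R$). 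Hence $\sigma_{XY \cdot ZW} = \delta_{XR \cdot Z}\,\delta_{YR \cdot Z}\,\sigma^2_{R \cdot ZW}$, and rewriting $\delta_{XR \cdot Z}\,\delta_{YR \cdot Z} = \sigma_{XY \cdot Z}/\sigma^2_{R \cdot Z}$ via the third identity delivers the claimed factorization.

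I do not expect any genuine obstacle: the argument is a short chain of substitutions into the recursive definition. The only mild subtlety is the recognition that the common factor $\sigma^2_{RW \cdot Z}/\sigma^2_{W \cdot Z}$ collapses so that the surviving bracket is the partial variance $\sigma^2_{R \cdot ZW}$ rather than an unstructured remainder; everything else is routine bookkeeping.
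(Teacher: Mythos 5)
Your proof is correct and is essentially identical to the paper's: the same three vanishing partial covariances $\sigma_{XW \cdot ZR} = \sigma_{YW \cdot ZR} = \sigma_{XY \cdot ZR} = 0$, the same identities in terms of $\delta_{XR \cdot Z}$ and $\delta_{YR \cdot Z}$, and the same substitution into the recursive definition with the bracket collapsing to $\sigma^2_{R \cdot ZW}$. Your added remark that the argument tacitly requires $\sigma^2_{R \cdot Z} > 0$ is a fair observation the paper leaves implicit.
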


\begin{proof}
Note that $X \ci W | Z \cup R$ implies that
\[
0 = \sigma_{XW \cdot ZR} = \sigma_{XW \cdot Z} - \frac{\sigma_{XR \cdot Z} \sigma_{RW \cdot Z}}{\sigma^2_{R \cdot Z}}
\]
which implies that $\sigma_{XW \cdot Z} = \delta_{XR \cdot Z} \sigma_{RW \cdot Z}$ where $\delta_{XR \cdot Z} = \sigma_{XR \cdot Z} / \sigma^2_{R \cdot Z}$. Likewise, $Y \ci W | Z \cup R$ implies that $\sigma_{YW \cdot Z} = \delta_{YR \cdot Z} \sigma_{RW \cdot Z}$ where $\delta_{YR \cdot Z} = \sigma_{YR \cdot Z} / \sigma^2_{R \cdot Z}$. Likewise, $X \ci Y | Z \cup R$ implies that $\sigma_{XY \cdot Z} = \delta_{XR \cdot Z} \delta_{YR \cdot Z} \sigma^2_{R \cdot Z}$. Therefore,
\begin{align*}
\sigma_{XY \cdot ZW} &= \sigma_{XY \cdot Z} - \frac{\sigma_{XW \cdot Z} \sigma_{WY \cdot Z}}{\sigma^2_{W \cdot Z}}\\
& = \delta_{XR \cdot Z} \delta_{YR \cdot Z} \sigma^2_{R \cdot Z} - \frac{\delta_{XR \cdot Z} \sigma_{RW \cdot Z} \delta_{YR \cdot Z} \sigma_{RW \cdot Z}}{\sigma^2_{W \cdot Z}}\\
& = \delta_{XR \cdot Z} \delta_{YR \cdot Z} \Big( \sigma^2_{R \cdot Z} - \frac{\sigma_{RW \cdot Z} \sigma_{RW \cdot Z}}{\sigma^2_{W \cdot Z}} \Big)\\
& = \delta_{XR \cdot Z} \delta_{YR \cdot Z} \sigma^2_{R \cdot ZW} =  \sigma_{XY \cdot Z} \frac{\sigma^2_{R \cdot ZW}}{\sigma^2_{R \cdot Z}}.
\end{align*}
\end{proof}

\begin{lemma}\label{lem:aux2}
Consider a path diagram. Let $X$, $Y$ and $W$ be nodes and $Z$ a set of nodes. If $Y \ci W | Z \cup X$, then
\[
\sigma_{XY \cdot ZW} =  \sigma_{XY \cdot Z} \frac{\sigma^2_{X \cdot ZW}}{\sigma^2_{X \cdot Z}}.
\]
\end{lemma}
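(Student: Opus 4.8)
The plan is to reduce this to the algebra already carried out in the second case of the proof of Lemma~\ref{lem:root}. That case established exactly the identity claimed here, but under the structural assumption that $X$ is the root of $\pi_{XY}$ with $X \ra B$ a subpath; the only consequence of that assumption actually used was the conditional independence $Y \ci W | Z \cup X$, which is now simply taken as the hypothesis. Since $W$ is a single node, no iteration over a set is needed, so the whole proof is one deterministic substitution.

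First I would expand $\sigma_{YW \cdot ZX}$ via the recursive definition of partial covariance, treating $X$ as the added conditioning variable:
\[
\sigma_{YW \cdot ZX} = \sigma_{YW \cdot Z} - \frac{\sigma_{YX \cdot Z}\, \sigma_{XW \cdot Z}}{\sigma^2_{X \cdot Z}}.
\]
The hypothesis $Y \ci W | Z \cup X$ forces the corresponding partial covariance to vanish, i.e.\ $\sigma_{YW \cdot ZX} = 0$, so solving the display gives
\[
\sigma_{YW \cdot Z} = \frac{\sigma_{YX \cdot Z}\, \sigma_{XW \cdot Z}}{\sigma^2_{X \cdot Z}}.
\]

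Next I would substitute this into the recursive definition of the target quantity, $\sigma_{XY \cdot ZW} = \sigma_{XY \cdot Z} - \sigma_{XW \cdot Z}\,\sigma_{WY \cdot Z}/\sigma^2_{W \cdot Z}$, using the symmetry $\sigma_{WY \cdot Z} = \sigma_{YW \cdot Z}$. After factoring $\sigma_{XY \cdot Z}$ out of the two terms, the remaining bracket is $1 - \sigma^2_{XW \cdot Z}/(\sigma^2_{W \cdot Z}\sigma^2_{X \cdot Z})$; multiplying and dividing by $\sigma^2_{X \cdot Z}$ rewrites this as $(\sigma^2_{X \cdot Z} - \sigma^2_{XW \cdot Z}/\sigma^2_{W \cdot Z})/\sigma^2_{X \cdot Z}$, whose numerator is precisely $\sigma^2_{X \cdot ZW}$ by the recursive definition of the partial variance of $X$ (conditioning set $Z$, added variable $W$). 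This delivers the claim $\sigma_{XY \cdot ZW} = \sigma_{XY \cdot Z}\,\sigma^2_{X \cdot ZW}/\sigma^2_{X \cdot Z}$.

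I do not expect any real obstacle: there is no case analysis and no appeal to graph structure beyond the single separation fact $\sigma_{YW \cdot ZX}=0$. The only points I would verify carefully are the symmetry $\sigma_{WY \cdot Z} = \sigma_{YW \cdot Z}$ and that the recursive definition of $\sigma^2_{X \cdot ZW}$ is invoked in the correct order, so that the final bracket collapses to $\sigma^2_{X \cdot ZW}$ and not to some other partial variance.
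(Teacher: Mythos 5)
Your proposal is correct and coincides with the paper's own proof essentially step for step: both derive $\sigma_{YW \cdot Z} = \sigma_{YX \cdot Z}\,\sigma_{XW \cdot Z}/\sigma^2_{X \cdot Z}$ from $\sigma_{YW \cdot ZX}=0$ (the partial-covariance expression of the hypothesis $Y \ci W | Z \cup X$) and then substitute into the recursion for $\sigma_{XY \cdot ZW}$, factoring out $\sigma_{XY \cdot Z}$ and recognizing the remaining bracket as $\sigma^2_{X \cdot ZW}/\sigma^2_{X \cdot Z}$. Your closing remark that this is the second case of Lemma \ref{lem:root} with the structural assumption replaced by the bare independence hypothesis is also exactly how the paper organizes these results.
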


\begin{proof}
Note that $Y \ci W | Z \cup X$ implies that
\[
0 = \sigma_{YW \cdot ZX} = \sigma_{YW \cdot Z} - \frac{\sigma_{YX \cdot Z} \sigma_{XW \cdot Z}}{\sigma^2_{X \cdot Z}}
\]
which implies that
\[
\sigma_{YW \cdot Z} = \frac{\sigma_{YX \cdot Z} \sigma_{XW \cdot Z}}{\sigma^2_{X \cdot Z}}.
\]
Therefore,
\begin{align*}
\sigma_{XY \cdot ZW} &= \sigma_{XY \cdot Z} - \frac{\sigma_{XW \cdot Z} \sigma_{WY \cdot Z}}{\sigma^2_{W \cdot Z}} = \sigma_{XY \cdot Z} - \frac{\sigma_{XW \cdot Z} \sigma_{YX \cdot Z} \sigma_{XW \cdot Z}}{\sigma^2_{W \cdot Z} \sigma^2_{X \cdot Z}}\\
&= \sigma_{XY \cdot Z} \Big( 1 - \frac{\sigma_{XW \cdot Z} \sigma_{XW \cdot Z}}{\sigma^2_{W \cdot Z} \sigma^2_{X \cdot Z}} \Big) = \frac{\sigma_{XY \cdot Z}}{\sigma^2_{X \cdot Z}} \Big( \sigma^2_{X \cdot Z} - \frac{\sigma_{XW \cdot Z} \sigma_{XW \cdot Z}}{\sigma^2_{W \cdot Z}} \Big)\\
&= \sigma_{XY \cdot Z} \frac{\sigma^2_{X \cdot ZW}}{\sigma^2_{X \cdot Z}}.
\end{align*}
\end{proof}

\begin{lemma}\label{lem:aux3}
Consider a path diagram. Let $X$, $Y$ and $W$ be nodes and $Z$ a set of nodes. If $X \ci W | Z$ or $Y \ci W | Z$, then $\sigma_{XY \cdot ZW} =  \sigma_{XY \cdot Z}$.
\end{lemma}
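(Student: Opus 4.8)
The plan is to invoke the recursive definition of the partial covariance quoted at the start of Section \ref{sec:nocolliders}, namely
\[
\sigma_{XY \cdot ZW} =  \sigma_{XY \cdot Z} - \frac{\sigma_{XW \cdot Z} \sigma_{WY \cdot Z}}{\sigma^2_{W \cdot Z}},
\]
and simply observe that the hypothesis forces the correction term to vanish. This is the same device used in the proof of Lemma \ref{lem:nonrootpasp}, only stated here in the more general route-based setting, so no new machinery is required.

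Concretely, I would argue as follows. The separation criterion connects $d$-separation to a zero partial covariance: if $X \ci W | Z$, then $\sigma_{XW \cdot Z} = 0$, and if $Y \ci W | Z$, then $\sigma_{WY \cdot Z} = 0$. Since the hypothesis of the lemma is the disjunction $X \ci W | Z$ or $Y \ci W | Z$, at least one of the two factors $\sigma_{XW \cdot Z}$ and $\sigma_{WY \cdot Z}$ in the numerator of the correction term is zero. Hence the entire fraction $\sigma_{XW \cdot Z} \sigma_{WY \cdot Z} / \sigma^2_{W \cdot Z}$ equals $0$, and substituting back into the recursive definition yields $\sigma_{XY \cdot ZW} = \sigma_{XY \cdot Z}$, as claimed.

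There is essentially no obstacle here: the statement is an immediate consequence of the defining recursion together with the standard fact that conditional independence in the represented joint normal distribution is equivalent to a vanishing partial covariance. The only point worth a word of care is that the lemma assumes $W$ is a single node (so that the recursive identity applies directly); the extension to a set $W$ would proceed by the same repeated-application argument spelled out in the proof of Lemma \ref{lem:root}, but as stated the single-node case is all that is needed and the proof is a one-line substitution.
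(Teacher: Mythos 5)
Your proof is correct and matches the paper's own argument exactly: both apply the recursive definition of the partial covariance and observe that $X \ci W | Z$ or $Y \ci W | Z$ forces $\sigma_{XW \cdot Z} = 0$ or $\sigma_{WY \cdot Z} = 0$, so the correction term vanishes. Your remark about $W$ being a single node is also consistent with the lemma as stated, so nothing is missing.
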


\begin{proof}
Note that $X \ci W | Z$ and $Y \ci W | Z$ imply $\sigma_{XW \cdot Z}=0$ and $\sigma_{YW \cdot Z}=0$, respectively. Thus, if $X \ci W | Z$ or $Y \ci W | Z$ then
\[
\sigma_{XY \cdot ZW} = \sigma_{XY \cdot Z} - \frac{\sigma_{XW \cdot Z} \sigma_{WY \cdot Z}}{\sigma^2_{W \cdot Z}} = \sigma_{XY \cdot Z}.
\]
\end{proof}

\begin{lemma}\label{lem:root1}
Consider a path diagram. Let $\Pi_{X:Y}$ denote all the $Z$-open paths from $X$ to $Y$. Suppose that no path in $\Pi_{X:Y}$ has colliders. Suppose that all the paths in $\Pi_{X:Y}$ have a subpath $\la R \ra$ or $R = X \ra$. Let $W$ be a node that is $Z$-connected to $R$ by a path that does not contain any node that is in some path in $\Pi_{X:Y}$ except $R$. If $Pa(W) \cup Sp(W)= \emptyset$, then
\[
\sigma_{XY \cdot ZW} =  \sigma_{XY \cdot Z} \frac{\sigma^2_{R \cdot ZW}}{\sigma^2_{R \cdot Z}}.
\]
\end{lemma}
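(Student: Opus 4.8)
The plan is to reduce the statement to a single application of Lemma \ref{lem:aux1}, whose conclusion is \emph{exactly} the displayed identity once the abstract mediator of that lemma is identified with our root $R$. Lemma \ref{lem:aux1} requires three conditional independencies, namely $X \ci Y \mid Z \cup R$, $X \ci W \mid Z \cup R$, and $Y \ci W \mid Z \cup R$; so the whole proof reduces to verifying these three separations and then quoting Lemma \ref{lem:aux1}. This parallels the proof of Lemma \ref{lem:root} in the singly-connected case, with Lemma \ref{lem:aux1} now playing the role of the explicit covariance manipulation carried out there, and with the separation criterion of Appendix D (phrased via routes) supplying the independencies that single-connectedness made transparent before.

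First I would dispose of $X \ci Y \mid Z \cup R$. By hypothesis every $Z$-open path from $X$ to $Y$ lies in $\Pi_{X:Y}$ and hence contains $R$ as a non-collider (the subpath $\la R \ra$, or $R = X \ra$). Adding $R$ to the conditioning set therefore closes every such path. The remaining worry is that conditioning on $R$ might instead \emph{open} a path that was $Z$-closed; to exclude this I would pass between paths and routes via Lemma \ref{lem:pathroute} and argue by contradiction. If adding $R$ activated a collider on an otherwise admissible $X$--$Y$ path---whether because $R$ is that collider or because $R$ is one of its descendants---I would reroute: concatenate the $X$-side segment with the $R$-to-$Y$ segment of a fixed colliderless path of $\Pi_{X:Y}$, producing a route that traverses $R$ as a non-collider ($\ra R \ra$). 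Using that the members of $\Pi_{X:Y}$ are colliderless with common root $R$, Lemma \ref{lem:pathroute} would collapse this route to a $Z$-open $X$--$Y$ path that is not blocked at $R$, contradicting the assumption that all $Z$-open paths are rooted at $R$.

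Next I would handle the two symmetric separations involving $W$, say $X \ci W \mid Z \cup R$ (the argument for $Y$ being identical). Here I would exploit the two structural facts about $W$: it is a source, since $Pa(W) \cup Sp(W) = \emptyset$, so every edge incident to $W$ points away from it; and it reaches the $X$--$Y$ structure only through $R$, since the path $Z$-connecting $W$ to $R$ avoids every node of $\Pi_{X:Y}$ except $R$. Combining these, I would show that any $(Z \cup R)$-open route from $X$ to $W$ must visit $R$ and must do so as a non-collider (because on the $X$-side $R$ is a root, while $W$'s connector leaves $W$ and re-enters the structure at $R$). Since $R \in Z \cup R$, such a route is blocked, yielding $X \ci W \mid Z \cup R$; again Lemma \ref{lem:pathroute} lets me move freely between the route and a witnessing path.

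The hard part will be exactly this separation bookkeeping in the non-singly-connected superclass: rigorously excluding that the single extra conditioning node $R$ opens a collider path (for the first independence), and pinning down that \emph{all} of $W$'s active connections to $X$ and to $Y$ are funnelled through $R$ rather than leaking through some alternative child of $W$ (for the other two). This is precisely where the route/path equivalence of Lemma \ref{lem:pathroute}, the colliderless-common-root hypothesis on $\Pi_{X:Y}$, and the sourcehood of $W$ together with the cleanliness of its connector must be combined with care. Once the three independencies are in hand, the final invocation of Lemma \ref{lem:aux1} is purely mechanical.
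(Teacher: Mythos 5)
Your architecture for the main configuration, where every path in $\Pi_{X:Y}$ has the subpath $\la R \ra$, matches the paper's proof: establish $X \ci W \mid Z \cup R$, $Y \ci W \mid Z \cup R$ and $X \ci Y \mid Z \cup R$ by route-splicing contradictions (follow a putatively open route until it first reaches $R$ or $W$; splice with $\pi_{R:Y}$, or with the connector $\varrho_{R:W}$ using $Pa(W) \cup Sp(W) = \emptyset$ to keep $W$ a non-collider; collapse via Lemma \ref{lem:pathroute}), and then quote Lemma \ref{lem:aux1}. One piece of the bookkeeping you gesture at should be stated precisely: the contradiction is not that the collapsed path is ``not blocked at $R$'' (Lemma \ref{lem:pathroute} does not guarantee the collapsed path visits $R$ the same way, or at all), but that any $Z$-open $X$--$Y$ path must lie in $\Pi_{X:Y}$ and hence contain \emph{two} edges out of $R$, whereas the spliced route contains only one edge out of $R$, namely the first edge of $\pi_{R:Y}$; this is the paper's ``the route does not contain any edge $\la R$ that is in some path in $\Pi_{X:Y}$'' step, and it is exactly the hard part you flagged.

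The genuine gap is the second allowed configuration, $R = X \ra$. You never case-split, and your uniform plan is not executable there: with $R = X$, the separation $X \ci W \mid Z \cup X$ that you propose to verify graphically can simply be false. For instance, take $X \ra Y$ together with $X \ra B \la W$ and $Z = \{B\}$: all hypotheses of the lemma hold with $R = X$, yet the path $X \ra B \la W$ is $(Z \cup X)$-open because the collider $B$ is in $Z$, so no route argument can yield that separation; likewise your phrase ``the route must visit $R$ as a non-collider'' is vacuous when $R$ is the endpoint $X$. The paper therefore treats this case separately: it proves only the one nontrivial separation $Y \ci W \mid Z \cup X$ (by the same follow-until-$X$-or-$W$ scheme) and invokes Lemma \ref{lem:aux2} instead of Lemma \ref{lem:aux1}. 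Your reduction can be repaired without Lemma \ref{lem:aux2} by reading the other two hypotheses of Lemma \ref{lem:aux1} algebraically, since $\sigma_{XW \cdot ZX} = 0$ and $\sigma_{XY \cdot ZX} = 0$ hold trivially as partial covariances conditioned on $X$ itself, so that Lemma \ref{lem:aux1} with $R = X$ degenerates precisely to Lemma \ref{lem:aux2}; but as written, ``verify the three separations and quote Lemma \ref{lem:aux1}'' fails when $R = X$.
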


\begin{proof}
Consider first the case where $\la R \ra$ is a subpath of every path in $\Pi_{X:Y}$. Assume to the contrary that $X \nci W | Z \cup R$ and let $\rho_{X:W}$ be a $(Z \cup R)$-open {\bf route}. Follow $\rho_{X:W}$ until reaching $R$ or $W$, and let $\pi_{X:Y} \in \Pi_{X:Y}$.
\begin{itemize}
\item If $R$ is reached first, then consider the first occurrence of $R$ in $\rho_{X:W}$ and note that $\rho_{X:R} \cup \pi_{R:Y}$ is a $Z$-open route. Moreover, it does not contain any edge $\la R$ that is in some path in $\Pi_{X:Y}$ because, otherwise, $\rho_{X:W}$ contains the edge ($\pi_{R:Y}$ cannot by definition) and, thus, it is not $(Z \cup R)$-open. Then, the path corresponding to $\rho_{X:R} \cup \pi_{R:Y}$ contradicts the assumptions in the lemma.

\item If $W$ is reached first, then let $\varrho_{R:W}$ denote a $Z$-open path that does not contain any node that is in some path in $\Pi_{X:Y}$ except $R$. Such a path exists by the assumptions in the lemma. Thus, $\rho_{X:W} \cup \varrho_{W:R} \cup \pi_{R:Y}$ is a $Z$-open route, because neither $R$ nor $W$ is a collider in it. The latter follows from the assumption that $Pa(W) \cup Sp(W)=\emptyset$. Moreover, the route does not contain any edge $\la R$ that is in some path in $\Pi_{X:Y}$ because, otherwise, $\rho_{X:W}$ contains the edge ($\varrho_{W:R}$ and $\pi_{R:Y}$ cannot by definition) and thus it reaches $R$ first. Then, the path corresponding to $\rho_{X:W} \cup \varrho_{W:R} \cup \pi_{R:Y}$ contradicts the assumptions in the lemma.
\end{itemize}
Consequently, $X \ci W | Z \cup R$. We can analogously prove that $Y \ci W | Z \cup R$. Now, assume to the contrary that $X \nci Y | Z \cup R$ and let $\rho_{X:Y}$ be a $(Z \cup R)$-open {\bf route}. Note that $R$ must be in $\rho_{X:Y}$ because, otherwise, $\rho_{X:Y}$ is $Z$-open and, thus, its corresponding path contradicts the assumptions in the lemma. Consider the first occurrence of $R$ in $\rho_{X:Y}$, and let $\pi_{X:Y} \in \Pi_{X:Y}$. Then, $\rho_{X:R} \cup \pi_{R:Y}$ is a $Z$-open route. Moreover, the route does not contain any edge $\la R$ that is in some path in $\Pi_{X:Y}$ because, otherwise, $\rho_{X:Y}$ contains the edge ($\pi_{R:Y}$ cannot by definition) and thus it is not $(Z \cup R)$-open. Then, the path corresponding to $\rho_{X:R} \cup \pi_{R:Y}$ contradicts the assumptions in the lemma. Consequently, $X \ci Y | Z \cup R$. Therefore, the desired result follows from Lemma \ref{lem:aux1}.

Finally, consider the case where $R = X \ra$ is a subpath of every path in $\Pi_{X:Y}$. Assume to the contrary that $Y \nci W | Z \cup X$ and let $\rho_{Y:W}$ be a $(Z \cup X)$-open {\bf route}. Follow $\rho_{Y:W}$ until reaching $X$ or $W$.
\begin{itemize}
\item If $X$ is reached first, then note that $\rho_{Y:X}$ does not contain any edge $\la X$ that is in some path in $\Pi_{Y:X}$ because, otherwise, $\rho_{Y:W}$ is not $(Z \cup X)$-open. Then, the path corresponding to $\rho_{X:Y}$ contradicts the assumptions in the lemma.

\item If $W$ is reached first, let $\varrho_{X:W}$ denote a $Z$-open path that does not contain any node that is in some path in $\Pi_{X:Y}$ except $X$. Such a path exists by the assumptions in the lemma. Then, $\varrho_{X:W} \cup \rho_{W:Y}$ is a $Z$-open route, because $W$ is not a collider in it due to the assumption that $Pa(W) \cup Sp(W)=\emptyset$. Moreover, the route does not contain any edge $X \ra$ that is in some path in $\Pi_{X:Y}$ because, otherwise, $\rho_{W:Y}$ contains the edge ($\varrho_{X:W}$ cannot by definition) and thus $\rho_{Y:W}$ reaches $X$ first. Then, the path corresponding to $\varrho_{X:W} \cup \rho_{W:Y}$ contradicts the assumptions in the lemma.
\end{itemize}
Consequently, $Y \ci W | Z \cup X$ and, thus, the desired result follows from Lemma \ref{lem:aux2}.
\end{proof}

\begin{lemma}\label{lem:root2}
Consider a path diagram. Let $\Pi_{X:Y}$ denote all the $Z$-open paths from $X$ to $Y$. Suppose that no path in $\Pi_{X:Y}$ has colliders. Suppose that all the paths in $\Pi_{X:Y}$ have a subpath $\la R \ra$ or $R = X \ra$. Let $W$ be a node that is $Z$-connected to $R$ by a path that does not contain any node that is in some path in $\Pi_{X:Y}$ except $R$. If $Ch(W)= \emptyset$ and $\Pi_{X:Y}$ are all the $(Z \cup W)$-open paths from $X$ to $Y$, then
\[
\sigma_{XY \cdot ZW} =  \sigma_{XY \cdot Z} \frac{\sigma^2_{R \cdot ZW}}{\sigma^2_{R \cdot Z}}.
\]
\end{lemma}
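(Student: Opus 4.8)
The plan is to reduce the statement to the elementary covariance identities already established, namely Lemma~\ref{lem:aux1} and Lemma~\ref{lem:aux2}, exactly as Lemma~\ref{lem:root1} does. When every path in $\Pi_{X:Y}$ has the subpath $\la R \ra$, the claimed formula is precisely the conclusion of Lemma~\ref{lem:aux1}, so it suffices to verify its three hypotheses $X \ci W \mid Z \cup R$, $Y \ci W \mid Z \cup R$ and $X \ci Y \mid Z \cup R$. When instead $R = X \ra$ is the shared subpath, the formula is the conclusion of Lemma~\ref{lem:aux2} with $R=X$ playing the root role, so it suffices to verify the single separation $Y \ci W \mid Z \cup X$. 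Thus the whole argument is about establishing separations, and I would prove each by contradiction through the route criterion, moving freely between routes and paths via Lemma~\ref{lem:pathroute}.

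For the separations that do not involve $W$ --- above all $X \ci Y \mid Z \cup R$ --- and for the subcases in which a hypothetical open route reaches $R$ before it reaches $W$, the arguments of Lemma~\ref{lem:root1} carry over verbatim, since they never use the adjacencies of $W$. A $(Z \cup R)$-open route avoiding $R$ would already be $Z$-open, so its corresponding path would contradict the assumption that every path in $\Pi_{X:Y}$ contains $R$; and a route reaching $R$ first is spliced with a segment $\pi_{R:Y} \in \Pi_{X:Y}$ to form a $Z$-open $X$--$Y$ route whose corresponding path lacks the required $\la R \ra$ structure along the edge incident to $R$ that the route provably avoids.

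The genuinely new ingredient, which I expect to be the \emph{main obstacle}, is the subcase of $X \ci W \mid Z \cup R$ (and symmetrically $Y \ci W \mid Z \cup R$, and $Y \ci W \mid Z \cup X$ in the second form) in which the hypothetical open route $\rho_{X:W}$ reaches $W$ before $R$. Here I would splice $\rho_{X:W}$ with the given $Z$-open $R$--$W$ path reversed and with $\pi_{R:Y}$. Because $Ch(W) = \emptyset$, every edge incident to $W$ points into $W$, so at the splice point $W$ is a collider; unlike in Lemma~\ref{lem:root1}, the resulting route is therefore not $Z$-open. It is, however, $(Z \cup W)$-open: conditioning on $W$ activates exactly this one collider, which occurs only once at the first visit to $W$, while $R$ stays an open non-collider because $R \notin Z \cup W$ and $\pi_{R:Y}$ supplies an outgoing edge at $R$, and every other node keeps its status. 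This is exactly where the extra hypothesis is used: since $\Pi_{X:Y}$ comprises all the $(Z \cup W)$-open paths, the path corresponding to this route lies in $\Pi_{X:Y}$.

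The contradiction is then extracted as in Lemma~\ref{lem:root1}: the spliced route cannot contain the edge incident to $R$ that the paths of $\Pi_{X:Y}$ use on their $X$-side, because $\rho_{X:W}$ cannot contain it without reaching $R$ first and neither the $R$--$W$ path nor $\pi_{R:Y}$ contains it by construction, so the corresponding path cannot realize the required $\la R \ra$ subpath, contradicting the assumption on $\Pi_{X:Y}$. Equivalently, every path in $\Pi_{X:Y}$ is colliderless whereas a path retaining $W$ would have $W$ as a collider; the care needed to ensure that the route-to-path reduction of Lemma~\ref{lem:pathroute} does not silently eliminate $W$ and thereby evade the contradiction is precisely the delicate point. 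Once all the separations are in hand, Lemma~\ref{lem:aux1} in the first form and Lemma~\ref{lem:aux2} in the second yield the stated factorization.
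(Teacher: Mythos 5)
Your proposal is correct and follows essentially the same route as the paper's proof: the same reduction to Lemmas \ref{lem:aux1} and \ref{lem:aux2} via the two subpath cases, the same route-splicing contradictions reused verbatim from Lemma \ref{lem:root1} when $R$ is reached first, and the same key observation that $Ch(W)=\emptyset$ turns $W$ into a collider so that the spliced route $\rho_{X:W} \cup \varrho_{W:R} \cup \pi_{R:Y}$ is $(Z \cup W)$-open, which is exactly where the extra hypothesis that $\Pi_{X:Y}$ exhausts the $(Z \cup W)$-open paths is invoked. You correctly identify that the contradiction rests on the spliced route avoiding every edge $\la R$ used by paths in $\Pi_{X:Y}$ (so the corresponding path cannot have the required form regardless of whether the path extraction retains $W$), which matches the paper's argument.
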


\begin{proof}
Consider first the case where $\la R \ra$ is a subpath of every path in $\Pi_{X:Y}$. Assume to the contrary that $X \nci W | Z \cup R$ and let $\rho_{X:W}$ be a $(Z \cup R)$-open {\bf route}. Follow $\rho_{X:W}$ until reaching $R$ or $W$, and let $\pi_{X:Y} \in \Pi_{X:Y}$.
\begin{itemize}
\item If $R$ is reached first, then consider the first occurrence of $R$ in $\rho_{X:W}$ and note that $\rho_{X:R} \cup \pi_{R:Y}$ is a $Z$-open route. Moreover, it does not contain any edge $\la R$ that is in some path in $\Pi_{X:Y}$ because, otherwise, $\rho_{X:W}$ contains the edge ($\pi_{R:Y}$ cannot by definition) and, thus, it is not $(Z \cup R)$-open. Then, the path corresponding to $\rho_{X:R} \cup \pi_{R:Y}$ contradicts the assumptions in the lemma.

\item If $W$ is reached first, then let $\varrho_{R:W}$ denote a $Z$-open path that does not contain any node that is in some path in $\Pi_{X:Y}$ except $R$. Such a path exists by the assumptions in the lemma. Thus, $\rho_{X:W} \cup \varrho_{W:R} \cup \pi_{R:Y}$ is a $(Z \cup W)$-open route from $X$ to $Y$, because $W$ is a collider in it whereas $R$ is not. The former follows from the assumption that $Ch(W)=\emptyset$. Moreover, the route does not contain any edge $\la R$ that is in some path in $\Pi_{X:Y}$ because, otherwise, $\rho_{X:W}$ contains the edge ($\varrho_{W:R}$ and $\pi_{R:Y}$ cannot by definition) and thus it reaches $R$ first. Then, the path corresponding to $\rho_{X:W} \cup \varrho_{W:R} \cup \pi_{R:Y}$ contradicts the assumptions in the lemma.
\end{itemize}
Consequently, $X \ci W | Z \cup R$. We can analogously prove that $Y \ci W | Z \cup R$. Now, assume to the contrary that $X \nci Y | Z \cup R$ and let $\rho_{X:Y}$ be a $(Z \cup R)$-open {\bf route}. Note that $R$ must be in $\rho_{X:Y}$ because, otherwise, $\rho_{X:Y}$ is $Z$-open and, thus, its corresponding path contradicts the assumptions in the lemma. Consider the first occurrence of $R$ in $\rho_{X:Y}$, and let $\pi_{X:Y} \in \Pi_{X:Y}$. Then, $\rho_{X:R} \cup \pi_{R:Y}$ is a $Z$-open route. Moreover, the route does not contain any edge $\la R$ that is in some path in $\Pi_{X:Y}$ because, otherwise, $\rho_{X:Y}$ contains the edge ($\pi_{R:Y}$ cannot by definition) and thus it is not $(Z \cup R)$-open. Then, the path corresponding to $\rho_{X:R} \cup \pi_{R:Y}$ contradicts the assumptions in the lemma. Consequently, $X \ci Y | Z \cup R$. Therefore, the desired result follows from Lemma \ref{lem:aux1}.

Finally, consider the case where $R = X \ra$ is a subpath of every path in $\Pi_{X:Y}$. Assume to the contrary that $Y \nci W | Z \cup X$ and let $\rho_{Y:W}$ be a $(Z \cup X)$-open {\bf route}. Follow $\rho_{Y:W}$ until reaching $X$ or $W$.
\begin{itemize}
\item If $X$ is reached first, then note that $\rho_{Y:X}$ does not contain any edge $\la X$ that is in some path in $\Pi_{Y:X}$ because, otherwise, $\rho_{Y:W}$ is not $(Z \cup X)$-open. Then, the path corresponding to $\rho_{X:Y}$ contradicts the assumptions in the lemma.

\item If $W$ is reached first, let $\varrho_{X:W}$ denote a $Z$-open path that does not contain any node that is in some path in $\Pi_{X:Y}$ except $X$. Such a path exists by the assumptions in the lemma. Then, $\varrho_{X:W} \cup \rho_{W:Y}$ is a $(Z \cup W)$-open route, because $W$ is a collider in it due to the assumption that $Ch(W)=\emptyset$. Moreover, the route does not contain any edge $X \ra$ that is in some path in $\Pi_{X:Y}$ because, otherwise, $\rho_{W:Y}$ contains the edge ($\varrho_{X:W}$ cannot by definition) and thus $\rho_{Y:W}$ reaches $X$ first. Then, the path corresponding to $\varrho_{X:W} \cup \rho_{W:Y}$ contradicts the assumptions in the lemma.
\end{itemize}
Consequently, $Y \ci W | Z \cup X$ and, thus, the desired result follows from Lemma \ref{lem:aux2}.
\end{proof}

\begin{lemma}\label{lem:nonroot1}
Consider a path diagram. Let $\Pi_{X:Y}$ denote all the $Z$-open paths from $X$ to $Y$. Suppose that no path in $\Pi_{X:Y}$ has colliders. Suppose that all the paths in $\Pi_{X:Y}$ have a subpath $\oa R \ra$ or $\oa Y = R$. Let $W$ be a node that is $Z$-connected to $R$ through $Pa(R) \cup Sp(R)$ by a path that does not contain any node that is in some path in $\Pi_{X:Y}$ except $R$. If $Pa(W) \cup Sp(W)= \emptyset$ and $W$ is not $Z$-connected to $R$ through $Ch(R)$, then
\[
\sigma_{XY \cdot ZW} =  \sigma_{XY \cdot Z}.
\]
\end{lemma}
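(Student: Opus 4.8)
The plan is to reduce the statement to a single $d$-separation and then verify that separation by a route argument in the style of Lemmas \ref{lem:root1} and \ref{lem:root2}. Since $R$ is reached from the $X$-side through an arrowhead ($\oa R$) while $W$ reaches $R$ through $Pa(R) \cup Sp(R)$, i.e. also through an arrowhead, the node $R$ behaves as a collider on every connection between $X$ and $W$ that passes through it. Hence the natural target is $X \ci W | Z$; once this is established, Lemma \ref{lem:aux3} immediately gives $\sigma_{XY \cdot ZW} = \sigma_{XY \cdot Z}$. Note that $Y \ci W | Z$ need not hold, because on the $Y$-side $R$ is left through a tail $R \ra$, so $R$ is a non-collider on the $W$-to-$Y$ connection; this is why I aim at $X$ rather than $Y$.

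To prove $X \ci W | Z$ I would argue by contradiction: assume a $Z$-open route $\rho_{X:W}$ and follow it from $X$ until it first reaches $R$ or $W$, exactly as in the proofs of Lemmas \ref{lem:root1} and \ref{lem:root2}. The crucial difference from those lemmas is that here we condition on $Z$ alone rather than on $Z \cup R$, so $R$, being a non-collider on the paths of $\Pi_{X:Y}$, is forced to be a non-collider (rather than a collider) at every internal occurrence on $\rho_{X:W}$. Consequently the orientation of the edge by which $\rho_{X:W}$ approaches $R$ can no longer be read off automatically, and the argument must branch on it.

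If the route reaches $R$ first, with prefix $\rho_{X:R}$, I would inspect the edge incident to $R$. If it is an arrowhead into $R$, then, since $R \notin Z$ is internal and hence a non-collider, the route must leave $R$ toward $W$ through a child, i.e. along $R \ra$; reversing the suffix from $R$ to $W$ exhibits a $Z$-open path from $W$ to $R$ through $Ch(R)$, contradicting the hypothesis that $W$ is not $Z$-connected to $R$ through $Ch(R)$. If instead the edge is a tail $R \ra$ (so $\rho_{X:R}$ reaches $R$ from a child), then $\rho_{X:R} \cup \pi_{R:Y}$, or $\rho_{X:R}$ itself when $R = Y$, is a $Z$-open route from $X$ to $Y$ whose every edge at $R$ is a tail; its corresponding path lies in $\Pi_{X:Y}$ and must therefore contain the arrowhead $\oa R$ of the required subpath, an edge that cannot occur among the edges of this route, a contradiction via Lemma \ref{lem:pathroute}. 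If the route reaches $W$ first, and hence avoids $R$, I would splice $\tau = \rho_{X:W} \cup \varrho_{W:R} \cup \pi_{R:Y}$, where $\varrho_{W:R}$ is the hypothesized clean $W$-to-$R$ path; $\tau$ is $Z$-open because $W$ is a fork (as $Pa(W) \cup Sp(W) = \emptyset$) and $R$ is a chain, and again the corresponding path in $\Pi_{X:Y}$ would need an arrowhead $\oa R$ belonging to a path of $\Pi_{X:Y}$, whereas the only arrowhead into $R$ occurring in $\tau$ comes from $\varrho_{W:R}$, whose interior avoids all nodes of $\Pi_{X:Y}$ and is therefore not such an edge.

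The main obstacle is precisely this shift from conditioning on $Z \cup R$ to conditioning on $Z$: I can no longer force $R$ to be a collider on $\rho_{X:W}$, so I must treat the two orientations of the route's approach to $R$ separately, and it is exactly in the arrowhead sub-case that the hypothesis ``$W$ is not $Z$-connected to $R$ through $Ch(R)$'' is needed; it is the route-language analogue of forbidding a conditioned descendant of $R$ from reactivating the collider $R$. The remaining care is bookkeeping with Lemma \ref{lem:pathroute}: one must track that the required arrowhead edge $\oa R$ of any path in $\Pi_{X:Y}$ survives into, and hence must already occur among the edges of, the route one constructs, which is what turns each tail-approach and clean-approach scenario into a contradiction.
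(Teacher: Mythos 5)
Your proof is correct and takes essentially the same approach as the paper's: both reduce the lemma to $X \ci W | Z$ and invoke Lemma \ref{lem:aux3}, both follow a putative $Z$-open route from $X$ until it first reaches $R$ or $W$, both use the hypothesis that $W$ is not $Z$-connected to $R$ through $Ch(R)$ in the arrowhead sub-case, and both splice with $\pi_{R:Y}$ and $\varrho_{W:R}$ and appeal to Lemma \ref{lem:pathroute} to turn the offending route into a path contradicting the assumptions. Your branching on the orientation of the approach to $R$ and your inline treatment of the $R=Y$ case are just a slightly more explicit rendering of the same argument.
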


\begin{proof}
Consider first the case where $\oa R \ra$ is a subpath of every path in $\Pi_{X:Y}$. Assume to the contrary that $X \nci W | Z$ and let $\rho_{X:W}$ be a $Z$-open {\bf route}. Follow $\rho_{X:W}$ until reaching $R$ or $W$, and let $\pi_{X:Y} \in \Pi_{X:Y}$.
\begin{itemize}
\item If $R$ is reached first, then consider the first occurrence of $R$ in $\rho_{X:W}$ and note that $\rho_{X:R}$ ends with an edge $\oa R$ because, otherwise, $\rho_{X:R} \cup \pi_{R:Y}$ is a $Z$-open route that has a subroute $\la R \ra$ and, thus, its corresponding path contradicts the assumptions in the lemma. Then, $\rho_{R:W}$ must start with an edge $R \ra$ for $\rho_{X:W}$ to be $Z$-open. However, this contradicts the assumption that $W$ is not $Z$-connected to $R$ through $Ch(R)$.

\item If $W$ is reached first, then let $\varrho_{R:W}$ denote a $Z$-open path that does not contain any node that is in some path in $\Pi_{X:Y}$ except $R$. Such a path exists by the assumptions in the lemma. Thus, $\rho_{X:W} \cup \varrho_{W:R} \cup \pi_{R:Y}$ is a $Z$-open route, because neither $R$ nor $W$ is a collider in it. The latter follows from the assumption that $Pa(W) \cup Sp(W)=\emptyset$. Moreover, the route does not contain any edge $\oa R$ that is in some path in $\Pi_{X:Y}$ because, otherwise, $\rho_{X:W}$ contains the edge ($\varrho_{W:R}$ and $\pi_{R:Y}$ cannot by definition) and thus it reaches $R$ first. Then, the path corresponding to $\rho_{X:W} \cup \varrho_{W:R} \cup \pi_{R:Y}$ contradicts the assumptions in the lemma.
\end{itemize}
Consequently, $X \ci W | Z$. When $\oa Y = R$ is a subpath of every path in $\Pi_{X:Y}$, we can prove that $X \ci W | Z$ much in the same way. Consequently, $X \ci W | Z$ in either case and, thus, the desired result follows from Lemma \ref{lem:aux3}.
\end{proof}

\begin{lemma}\label{lem:nonroot2}
Consider a path diagram. Let $\Pi_{X:Y}$ denote all the $Z$-open paths from $X$ to $Y$. Suppose that no path in $\Pi_{X:Y}$ has colliders. Suppose that all the paths in $\Pi_{X:Y}$ have a subpath $\oa R \ra$ or $\oa Y = R$. Let $W$ be a node that is $Z$-connected to $R$ through $Pa(R) \cup Sp(R)$ by a path that does not contain any node that is in some path in $\Pi_{X:Y}$ except $R$. If $Ch(W)= \emptyset$, and $W$ is not $Z$-connected to $R$ through $Ch(R)$, and $\Pi_{X:Y}$ are all the $(Z \cup W)$-open paths from $X$ to $Y$, then
\[
\sigma_{XY \cdot ZW} =  \sigma_{XY \cdot Z}.
\]
\end{lemma}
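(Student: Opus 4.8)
The plan is to reduce everything to Lemma~\ref{lem:aux3}: if I can establish $X \ci W \mid Z$ (or, symmetrically, $Y \ci W \mid Z$), then $\sigma_{XW \cdot Z}=0$ and Lemma~\ref{lem:aux3} immediately yields $\sigma_{XY \cdot ZW}=\sigma_{XY \cdot Z}$. So the whole burden is the separation statement $X \ci W \mid Z$, which I would prove by contradiction along the template of Lemma~\ref{lem:nonroot1}, but handling the collider that now appears at $W$ exactly as in Lemma~\ref{lem:root2}.

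First I would assume $X \nci W \mid Z$, pick a $Z$-open route $\rho_{X:W}$, follow it from $X$ until it first reaches $R$ or $W$, and fix any $\pi_{X:Y}\in\Pi_{X:Y}$. In the case that $R$ is reached first, the argument is identical to Lemma~\ref{lem:nonroot1}: the initial segment $\rho_{X:R}$ must end with an edge $\oa R$, since otherwise $\rho_{X:R}\cup\pi_{R:Y}$ carries the subroute $\la R\ra$ and its corresponding path would make $R$ a root, contradicting the $\oa R\ra$ hypothesis. Once $\rho_{X:R}$ ends $\oa R$, the fact that $R$ is a non-collider outside $Z$ forces the continuation to start $R\ra$, exhibiting a $Z$-open connection from $W$ to $R$ through $Ch(R)$ and contradicting that very hypothesis. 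This case needs no modification.

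The case where $W$ is reached first is where $Ch(W)=\emptyset$ enters, and is the main obstacle. Because $W$ has no children, both the arriving edge of $\rho_{X:W}$ and the arriving edge of the hypothesized path $\varrho_{R:W}$ (which reaches $R$ through $Pa(R)\cup Sp(R)$ and avoids every node of $\Pi_{X:Y}$ except $R$) point into $W$. Hence in the concatenation $\rho_{X:W}\cup\varrho_{W:R}\cup\pi_{R:Y}$ the node $W$ is a collider, while $R$ stays a non-collider of type $\oa R\ra$, so this route is open with respect to $Z\cup\{W\}$ (the collider $W$ lies in the conditioning set, the non-collider $R$ does not). Its corresponding path $\pi^{*}$, obtained from Lemma~\ref{lem:pathroute}, is then a $(Z\cup W)$-open path from $X$ to $Y$ that still carries $W$ as a collider; it is therefore not $Z$-open and in particular has a collider, contradicting the hypothesis that $\Pi_{X:Y}$ consists of exactly the $(Z\cup W)$-open paths and that none of them has a collider. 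The subcase $\oa Y=R$ is handled the same way, concatenating $\rho_{X:W}$ with $\varrho_{W:Y}$ directly.

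I expect the delicate point to be verifying that the route-to-path reduction of Lemma~\ref{lem:pathroute} genuinely preserves $W$ as a collider (rather than deleting it inside an excised loop) and that the concatenated route is really $(Z\cup W)$-open after repeated nodes are accounted for. To handle this I would first replace $\rho_{X:W}$ by an honest $Z$-open path via Lemma~\ref{lem:pathroute}, so that $W$ occurs only at the junction, and then use the hypothesis that $\varrho_{R:W}$ meets no node of $\Pi_{X:Y}$ except $R$ to keep the two junctions clean, exactly as in the corresponding step of Lemma~\ref{lem:root2}. With $X\ci W\mid Z$ secured in both subcases, a single appeal to Lemma~\ref{lem:aux3} finishes the proof.
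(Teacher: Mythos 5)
Your overall strategy coincides with the paper's: reduce to Lemma \ref{lem:aux3} by establishing $X \ci W \mid Z$, argue by contradiction with a $Z$-open route $\rho_{X:W}$, and split on whether $R$ or $W$ is reached first. The ``$R$ reached first'' case is handled exactly as in the paper (and as in Lemma \ref{lem:nonroot1}), and your observation that $Ch(W)=\emptyset$ makes $W$ a collider in the concatenation $\rho_{X:W} \cup \varrho_{W:R} \cup \pi_{R:Y}$, so that the route is $(Z \cup W)$-open, also matches the paper. The gap is in how you extract the final contradiction in the ``$W$ reached first'' case. You need the corresponding path $\pi^{*}$ to \emph{still carry $W$ as a collider}, and your proposed repair---normalize $\rho_{X:W}$ to a path so that $W$ occurs only at the junction, and invoke the disjointness of $\varrho_{R:W}$ from the paths in $\Pi_{X:Y}$---does not secure this. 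Nothing in the hypotheses prevents $\rho_{X:W}$ from intersecting $\varrho_{W:R}$ or $\pi_{R:Y}$ at a node $A \neq W$: $\rho_{X:W}$ may freely pass through nodes of paths in $\Pi_{X:Y}$ (only $\varrho$ is assumed to avoid them), and it may meet interior nodes of $\varrho$. When such a cross-segment repeat $A$ occurs, with one occurrence before $W$ and one after, the reduction of Lemma \ref{lem:pathroute} shortcuts from the first to the last occurrence of $A$ and excises $W$ entirely; the resulting $(Z \cup W)$-open path can then be collider-free, and your criterion (``$\pi^{*}$ has a collider, but no path in $\Pi_{X:Y}$ does'') yields no contradiction. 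Your appeal to ``the corresponding step of Lemma \ref{lem:root2}'' misreads that step: the paper never preserves $W$ there either.

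What the paper does instead is an edge-subset argument that is insensitive to whether $W$ survives the reduction. It shows the route contains no edge $\oa R$ that lies on any path in $\Pi_{X:Y}$: the final edge of $\varrho_{W:R}$ enters $R$ from a node off all $\Pi_{X:Y}$ paths, $\pi_{R:Y}$ leaves $R$ via $R \ra$, and $\rho_{X:W}$ cannot contain such an edge or it would reach $R$ before $W$. Since Lemma \ref{lem:pathroute} guarantees the edges of $\pi^{*}$ are a subset of the edges of the route, and since membership in $\Pi_{X:Y}$ forces a subpath $\oa R \ra$ (or $\oa Y = R$) whose edge $\oa R$ lies on a path in $\Pi_{X:Y}$, the path $\pi^{*}$ cannot belong to $\Pi_{X:Y}$---contradicting the hypothesis that $\Pi_{X:Y}$ exhausts the $(Z \cup W)$-open paths from $X$ to $Y$. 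The fix to your proposal is local: keep your construction of the $(Z \cup W)$-open route, but replace the collider-survival claim with this edge argument. As written, however, the step you yourself flagged as delicate is the one that fails.
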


\begin{proof}
Consider first the case where $\oa R \ra$ is a subpath of every path in $\Pi_{X:Y}$. Assume to the contrary that $X \nci W | Z$ and let $\rho_{X:W}$ be a $Z$-open {\bf route}. Follow $\rho_{X:W}$ until reaching $R$ or $W$, and let $\pi_{X:Y} \in \Pi_{X:Y}$.
\begin{itemize}
\item If $R$ is reached first, then consider the first occurrence of $R$ in $\rho_{X:W}$ and note that $\rho_{X:R}$ ends with an edge $\oa R$ because, otherwise, $\rho_{X:R} \cup \pi_{R:Y}$ is a $Z$-open route that has a subroute $\la R \ra$ and, thus, its corresponding path contradicts the assumptions in the lemma. Then, $\rho_{R:W}$ must start with an edge $R \ra$ for $\rho_{X:W}$ to be $Z$-open. However, this contradicts the assumption that $W$ is not $Z$-connected to $R$ through $Ch(R)$.

\item If $W$ is reached first, then let $\varrho_{R:W}$ denote a $Z$-open path that does not contain any node that is in some path in $\Pi_{X:Y}$ except $R$. Such a path exists by the assumptions in the lemma. Thus, $\rho_{X:W} \cup \varrho_{W:R} \cup \pi_{R:Y}$ is a $(Z \cup W)$-open route, because $W$ is a collider in it whereas $R$ is not. The former follows from the assumption that $Ch(W)=\emptyset$. Moreover, the route does not contain any edge $\oa R$ that is in some path in $\Pi_{X:Y}$ because, otherwise, $\rho_{X:W}$ contains the edge ($\varrho_{W:R}$ and $\pi_{R:Y}$ cannot by definition) and thus it reaches $R$ first. Then, the path corresponding to $\rho_{X:W} \cup \varrho_{W:R} \cup \pi_{R:Y}$ contradicts the assumptions in the lemma.
\end{itemize}
Consequently, $X \ci W | Z$. When $\oa Y = R$ is a subpath of every path in $\Pi_{X:Y}$, we can prove that $X \ci W | Z$ much in the same way. Consequently, $X \ci W | Z$ in either case and, thus, the desired result follows from Lemma \ref{lem:aux3}.
\end{proof}

\begin{lemma}\label{lem:nonroot3}
Consider a path diagram. Let $\Pi_{X:Y}$ denote all the $Z$-open paths from $X$ to $Y$. Suppose that no path in $\Pi_{X:Y}$ has colliders. Suppose that all the paths in $\Pi_{X:Y}$ have a subpath $\oa R \ra$ or $\oa Y = R$. Let $W$ be a node that is $Z$-connected to $R$ through $Ch(R)$ by a path that does not contain any node that is in some path in $\Pi_{X:Y}$ except $R$. If $Pa(W) \cup Sp(W)= \emptyset$ and $W$ is not $Z$-connected to $R$ through $Pa(R) \cup Sp(R)$, then
\[
\sigma_{XY \cdot ZW} =  \sigma_{XY \cdot Z} \frac{\sigma^2_{R \cdot ZW}}{\sigma^2_{R \cdot Z}}.
\]
\end{lemma}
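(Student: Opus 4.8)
The plan is to reduce the identity to the algebraic lemmas already proved and to establish the needed conditional independencies by the same route-tracing technique used in Lemmas \ref{lem:root1}--\ref{lem:nonroot2}. When $R$ is an interior node of the paths (subpath $\oa R \ra$), the stated formula is exactly the conclusion of Lemma \ref{lem:aux1} with pivot $R$, so it suffices to prove $X \ci W | Z \cup R$, $Y \ci W | Z \cup R$ and $X \ci Y | Z \cup R$; when instead $R = Y$ (subpath $\oa Y = R$) the formula is the conclusion of Lemma \ref{lem:aux2} with the roles of $X$ and $Y$ interchanged, so it suffices to prove the single independence $X \ci W | Z \cup Y$. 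I would treat the two cases in parallel, as is done for $\la R \ra$ versus $R = X \ra$ in the earlier lemmas. Throughout I would use that $R$ is a non-collider on every path in $\Pi_{X:Y}$ and hence $R \notin Z$, and that a route may be inspected at the first or last occurrence of a node so that the relevant subroute avoids a second copy of $R$.

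To prove $X \ci W | Z \cup R$ (and symmetrically $Y \ci W | Z \cup R$), I would assume a $(Z \cup R)$-open route $\rho_{X:W}$ and follow it until it first meets $R$ or $W$. If $W$ is met first then $\rho_{X:W}$ is in fact $Z$-open, and I would splice it with the reversed connecting route $\varrho_{W:R}$ (which exists through $Ch(R)$ by hypothesis) and with $\pi_{R:Y}$ for a fixed $\pi \in \Pi_{X:Y}$. Because $Pa(W) \cup Sp(W) = \emptyset$, the node $W$ is a non-collider of the spliced route, and at $R$ both incident edges point out of $R$, so the corresponding $Z$-open path would have $R$ as a root rather than as a chain, contradicting the assumption that every path in $\Pi_{X:Y}$ has the subpath $\oa R \ra$. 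If instead $R$ is met, then at its last occurrence it is a collider, so the route leaves $R$ through $Pa(R) \cup Sp(R)$, and the subroute from that occurrence to $W$ is a $Z$-open route witnessing that $W$ is $Z$-connected to $R$ through $Pa(R) \cup Sp(R)$, contradicting the hypothesis that it is not. Lemma \ref{lem:aux1} (or \ref{lem:aux2} when $R = Y$) then delivers the independencies involving $W$.

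The remaining independence $X \ci Y | Z \cup R$ is where I expect the real difficulty. In the root lemmas the analogous splice manufactures a chain at a node that was supposed to be a root, giving an immediate contradiction; here $R$ is already a chain, so a $(Z \cup R)$-open $X$--$Y$ route through $R$ as a chain is perfectly consistent, and the only way such a route can arise from conditioning on $R$ is by opening $R$ as a collider. I would therefore assume a $(Z \cup R)$-open route $\rho_{X:Y}$, note that it must contain $R$ (otherwise it is $Z$-open and its path lies in $\Pi_{X:Y}$ yet omits $R$, contradicting the subpath hypothesis), take an occurrence at which $R$ is necessarily a collider, and read off a $Z$-open approach to $R$ through $Pa(R) \cup Sp(R)$. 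Concatenating this approach with the connection of $W$ through $Ch(R)$, pivoting at the chain $R$, produces a $Z$-open route joining $W$ to $R$ through $Pa(R) \cup Sp(R)$, again contradicting the hypothesis and forcing $X \ci Y | Z \cup R$.

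The main obstacle is the bookkeeping in these route constructions rather than the algebra: I must verify that each spliced object is genuinely $Z$-open, that the offending feature (an out-edge at $R$, an approach through $Pa(R) \cup Sp(R)$, or a surviving collider of the connecting route) is preserved when the route is reduced to a path by Lemma \ref{lem:pathroute}, and that this reduction does not silently convert a $Pa(R) \cup Sp(R)$ connection into a $Ch(R)$ connection or vice versa. Once the independencies are secured, the conclusion is immediate from Lemmas \ref{lem:aux1} and \ref{lem:aux2}.
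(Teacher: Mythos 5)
Your reduction to Lemmas \ref{lem:aux1} and \ref{lem:aux2}, and your treatment of $X \ci W | Z \cup R$ and $Y \ci W | Z \cup R$ (as well as of the case $R = Y$ via $X \ci W | Z \cup Y$), match the paper's proof: same case split at the first meeting with $R$ or $W$, same splices (your ``symmetric'' splice for the $Y$-independence is the paper's $\pi_{X:R} \cup \varrho_{R:W} \cup \rho_{W:Y}$ read backwards), and the same use of the last occurrence of $R$, which must be a collider on a $(Z \cup R)$-open route, to contradict the hypothesis that $W$ is not $Z$-connected to $R$ through $Pa(R) \cup Sp(R)$. The genuine gap is in the third independence $X \ci Y | Z \cup R$, exactly where you anticipated difficulty. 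Concatenating the ``$Z$-open approach to $R$ through $Pa(R) \cup Sp(R)$'' (extracted from a putative $(Z \cup R)$-open route $\rho_{X:Y}$) with the $Ch(R)$-connection $\varrho_{R:W}$ yields a $Z$-open route from $X$ (or $Y$) to $W$, pivoting at $R$ as a chain; it does \emph{not} yield a $Z$-open route joining $W$ to $R$ through $Pa(R) \cup Sp(R)$, since the constructed route terminates at $W$, and the only $Pa(R) \cup Sp(R)$ approach in sight emanates from $X$ or $Y$. But an $X$-to-$R$ connection through $Pa(R) \cup Sp(R)$ is entirely compatible with the hypotheses --- every $\pi \in \Pi_{X:Y}$ reaches $R$ that way, via its subpath $\oa R \ra$ --- so the contradiction with ``$W$ is not $Z$-connected to $R$ through $Pa(R) \cup Sp(R)$'' never materializes. (One could try to bounce the route off the predecessor $P$ of the collider occurrence, as in $\cdots \la R \ao P \oa R$, to manufacture an arrival at $R$ through $Pa(R) \cup Sp(R)$ starting from $W$, but then $P$ becomes a fork or a collider on the new route, and openness depends on whether $P \in Z$ and on whether the edge at $P$ is directed or bidirected; your sketch does not address this.)

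The paper proves $X \ci Y | Z \cup R$ by a different, global observation that your sketch misses entirely: the hypothesis that $W$ is $Z$-connected to $R$ through $Ch(R)$, combined with $Pa(W) \cup Sp(W) = \emptyset$, forces some descendant of $R$ to lie in $Z$. Indeed, the connecting path leaves $R$ through a child but must arrive at $W$ along an edge out of $W$ (as $W$ has no parents or spouses), so it turns around at a collider; that collider is a descendant of $R$ and, by openness, is in $Z$ or has a descendant in $Z$. Consequently conditioning on $R$ opens nothing new: on any $(Z \cup R)$-open \emph{path} from $X$ to $Y$ --- note that the paper deliberately switches to the path criterion here, under which descendants of colliders act as openers --- the node $R$ must be a collider or a descendant of a collider, and since $R$ has a descendant in $Z$, the path is already $Z$-open. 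It then belongs to $\Pi_{X:Y}$ while containing a collider, contradicting the assumption that no path in $\Pi_{X:Y}$ has colliders. Replace your third-independence argument with this descendant-in-$Z$ argument; the remainder of your proposal then goes through essentially as in the paper.
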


\begin{proof}
Consider first the case where $\oa R \ra$ is a subpath of every path in $\Pi_{X:Y}$. Assume to the contrary that $X \nci W | Z \cup R$ and let $\rho_{X:W}$ be a $(Z \cup R)$-open {\bf route}. Follow $\rho_{X:W}$ until reaching $R$ or $W$, and let $\pi_{X:Y} \in \Pi_{X:Y}$.
\begin{itemize}
\item If $R$ is reached first, then note $R$ must be a collider in $\rho_{X:W}$ for this to be $(Z \cup R)$-open. However, the last occurrence of $R$ in $\rho_{X:W}$ contradicts the assumption that $W$ is not $Z$-connected to $R$ through $Pa(R) \cup Sp(R)$.

\item If $W$ is reached first, then let $\varrho_{R:W}$ denote a $Z$-open path that does not contain any node that is in some path in $\Pi_{X:Y}$ except $R$. Such a path exists by the assumptions in the lemma. Thus, $\rho_{X:W} \cup \varrho_{W:R} \cup \pi_{R:Y}$ is a $Z$-open route, because neither $R$ nor $W$ is a collider in it. The latter follows from the assumption that $Pa(W) \cup Sp(W)=\emptyset$. Moreover, the route does not contain any edge $\oa R$ that is in some path in $\Pi_{X:Y}$ because, otherwise, $\rho_{X:W}$ contains the edge ($\varrho_{W:R}$ and $\pi_{R:Y}$ cannot by definition) and thus it reaches $R$ first. Then, the path corresponding to $\rho_{X:W} \cup \varrho_{W:R} \cup \pi_{R:Y}$ contradicts the assumptions in the lemma.
\end{itemize}
Consequently, $X \ci W | Z \cup R$. Now, assume to the contrary that $Y \nci W | Z \cup R$ and let $\rho_{Y:W}$ be a $(Z \cup R)$-open {\bf route}. Follow $\rho_{Y:W}$ until reaching $R$ or $W$, and let $\pi_{X:Y} \in \Pi_{X:Y}$.
\begin{itemize}
\item If $R$ is reached first, then note $R$ must be a collider in $\rho_{Y:W}$ for this be $(Z \cup R)$-open. However, the last occurrence of $R$ in $\rho_{Y:W}$ contradicts the assumption that $W$ is not $Z$-connected to $R$ through $Pa(R) \cup Sp(R)$.

\item If $W$ is reached first, then let $\varrho_{R:W}$ denote a $Z$-open path that leaves $R$ through $Ch(R)$ and that does not contain any node that is in some path in $\Pi_{X:Y}$ except $R$. Such a path exists by the assumptions in the lemma. Thus, $\pi_{X:R} \cup \varrho_{R:W} \cup \rho_{W:Y}$ is a $Z$-open route, because neither $R$ nor $W$ is a collider in it. The latter follows from the assumption that $Pa(W) \cup Sp(W)=\emptyset$. Moreover, the route does not contain any edge $R \ra$ that is in some path in $\Pi_{X:Y}$ because, otherwise, $\rho_{W:Y}$ contains the edge ($\pi_{X:R}$ and $\varrho_{R:W}$ cannot by definition) and thus $\rho_{Y:W}$ reaches $R$ first. Then, the path corresponding to $\pi_{X:R} \cup \varrho_{R:W} \cup \rho_{W:Y}$ contradicts the assumptions in the lemma.
\end{itemize}
Consequently, $Y \ci W | Z \cup R$. Now, assume to the contrary that $X \nci Y | Z \cup R$ and let $\rho_{X:Y}$ be a $(Z \cup R)$-open {\bf path}. Note that $R$ must be a collider or a descendant of a collider in $\rho_{X:Y}$ because, otherwise, $R$ is not in $\rho_{X:Y}$ and, thus, $\rho_{X:Y}$ is $Z$-open, which contradicts the assumptions in the lemma. Note also that the assumption that $W$ is $Z$-connected to $R$ through $Ch(R)$ implies that some descendant of $R$ is in $Z \cup W$. Actually, some descendant of $R$ must be in $Z$ due to the assumption that $Pa(W) \cup Sp(W)=\emptyset$. Then, $\rho_{X:Y}$ is $Z$-open, which contradicts the assumptions in the lemma. Consequently, $X \ci Y | Z \cup R$. Therefore, the desired result follows from Lemma \ref{lem:aux1}.

Finally, consider the case where $\oa Y = R$ is a subpath of every path in $\Pi_{X:Y}$. Assume to the contrary that $X \nci W | Z \cup Y$ and let $\rho_{X:W}$ be a $(Z \cup Y)$-open {\bf route}. Follow $\rho_{X:W}$ until reaching $Y$ or $W$.
\begin{itemize}
\item If $Y$ is reached first, then note $Y$ must be a collider in $\rho_{X:W}$ for this to be $(Z \cup Y)$-open. However, the last occurrence of $Y$ in $\rho_{X:W}$ contradicts the assumption that $W$ is not $Z$-connected to $R$ through $Pa(Y) \cup Sp(Y)$.

\item If $W$ is reached first, let $\varrho_{Y:W}$ denote a $Z$-open path that does not contain any node that is in some path in $\Pi_{X:Y}$ except $Y$. Such a path exists by the assumptions in the lemma. Then, $\varrho_{X:W} \cup \rho_{W:Y}$ is a $Z$-open route, because $W$ is not a collider in it due to the assumption that $Pa(W) \cup Sp(W)=\emptyset$. Moreover, the route does not contain any edge $\oa Y$ that is in some path in $\Pi_{X:Y}$ because, otherwise, $\rho_{X:W}$ contains the edge ($\varrho_{W:Y}$ cannot by definition) and thus it reaches $Y$ first. Then, the path corresponding to $\varrho_{X:W} \cup \rho_{W:Y}$ contradicts the assumptions in the lemma.
\end{itemize}
Consequently, $X \ci W | Z \cup Y$ and, thus, the desired result follows from Lemma \ref{lem:aux2}.
\end{proof}

\begin{lemma}\label{lem:nonroot4}
Consider a path diagram. Let $\Pi_{X:Y}$ denote all the $Z$-open paths from $X$ to $Y$. Suppose that no path in $\Pi_{X:Y}$ has colliders. Suppose that all the paths in $\Pi_{X:Y}$ have a subpath $\oa R \ra$ or $\oa Y = R$. Let $W$ be a node that is $Z$-connected to $R$ through $Ch(R)$ by a path that does not contain any node that is in some path in $\Pi_{X:Y}$ except $R$. If $Ch(W)= \emptyset$, and $W$ is not $Z$-connected to $R$ by any path that reaches $R$ through $Pa(R) \cup Sp(R)$, and $\Pi_{X:Y}$ are all the $(Z \cup W)$-open paths from $X$ to $Y$, then
\[
\sigma_{XY \cdot ZW} =  \sigma_{XY \cdot Z} \frac{\sigma^2_{R \cdot ZW}}{\sigma^2_{R \cdot Z}}.
\]
\end{lemma}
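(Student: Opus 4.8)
The plan is to run the skeleton of the proof of Lemma \ref{lem:nonroot3} essentially unchanged, because Lemma \ref{lem:nonroot4} stands to \ref{lem:nonroot3} exactly as \ref{lem:root2} stands to \ref{lem:root1} and as \ref{lem:nonroot2} stands to \ref{lem:nonroot1}: the hypothesis $Pa(W)\cup Sp(W)=\emptyset$ (which forces $W$ to be a non-collider in every route we build, so the constructed routes are $Z$-open) is traded for $Ch(W)=\emptyset$ (which forces $W$ to be a collider, so the constructed routes become $(Z\cup W)$-open), and to compensate we are handed the extra hypothesis that $\Pi_{X:Y}$ already lists \emph{all} the $(Z\cup W)$-open paths. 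So I would split into the two cases $\oa R \ra$ and $\oa Y = R$ and reduce the statement to conditional independencies feeding the auxiliary lemmas: in the first case establish $X \ci W | Z\cup R$, $Y \ci W | Z\cup R$ and $X \ci Y | Z\cup R$ and invoke Lemma \ref{lem:aux1}; in the second case ($R=Y$) establish $X \ci W | Z\cup Y$ and invoke Lemma \ref{lem:aux2} with the roles of $X$ and $Y$ exchanged, which produces the ratio $\sigma^2_{R\cdot ZW}/\sigma^2_{R\cdot Z}$.

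Each independence is proved by contradiction from a hypothetical open route, followed to its first encounter with $R$ (or $Y$) or with $W$. When $R$ is reached first the argument is verbatim that of \ref{lem:nonroot3}: the conditioning node must be a collider at that occurrence, so the terminal segment back to $W$ is a $Z$-open route reaching $R$ through $Pa(R)\cup Sp(R)$, contradicting the hypothesis that $W$ is not $Z$-connected to $R$ through $Pa(R)\cup Sp(R)$. When $W$ is reached first I splice the assumed route with the guaranteed connecting path $\varrho_{R:W}$ (through $Ch(R)$) and with a subroute of a path in $\Pi_{X:Y}$; the only change from \ref{lem:nonroot3} is that, because $Ch(W)=\emptyset$, $W$ is a collider in the spliced route, so that route is $(Z\cup W)$-open rather than $Z$-open. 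Its corresponding path (Lemma \ref{lem:pathroute}) is then a $(Z\cup W)$-open path from $X$ to $Y$ and hence lies in $\Pi_{X:Y}$ by hypothesis; but the splicing argument shows the route omits every edge $\oa R$ (resp.\ $R\ra$, resp.\ $\oa Y$) occurring in a path of $\Pi_{X:Y}$, so the corresponding path cannot contain the subpath $\oa R \ra$ that every member of $\Pi_{X:Y}$ must carry, which is the contradiction.

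The one genuinely new piece of reasoning, and the main obstacle, is $X \ci Y | Z\cup R$. Assuming a $(Z\cup R)$-open path $\rho_{X:Y}$, the same observation as in \ref{lem:nonroot3} shows $R$ must lie on $\rho_{X:Y}$ as, or below, a collider, since otherwise $\rho_{X:Y}$ would be $Z$-open, hence in $\Pi_{X:Y}$, hence would carry $\oa R \ra$ and so contain $R$. Here I use $\varrho_{R:W}$ through $Ch(R)$ together with $Ch(W)=\emptyset$: starting from $R$ it leaves through a child, and since going forward from $R$ no fork can occur before a collider, either it bends at some node whose first bend is a collider that is a descendant of $R$ and that, by $Z$-openness of $\varrho_{R:W}$, lies in or has a descendant in $Z$; or it never bends, is a directed path, and terminates at the sink $W$, making $W$ itself a descendant of $R$. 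In either case $R$ has a descendant in $Z\cup W$, so the collider on $\rho_{X:Y}$ opened by $R$ is already opened by $Z\cup W$; thus $\rho_{X:Y}$ is $(Z\cup W)$-open and therefore lies in $\Pi_{X:Y}$, contradicting the absence of colliders in its members. This directed-path-to-the-sink alternative is exactly where $Ch(W)=\emptyset$ replaces the source argument of \ref{lem:nonroot3}, and it is where the descendant set must be enlarged from $Z$ to $Z\cup W$ and where the hypothesis that $\Pi_{X:Y}$ exhausts the $(Z\cup W)$-open paths is invoked.

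Finally, the case $\oa Y = R$ is handled in the same way but needs only $X \ci W | Z\cup Y$, after which Lemma \ref{lem:aux2} gives the stated ratio with $R=Y$; the conclusion $\sigma_{XY\cdot ZW}=\sigma_{XY\cdot Z}\,\sigma^2_{R\cdot ZW}/\sigma^2_{R\cdot Z}$ then holds in both cases. Since $W$ is a single node throughout, no further set-extension step is required.
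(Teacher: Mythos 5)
Your proposal is correct and takes essentially the same approach as the paper's proof: the same case split ($\oa R \ra$ versus $\oa Y = R$), the same reduction to Lemma \ref{lem:aux1} via the three independencies $X \ci W | Z \cup R$, $Y \ci W | Z \cup R$, $X \ci Y | Z \cup R$ (and to Lemma \ref{lem:aux2} with roles exchanged in the second case), and the same route-following contradictions in which $Ch(W)=\emptyset$ forces $W$ to be a collider so the spliced route is $(Z \cup W)$-open and the exhaustiveness of $\Pi_{X:Y}$ over $(Z \cup W)$-open paths yields the contradiction. Your detailed argument that $R$ has a descendant in $Z \cup W$ (directed path to the sink $W$, or a collider opened by $Z$) merely spells out what the paper asserts in a single line in its treatment of $X \ci Y | Z \cup R$.
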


\begin{proof}
Consider first the case where $\oa R \ra$ is a subpath of every path in $\Pi_{X:Y}$. Assume to the contrary that $X \nci W | Z \cup R$ and let $\rho_{X:W}$ be a $(Z \cup R)$-open {\bf route}. Follow $\rho_{X:W}$ until reaching $R$ or $W$, and let $\pi_{X:Y} \in \Pi_{X:Y}$.
\begin{itemize}
\item If $R$ is reached first, then note that $R$ must be a collider in $\rho_{X:W}$ for this to be $(Z \cup R)$-open. However, the last occurrence of $R$ in $\rho_{X:W}$ contradicts the assumption that $W$ is not $Z$-connected to $R$ through $Pa(R) \cup Sp(R)$.

\item If $W$ is reached first, then let $\varrho_{R:W}$ denote a $Z$-open path that does not contain any node that is in some path in $\Pi_{X:Y}$ except $R$. Such a path exists by the assumptions in the lemma. Thus, $\rho_{X:W} \cup \varrho_{W:R} \cup \pi_{R:Y}$ is a $(Z \cup W)$-open route, because $W$ is a collider in it whereas $R$ is not. The former follows from the assumption that $Ch(W)=\emptyset$. Moreover, the route does not contain any edge $\oa R$ that is in some path in $\Pi_{X:Y}$ because, otherwise, $\rho_{X:W}$ contains the edge ($\varrho_{W:R}$ and $\pi_{R:Y}$ cannot by definition) and thus it reaches $R$ first. Then, the path corresponding to $\rho_{X:W} \cup \varrho_{W:R} \cup \pi_{R:Y}$ contradicts the assumptions in the lemma.
\end{itemize}
Consequently, $X \ci W | Z \cup R$. Now, assume to the contrary that $Y \nci W | Z \cup R$ and let $\rho_{Y:W}$ be a $(Z \cup R)$-open {\bf route}. Follow $\rho_{Y:W}$ until reaching $R$ or $W$, and let $\pi_{X:Y} \in \Pi_{X:Y}$.
\begin{itemize}
\item If $R$ is reached first, then note that $R$ must be a collider in $\rho_{Y:W}$ for this to be $(Z \cup R)$-open. However, the last occurrence of $R$ in $\rho_{Y:W}$ contradicts the assumption that $W$ is not $Z$-connected to $R$ through $Pa(R) \cup Sp(R)$.

\item If $W$ is reached first, then let $\varrho_{R:W}$ denote a $Z$-open path that leaves $R$ through $Ch(R)$ and that does not contain any node that is in some path in $\Pi_{X:Y}$ except $R$. Such a path exists by the assumptions in the lemma. Thus, $\pi_{X:R} \cup \varrho_{R:W} \cup \rho_{W:Y}$ is a $(Z \cup W)$-open route, because $W$ is a collider in it whereas $R$ is not. The former follows from the assumption that $Ch(W)=\emptyset$. Moreover, the route does not contain any edge $R \ra$ that is in some path in $\Pi_{X:Y}$ because, otherwise, $\rho_{W:Y}$ contains the edge ($\pi_{X:R}$ and $\varrho_{R:W}$ cannot by definition) and thus $\rho_{Y:W}$ reaches $R$ first. Then, the path corresponding to $\pi_{X:R} \cup \varrho_{R:W} \cup \rho_{W:Y}$ contradicts the assumptions in the lemma.
\end{itemize}
Consequently, $Y \ci W | Z \cup R$. Now, assume to the contrary that $X \nci Y | Z \cup R$ and let $\rho_{X:Y}$ be a $(Z \cup R)$-open {\bf path}. Note that $R$ must be a collider or a descendant of a collider in $\rho_{X:Y}$ because, otherwise, $R$ is not in $\rho_{X:Y}$ and, thus, $\rho_{X:Y}$ is $Z$-open, which contradicts the assumptions in the lemma. Note also that the assumption that $W$ is $Z$-connected to $R$ through $Ch(R)$ implies that some descendant of $R$ is in $Z \cup W$. Then, $\rho_{X:Y}$ is $(Z \cup W)$-open, which contradicts the assumptions in the lemma. Consequently, $X \ci Y | Z \cup R$. Therefore, the desired result follows from Lemma \ref{lem:aux1}.

Finally, consider the case where $\oa Y = R$ is a subpath of every path in $\Pi_{X:Y}$. Assume to the contrary that $X \nci W | Z \cup Y$ and let $\rho_{X:W}$ be a $(Z \cup Y)$-open {\bf route}. Follow $\rho_{X:W}$ until reaching $Y$ or $W$.
\begin{itemize}
\item If $Y$ is reached first, then note $Y$ must be a collider in $\rho_{X:W}$ for this to be $(Z \cup Y)$-open. However, the last occurrence of $Y$ in $\rho_{X:W}$ contradicts the assumption that $W$ is not $Z$-connected to $R$ through $Pa(Y) \cup Sp(Y)$.

\item If $W$ is reached first, let $\varrho_{Y:W}$ denote a $Z$-open path that does not contain any node that is in some path in $\Pi_{X:Y}$ except $Y$. Such a path exists by the assumptions in the lemma. Then, $\varrho_{X:W} \cup \rho_{W:Y}$ is a $(Z \cup W)$-open route, because $W$ is a collider in it due to the assumption that $Ch(W)=\emptyset$. Moreover, the route does not contain any edge $\oa Y$ that is in some path in $\Pi_{X:Y}$ because, otherwise, $\rho_{X:W}$ contains the edge ($\varrho_{Y:W}$ cannot by definition) and thus it reaches $Y$ first. Then, the path corresponding to $\varrho_{X:W} \cup \rho_{W:Y}$ contradicts the assumptions in the lemma.
\end{itemize}
Consequently, $X \ci W | Z \cup Y$ and, thus, the desired result follows from Lemma \ref{lem:aux2}.
\end{proof}

\begin{proof}[Proof of Theorem \ref{the:condpathroot}]
Consider hereinafter the path diagram conditioned on $S$. We first compute $\sigma_{XY \cdot Z^1}$ from $\sigma_{XY}$ by adding the nodes in $Z^1$ to the conditioning set in the order $W_1, W_2, \ldots$. The assumption that $\Pi_{X:Y}$ are all the $Z$-open paths from $X$ to $Y$ implies that $\Pi_{X:Y}$ are all the $(W_{1:j-1})$-open paths from $X$ to $Y$ because, otherwise, any other path cannot be closed afterwards which contradicts the assumption. To see it, note that a node $W \in Z$ does not close any open path, since there is no subpath $\oa W \ra$ or $\la W \ra$ due to the conditioning operation. Then,
\[
\sigma_{XY \cdot W_1} =  \sigma_{XY} \frac{\sigma^2_{X_1 \cdot W_1}}{\sigma^2_{X_1}}
\]
by Lemma \ref{lem:root1} if $Pa(W_1) \cup Sp(W_1) = \emptyset$, or Lemma \ref{lem:root2} if $Ch(W_1)=\emptyset$. Likewise, 
\[
\sigma_{XY \cdot W_1 W_2} =  \sigma_{XY \cdot W_1} \frac{\sigma^2_{X_1 \cdot W_1 W_2}}{\sigma^2_{X_1 \cdot W_1}}
\]
by Lemma \ref{lem:root1} or \ref{lem:root2}. Combining the last two equations gives
\[
\sigma_{XY \cdot W_1 W_2} =  \sigma_{XY} \frac{\sigma^2_{X_1 \cdot W_1 W_2}}{\sigma^2_{X_1}}.
\]
Continuing with this process for the rest of the nodes in $Z^1$ gives
\begin{equation}\label{eq:Z1a}
\sigma_{XY \cdot Z^1} =  \sigma_{XY} \frac{\sigma^2_{X_1 \cdot Z^1}}{\sigma^2_{X_1}}.
\end{equation}

Now, we compute $\sigma_{XY \cdot Z^1_1}$ from $\sigma_{XY \cdot Z^1}$ by adding the nodes in $Z_1$ to the conditioning set in the order $W_1, W_2, \ldots$. Recall from above that $\Pi_{X:Y}$ are all the $(Z^1 \cup W_{1:j-1})$-open paths from $X$ to $Y$. Then,
\[
\sigma_{XY \cdot Z^1 Z_1} =  \sigma_{XY \cdot Z^1} \frac{\sigma^2_{X_1 \cdot Z^1 Z_1}}{\sigma^2_{X_1 \cdot Z^1}}
\]
by repeating the reasoning that led to Equation \ref{eq:Z1a}. Moreover, combining the last two equations yields
\begin{equation}\label{eq:Z11}
\sigma_{XY \cdot Z^1_1} =  \sigma_{XY} \frac{\sigma^2_{X_1 \cdot Z^1_1}}{\sigma^2_{X_1}}.
\end{equation}

Now, we compute $\sigma_{XY \cdot Z^1_1 Z^2}$ from $\sigma_{XY \cdot Z^1_1}$ by adding the nodes in $Z^2$ to the conditioning set in the order $W_1, W_2, \ldots$. The assumption that there is no $Z$-open route $X_2 \ra A \oo \cdots \oo B \oa X_2$ implies that there is no $(Z^1_1 \cup W_{1:j-1})$-open path from $W_j$ to $X_2$ through $Ch(X_2)$. To see it, assume the opposite. Then, there are $(Z^1_1 \cup W_{1:j-1})$-open paths from $W_j$ to $X_2$ through both $Ch(X_2)$ and $Pa(X_2) \cup Sp(X_2)$. This implies that there is a route $X_2 \ra A \oo \cdots \oo B \oa X_2$ that contains $W_j$, and the route is $(Z^1_1 \cup W_{1:j-1})$-open or $(Z^1_1 \cup W_{1:j})$-open. Then, there is a path $\ra A \oo \cdots \oo B \oa$ that is $(Z^1_1 \cup W_{1:j-1})$-open or $(Z^1_1 \cup W_{1:j})$-open by Lemma \ref{lem:pathroute}. However, this path cannot be closed afterwards which contradicts the assumption. To see it, note that a node $W \in Z$ does not close any open path, since there is no subpath $\oa W \ra$ or $\la W \ra$ due to the conditioning operation. Consequently, there is no $(Z^1_1 \cup W_{1:j-1})$-open path from $W_j$ to $X_2$ through $Ch(X_2)$. Recall also from above that $\Pi_{X:Y}$ are all the $(Z^1_1 \cup W_{1:j-1})$-open paths from $X$ to $Y$. Then,
\[
\sigma_{XY \cdot Z^1_1 W_1} =  \sigma_{XY \cdot Z^1_1}.
\]
by Lemma \ref{lem:nonroot1} if $Pa(W_1) \cup Sp(W_1) = \emptyset$, or Lemma \ref{lem:nonroot2} if $Ch(W_1)=\emptyset$. Likewise,
\[
\sigma_{XY \cdot Z^1_1 W_1 W_2} =  \sigma_{XY \cdot Z^1_1 W_1}.
\]
by Lemma \ref{lem:nonroot1} or \ref{lem:nonroot2}. Combining the last two equations gives
\[
\sigma_{XY \cdot W_1 W_2} =  \sigma_{XY \cdot Z^1_1}.
\]
Continuing with this process for the rest of the nodes in $Z^2$ gives
\begin{equation}\label{eq:Z2a}
\sigma_{XY \cdot Z^1_1 Z^2} =  \sigma_{XY \cdot  Z^1_1}.
\end{equation}

Now, we compute $\sigma_{XY \cdot Z^1_1 Z^2_2}$ from $\sigma_{XY \cdot Z^1_1 Z^2}$ by adding the nodes in $Z_2$ to the conditioning set in the order $W_1, W_2, \ldots$. Recall from above that $\Pi_{X:Y}$ are all the $(Z^{1:2}_1 \cup W_{1:j-1})$-open paths from $X$ to $Y$. Recall also from above that the assumption that there is no $Z$-open route $X_2 \ra A \oo \cdots \oo B \oa X_2$ implies that there is no $(Z^{1:2}_1 \cup W_{1:j-1})$-open path from $W_j$ to $X_2$ through $Pa(X_2) \cup Sp(X_2)$. Then,
\[
\sigma_{XY \cdot Z^1_1 Z^2_2} =  \sigma_{XY \cdot Z^1_1 Z^2} \frac{\sigma^2_{X_2 \cdot Z^1_1 Z^2_2}}{\sigma^2_{X_2 \cdot Z^1_1 Z^2}}
\]
by repeating the reasoning that led to Equation \ref{eq:Z1a} but using Lemmas \ref{lem:nonroot3} and \ref{lem:nonroot4} instead. Moreover, combining the last equation with Equations \ref{eq:Z11} and \ref{eq:Z2a} gives
\[
\sigma_{XY \cdot Z^1_1 Z^2_2} =  \sigma_{XY} \frac{\sigma^2_{X_1 \cdot Z^1_1}}{\sigma^2_{X_1}} \frac{\sigma^2_{X_2 \cdot Z^1_1 Z^2_2}}{\sigma^2_{X_2 \cdot Z^1_1 Z^2}}.
\]

Finally, continuing with the process above for $X_3, \ldots, X_{m+n}$ yields
\[
\sigma_{X Y \cdot Z^{1:m+n}_{1:m+n}} = \sigma_{X Y} \frac{\sigma^2_{X_1 \cdot Z_1^1}}{\sigma^2_{X_1}} \prod_{i=2}^{m+n} \frac{\sigma^2_{X_i \cdot Z_{1:i}^{1:i}}}{\sigma^2_{X_i \cdot Z_{1:i-1}^{1:i}}}
\]
which implies the desired result by repeated application of Lemma \ref{lem:aux3}.
\end{proof}

\begin{proof}[Proof of Theorem \ref{the:condpathnonroot}]
The proof is analogous to that of Theorem \ref{the:condpathroot}.
\end{proof}

\bibliographystyle{plainnat}
\bibliography{CondPathAnalysis}

\begin{thebibliography}{12}
\providecommand{\natexlab}[1]{#1}
\providecommand{\url}[1]{\texttt{#1}}
\expandafter\ifx\csname urlstyle\endcsname\relax
  \providecommand{\doi}[1]{doi: #1}\else
  \providecommand{\doi}{doi: \begingroup \urlstyle{rm}\Url}\fi

\bibitem[Anderson(2003)]{Anderson2003}
T.~W. Anderson.
\newblock \emph{{An Introduction to Multivariate Statistical Analysis}}.
\newblock Wiley, 2003.

\bibitem[Chaudhuri(2005)]{Chaudhuri2005}
S.~Chaudhuri.
\newblock \emph{{Using the Structure of d-Connecting Paths as a Qualitative
  Measure of the Strength of Dependence}}.
\newblock PhD thesis, University of Washington, 2005.

\bibitem[Chaudhuri(2014)]{Chaudhuri2014}
S.~Chaudhuri.
\newblock {Qualitative Inequalities for Squared Partial Correlations of a
  Gaussian Random Vector}.
\newblock \emph{Annals of the Institute of Statistical Mathematics},
  66:\penalty0 345--367, 2014.

\bibitem[Chaudhuri and Richardson(2003)]{ChaudhuriandRichardson2003}
S.~Chaudhuri and T.~Richardson.
\newblock {Using the Structure of d-Connecting Paths as a Qualitative Measure
  of the Strength of Dependence}.
\newblock In \emph{Proceedings of the 19th Conference on Uncertainty in
  Artificial Intelligence}, pages 116--123, 2003.

\bibitem[Chaudhuri and Tan(2010)]{ChaudhuriandTan2010}
S.~Chaudhuri and G.~L. Tan.
\newblock {On Qualitative Comparison of Partial Regression Coefficients for
  Gaussian Graphical Markov Models}.
\newblock In \emph{Algebraic Methods in Statistics and Probability II.
  Contemporary Mathematics, Vol. 516}, pages 125--133. American Mathematical
  Society, 2010.

\bibitem[Ong(2014)]{Ong2014}
V.~M.~H. Ong.
\newblock \emph{{Model Selection for Graphical Markov Models}}.
\newblock PhD thesis, National University of Singapore, 2014.

\bibitem[Pearl(2009)]{Pearl2009}
J.~Pearl.
\newblock \emph{Causality: Models, Reasoning, and Inference}.
\newblock Cambridge University Press, 2009.

\bibitem[Pearl(2013)]{Pearl2013}
J.~Pearl.
\newblock {Linear Models: A Useful ``Microscope'' for Causal Analysis}.
\newblock \emph{Journal of Causal Inference}, 1:\penalty0 155--170, 2013.

\bibitem[Pearl(2014)]{Pearl2014}
J.~Pearl.
\newblock {Comment: Understanding Simpson’s Paradox}.
\newblock \emph{The American Statistician}, 68:\penalty0 8--13, 2014.

\bibitem[Studen\'{y}(2005)]{Studeny2005}
M.~Studen\'{y}.
\newblock \emph{Probabilistic Conditional Independence Structures}.
\newblock Springer, 2005.

\bibitem[Sutton and Barto(2018)]{SuttonandBarto2018}
R.~S. Sutton and A.~G. Barto.
\newblock \emph{{Reinforcement Learning: An Introduction}}.
\newblock The MIT Press, 2018.

\bibitem[Wright(1921)]{Wright1921}
S.~Wright.
\newblock {Correlation and Causation}.
\newblock \emph{Journal of Agricultural Research}, 20:\penalty0 557--585, 1921.

\end{thebibliography}

\end{document}